\theoremstyle{definition}
\newtheorem{example}{Example}[]
\newtheorem{definition}{Definition}[]
\newtheorem{theorem}{Theorem}[]
\newtheorem{lemma}{Lemma}[]
\definecolor{darkred}{HTML}{a61c00}
\definecolor{mediumblue}{HTML}{1155cc}
\definecolor{cornflowerblue}{HTML}{4a86e8}
\begin{document}

% Title
% \title{Cost-Effective Multi-Output Quantum Reversible Circuits Based on Multi-Valued Input Reed-Muller Forms}

% \title{Minimization of a New Structure of Binary Quantum Circuits Using Multi-Valued Logic}

% \title{A Minimization Methodology Using Multi-Valued Logic for a New Structure for Quantum Circuits}

% \title{New Generalized ESOP with Decoders for Quantum Circuits}

\title{Minimization of AND-XOR Expressions with Decoders for Quantum Circuits}

% Authors
\author[1]{Sonia Yang\footnote{sonia.liu.yang@gmail.com}}
\author[1]{Ali Al-Bayaty\footnote{albayaty@pdx.edu}}
\author[1]{Marek Perkowski\footnote{h8mp@pdx.edu}}
\affil[1]{\textit{Department of Electrical and Computer Engineering, Portland State University, USA}}

\date{}

\maketitle

\begin{abstract}
    This paper introduces a new logic structure for reversible quantum circuit synthesis. Our synthesis method aims to minimize the quantum cost of reversible quantum circuits with decoders. In this method, multi-valued input, binary output (MVI) functions are utilized as a mathematical concept only, but the circuits are binary. We introduce the new concept of ``Multi-Valued Input Fixed Polarity Reed-Muller (MVI-RM)" forms. Our decoder-based circuit uses three logical levels in contrast to commonly-used methods based on Exclusive-or Sum of Products (ESOP) with two levels (AND-XOR expressions), realized by Toffoli gates. In general, the high number of input qubits in the resulting Toffoli gates is a problem that greatly impacts the quantum cost. Using decoders decreases the number of input qubits in these Toffoli gates. We present two practical algorithms for three-level circuit synthesis by finding the MVI-FPRM: products-matching and the newly developed butterfly diagrams. The best MVI-FPRM forms are factorized and reduced to approximate Multi-Valued Input Generalized Reed-Muller (MVI-GRM) forms.
\end{abstract}

\section{Introduction}
\label{introduction}
Researchers in the area of quantum circuits and quantum compilers \cite{qiskit, schmitt19} seek efficient methods to synthesize multi-output quantum reversible circuits \cite{mishchenko01, fazel07, schmitt19, meuli18, meuli19, rice09, rice11, nayeem11, maslov04, jiang22, saeedi13}. Several classical methods can be adapted for quantum circuits. Some methods \cite{drechsler95, sarabi92} are based on binary Fixed-Polarity Reed-Muller (FPRM) forms. For a function of $n$ variables, there are $2^n$ such canonical forms, and the synthesized quantum circuits corresponding to these forms can differ significantly in their quantum costs \cite{maslov03, al-bayaty23gala, al-bayaty24cala}. Such synthesis methods have also been extended to binary Generalized Reed-Muller (GRM) forms \cite{debnath96, dill97, dill01, helliwell88} to minimize the final quantum costs by minimizing the number of inputs to the Toffoli gates. Most of these methods attempt to minimize the number of many-input Toffoli gates. The idea of using decoders for classical Sum of Products (SOP) logic was invented by Tsutomu Sasao \cite{sasao84}, and it is applied for the first time here to binary quantum circuits based on Exclusive-or Sum of Products (ESOP), also known as AND-XOR expressions. The FPRM and GRM forms are special cases of ESOP expressions, but GRM forms tend to be less cost-expensive compared with non-GRM ESOP expressions in terms of the number of many-input Toffoli gates. Using decoders minimizes the number of inputs to the Toffoli gates. This paper introduces the concept of a logic with multi-valued inputs, binary outputs for the synthesis of entirely binary quantum reversible circuits \cite{perkowski89, schafer91}. 

This paper is organized as follows. Section 2 presents background information on Reed-Muller (RM) forms, expansions, reversible quantum gates, two definitions of quantum cost, and FPRM butterfly diagrams. Section 3 introduces multi-valued input, binary output (MVI) logic, and extends concepts from binary logic to MVI logic. Section 4 demonstrates how to realize a quantum circuit with decoders based on MVI-FPRM forms. Section 5 introduces a products-matching algorithm to find the MVI-FPRM of a function. Section 6 applies the products-matching algorithm and constructs two possible reversible quantum circuits with decoders for a two-bit adder. Section 7 introduces a butterfly diagram algorithm to find the MVI-FPRM of a function. Section 8 presents circuit synthesis based on MVI-GRM forms by factoring from MVI-FPRM forms. Lastly, Section 9 concludes the paper.

\section{Background}
\label{background}

This section provides background on binary Reed-Muller (RM) forms, Boolean expansions, quantum gates, quantum cost, and butterfly diagrams.

\subsection{Binary Reed-Muller Forms}
\label{rm forms}

There are many binary RM forms \cite{green91, drechsler95, sarabi92, debnath96, dill97, dill01, helliwell88}. In this paper, we focus on \textit{Fixed-Polarity Reed-Muller (FPRM) forms} \cite{drechsler95, sarabi92} and \textit{Generalized Reed-Muller (GRM) forms} \cite{debnath96, dill97, dill01, helliwell88}. This section gives background on the RM forms.

A literal in binary logic is either of positive ($x$) or negative ($\bar{x}$) polarity. The \textit{Positive Polarity Reed-Muller (PPRM) form} is a special case of the FPRM form, where all literals are positive polarity, e.g., $1 \oplus x_1 \oplus x_2 \oplus x_1x_2$ is a PPRM form, but $1 \oplus \bar{x}_1 \oplus x_2 \oplus x_1x_2$ is not because $\bar{x}_1$ is negative polarity. It is formally defined in Definition~\ref{def:pprm}.

\begin{definition}
\label{def:pprm}
    The binary Positive Polarity Reed-Muller (PPRM) form of a single-output function $f(x_1, x_2, \dots, x_n)$ is the function in the form of Eq.~\eqref{eq:pprm}, where all literals are of positive polarity and $a_i \in \{0,1\}$.
    \begin{equation}
    \label{eq:pprm}
        f(x_1,x_2,\dots,x_n) 
        = a_0 \oplus a_1(x_1) \oplus a_2(x_2) \oplus \dots \oplus a_{2^n-1}(x_1x_2\dots x_n)
    \end{equation}
    $\square$
\end{definition}

The PPRM form can be generalized to the FPRM form, where each variable is assigned a polarity so that all literals of a certain variable are of the assigned polarity. For example, a function with polarity 101 ($x_1$ is positive polarity, $x_2$ is negative polarity, and $x_3$ is positive polarity) is $\bar{x}_2 \oplus x_1\bar{x}_2 \oplus x_1x_3$. The FPRM form is defined in Definition~\ref{def:fprm}.

\begin{definition}
\label{def:fprm}
    The binary Fixed Polarity Reed-Muller (FPRM) form of a single-output function $f(x_1,x_2,\dots,x_n)$ with polarity $p_1$,$p_2$,\dots,$p_n$, where $p_i \in \{0,1\}$, is the function in the form of Eq.~\eqref{eq:fprm}, where $a_i \in \{0,1\}$.
    \begin{align}
    \label{eq:fprm}
        \hat{x}_i &= 
        \begin{cases}
            \bar{x}_i & \text{if } p_i=0 \\
            x_i & \text{if } p_i=1
        \end{cases} \nonumber \\
        f(x_1,x_2,\dots,x_n) 
        &= a_0 \oplus a_1(\hat{x}_1) \oplus a_2(\hat{x}_2) \oplus \dots \oplus a_{2^n-1}(\hat{x}_1\hat{x}_2\dots \hat{x}_n)
    \end{align}
    $\square$
\end{definition}

GRM forms are not based on polarity and are instead based on the set of variables in each term. A GRM form is one where no pair of terms has exactly the same subset of variables. An example of a GRM form would be $x_1x_2x_3 \oplus \bar{x}_1x_2 \oplus \bar{x}_2x_3$, but the expression $x_2x_3 \oplus \bar{x}_1x_2 \oplus \bar{x}_2x_3$ would not be a GRM because the terms $x_2x_3$ and $\bar{x}_2x_3$ both have the subset of variables $\{x_2,x_3\}$. A GRM form is formally defined in Definition~\ref{def:grm}.

\begin{definition}
\label{def:grm}
    A binary Generalized Reed-Muller (GRM) form of a single-output function $f(x_1,x_2,\dots,x_n)$ is the function in the form of Eq.~\eqref{eq:grm}, where $\hat{x}_i$ can be either positive polarity ($x_i$) or negative polarity ($\bar{x}_i$), and $a_i \in \{0,1\}$.
    \begin{equation}
    \label{eq:grm}
        f(x_1,x_2,\dots,x_n) 
        = a_0 \oplus a_1(\hat{x}_1) \oplus a_2(\hat{x}_2) \oplus \dots \oplus a_{2^n-1}(\hat{x}_1\hat{x}_2\dots \hat{x}_n)
    \end{equation}
    $\square$
\end{definition}

A function of $n$ variables has one PPRM form, $2^n$ possible FPRM forms, and $2^{n2^{n-1}}$ possible GRM forms. In this paper, we generalize these forms to multi-valued input logic in Section~\ref{mvi-rm} and utilize them for quantum circuit synthesis.

\subsection{Boolean Expansions}
\label{expansions}

This section presents the \textit{Shannon expansion} \cite{shannon}, \textit{positive Davio expansion} \cite{davio}, and \textit{negative Davio expansion} \cite{davio}.

The XOR version of the Shannon expansion is expressed in Eq.~\eqref{eq:shannon}, where the cofactor $f_{\bar{a}}$ is the function $f$ with $a=0$, and the cofactor $f_a$ is the function $f$ with $a=1$.
\begin{equation}
\label{eq:shannon}
    f = \bar{a}f_{\bar{a}} \oplus af_a
\end{equation}

The positive and negative Davio expansions can be derived from the Shannon expansion. The positive Davio expansion, stated in Eq.~\eqref{eq:positive-davio}, can be found by substituting $\bar{a}=1 \oplus a$.
\begin{align}
    \label{eq:positive-davio}
    f 
    &= \bar{a}f_{\bar{a}} \oplus af_a \nonumber \\
    &= (1 \oplus a)f_{\bar{a}} \oplus af_a \nonumber \\
    &= f_{\bar{a}} \oplus af_{\bar{a}} \oplus af_a \nonumber \\
    &= f_{\bar{a}} \oplus a(f_{\bar{a}} \oplus f_a) 
\end{align}
Similarly, the negative Davio expansion, stated in Eq.~\eqref{eq:negative-davio}, can be found by substituting $a = 1 \oplus \bar{a}$.
\begin{align}
\label{eq:negative-davio}
    f 
    &= \bar{a}f_{\bar{a}} \oplus af_a \nonumber \\
    &= \bar{a}f_{\bar{a}} \oplus (1 \oplus \bar{a})f_a \nonumber \\
    &= \bar{a}f_{\bar{a}} \oplus f_a \oplus \bar{a}f_a \nonumber \\
    &= f_a \oplus \bar{a}(f_{\bar{a}} \oplus f_a)
\end{align}

These expansions are generalized to ternary input logic in Section~\ref{ternary-expansions}.

\subsection{Reversible Quantum Gates}
\label{q-gates}

This section presents the gates that we use to synthesize circuits from Exclusive-or Sum of Products (ESOP) expressions or RM forms: the quantum \textit{NOT gate}, \textit{Controlled-NOT (CNOT) gate}, and \textit{$n$-bit Toffoli gate}, which are shown in Fig.~\ref{fig:q-not-gate}, Fig.~\ref{fig:q-cnot-gate}, and Fig.~\ref{fig:q-gen-toffoli-gate}, respectively. The NOT gate acts as the NOT operation, the CNOT gate acts as the XOR operation, and the $n$-bit Toffoli gate acts as both an AND operation and an XOR operation. These gates will appear in circuits throughout this paper. Minimizing the number of inputs to the Toffoli gates in a circuit is a current problem that many researchers face, which heavily impacts the quantum cost, as discussed in the next subsection.

\begin{figure}[!htb]
    \centering
    \begin{subfigure}{0.17\linewidth}
        \centering
        \begin{quantikz}
            \lstick{$x$} &\targ{} & \rstick{$\bar{x}$}
        \end{quantikz}
        \caption{NOT gate.}
        \label{fig:q-not-gate}
    \end{subfigure}
    \begin{subfigure}{0.21\linewidth}
        \centering
        \begin{quantikz}
            \lstick{$x_1$} & \ctrl{1} & \rstick{$x_1$} \\
            \lstick{$x_2$} & \targ{} & \rstick{$x_1 \oplus x_2$}
        \end{quantikz}
        \caption{CNOT gate.}
        \label{fig:q-cnot-gate}
    \end{subfigure}
    \begin{subfigure}{0.26\linewidth}
        \centering
        \begin{quantikz}
            \lstick{$x_1$} & \ctrl{1} & \rstick{$x_1$} \\
            \lstick{$x_2$} & \ctrl{1} & \rstick{$x_2$} \\
            \lstick{$x_3$} & \targ{} & \rstick{$x_1x_2 \oplus x_3$}
        \end{quantikz}
        \caption{3-bit Toffoli gate.}
        \label{fig:q-toffoli-gate}
    \end{subfigure}
    \begin{subfigure}{0.34\linewidth}
        \centering
        \begin{quantikz}[wire types={q,q,n,q,q}]
            \lstick{$x_1$} & \ctrl{1} & \rstick{$x_1$} \\
            \lstick{$x_2$} & \ctrl{1} & \rstick{$x_2$} \\
            &\vdots& \\
            \lstick{$x_{n-1}$} & \ctrl{1} & \rstick{$x_{n-1}$} \\
            \lstick{$x_n$} & \targ{} & \rstick{$x_1x_2\dots x_{n-1} \oplus x_n$}
        \end{quantikz}
        \caption{$n$-bit Toffoli gate.}
        \label{fig:q-gen-toffoli-gate}
    \end{subfigure}
    \caption{Reversible quantum gates.}
    \label{fig:q-gates}
\end{figure}
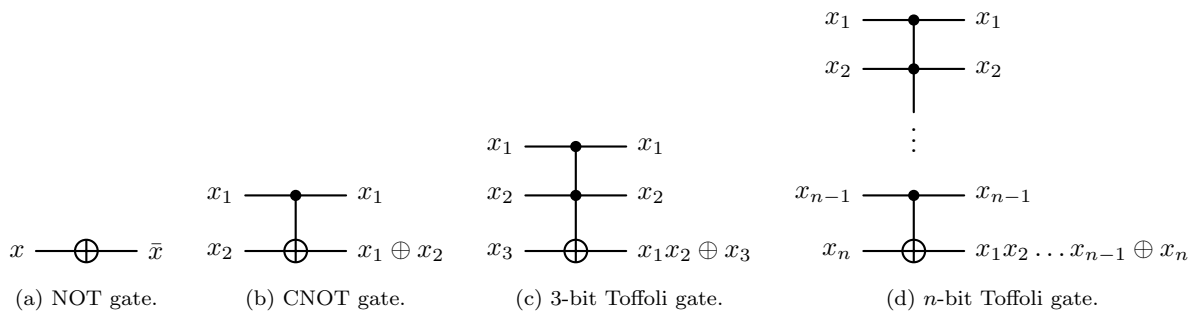

\subsection{Quantum Costs}
\label{cost}
The cost of a circuit can be calculated in many ways, two of which are the Maslov cost \cite{maslov03} and the Transpilation Quantum Cost (TQC) \cite{al-bayaty23gala, al-bayaty24cala}.

The Maslov cost is a metric that sums up the individual cost of all the different Toffoli gates depending on the number of controls, with zero controls being equivalent to a NOT gate, one control being equivalent to a CNOT gate, and two controls being equivalent to a standard 3-bit Toffoli gate, with the Maslov cost for these gates listed in Table~\ref{tab:costs}.

The Transpilation Quantum Cost (TQC) is a metric that calculates the quantum cost of a circuit after transpilation. This cost was originally designed for IBM QPUs, but it can be used for any hardware. The TQC of a circuit is determined by Eq.~\eqref{eq:al-bayaty-cost}, where $N_1$ is the number of native single-qubit gates, $N_2$ is the number of native double-qubit gates, $XC$ is the number of SWAP gates, and $D$ is the depth.
\begin{equation}
    \label{eq:al-bayaty-cost}
    \text{TQC} = N_1 + N_2 + XC + D
\end{equation}

The general quantum costs for different gates are stated in Table~\ref{tab:costs}. The Maslov cost directly calculates the final cost of a function's cost-expensive circuit realization, so it is a technology-independent approach. However, the TQC calculates the cost of the final transpiled circuit and is a technology-dependent approach.  The TQC in this paper is calculated based on the IBM Torino quantum computer \cite{ibm}.

Note that the cost of Toffoli gates with many inputs increases almost exponentially with each input qubit added.

\begin{table}[!htb]
    \centering
    \begin{tabular}{l|c|c}
         Gate & Maslov Cost & TQC \\
         \hline
         NOT & 1 & 1 \\
         CNOT & 1 & 14 \\
         3-Bit Toffoli & 5 & 54 \\
         4-Bit Toffoli & 13 & 109 \\
         5-Bit Toffoli & 29 & 219
    \end{tabular}
    \caption{Quantum cost of different reversible gates.}
    \label{tab:costs}
\end{table}

\subsection{ESOP versus GRM and Factorization}

This subsection presents an example that illustrates the difference between realizing a circuit from a non-GRM ESOP expression, a GRM, and a factorized GRM.

\begin{example}
\label{ex:esop-vs-grm}
   Let $f=x_1x_2x_3 \oplus \bar{x}_1\bar{x}_2\bar{x}_3$, where $x_1$, $x_2$, and $x_3$ are binary variables.

   If the reversible quantum circuit was realized directly from the ESOP expression, then that would lead to the circuit in Fig.~\ref{fig:ex-evgf-esop}, which contains 3 NOT gates and 2 4-bit Toffoli gates. Thus, the circuit has a Maslov cost of 29 and a TQC of 221.

   However, $f$ is equal to the GRM form $x_1x_2 \oplus \bar{x}_1\bar{x}_3 \oplus x_2\bar{x}_3$, as shown on the Karnaugh map in Fig.~\ref{fig:ex-evgf-grm-k-map}. The circuit realization of this GRM is illustrated in Fig.~\ref{fig:ex-evgf-grm}. This circuit consists of 2 NOT gates and 3 3-bit Toffoli gates, and thus has a Maslov cost of 17 and a TQC of 164, which is less costly compared to the non-GRM ESOP. Even though the GRM form has more terms, the terms each have fewer literals, decreasing the number of inputs to the Toffoli gates.

   The cost of the circuit can be further minimized by factoring the GRM form for $f$ to $x_1x_2 \oplus (\bar{x}_1 \oplus x_2)\bar{x}_3$. The circuit realization, shown in Fig.~\ref{fig:ex-evgf-factorized-grm}, consists of 2 NOT gates, 1 CNOT gate, and 2 3-bit Toffoli gates, and thus has a Maslov cost of 13 and a TQC of 124, making it less costly than both the non-GRM ESOP and the GRM.

   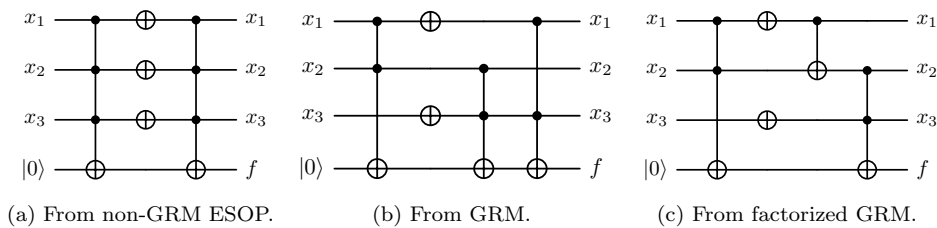
\begin{figure}[!htb]
        \centering
        \begin{subfigure}{0.22\linewidth}
            \centering
            \resizebox{\linewidth}{!}{
            \begin{quantikz}
                \lstick{$x_1$} & \ctrl{1} & \targ{} & \ctrl{1} & \rstick{$x_1$} \\
                \lstick{$x_2$} & \ctrl{1} & \targ{} & \ctrl{1} & \rstick{$x_2$} \\
                \lstick{$x_3$} & \ctrl{1} & \targ{} & \ctrl{1} & \rstick{$x_3$} \\
                \lstick{\ket{0}} & \targ{} & & \targ{} & \rstick{$f$}
            \end{quantikz}}
            \caption{From non-GRM ESOP.}
            \label{fig:ex-evgf-esop}
        \end{subfigure}
        \begin{subfigure}{0.277\linewidth}
            \centering
            \resizebox{\linewidth}{!}{
            \begin{quantikz}
                \lstick{$x_1$} & \ctrl{1} & \targ{} & &  \ctrl{2} & \rstick{$x_1$} \\
                \lstick{$x_2$} & \ctrl{2} & & \ctrl{1} & & \rstick{$x_2$} \\
                \lstick{$x_3$} & & \targ{} & \ctrl{1} & \ctrl{1} & \rstick{$x_3$} \\
                \lstick{\ket{0}} & \targ{} & & \targ{} & \targ{} & \rstick{$f$}
            \end{quantikz}}
            \caption{From GRM.}
            \label{fig:ex-evgf-grm}
        \end{subfigure}
        \begin{subfigure}{0.26\linewidth}
            \centering
            \resizebox{\linewidth}{!}{
            \begin{quantikz}
                \lstick{$x_1$} & \ctrl{1} & \targ{} & \ctrl{1} & & \rstick{$x_1$} \\
                \lstick{$x_2$} & \ctrl{2} & & \targ{} & \ctrl{1} & \rstick{$x_2$} \\
                \lstick{$x_3$} & & \targ{} & & \ctrl{1} & \rstick{$x_3$} \\
                \lstick{\ket{0}} & \targ{} & & & \targ{} & \rstick{$f$}
            \end{quantikz}}
            \caption{From factorized GRM.}
            \label{fig:ex-evgf-factorized-grm}
        \end{subfigure}
        \caption{Reversible quantum circuit realizations of $f$ from Example~\ref{ex:esop-vs-grm}.}
        \label{fig:ex-evgf}
    \end{figure}
    \begin{figure}[!htb]
        \centering
        \includegraphics[width=0.175\linewidth]{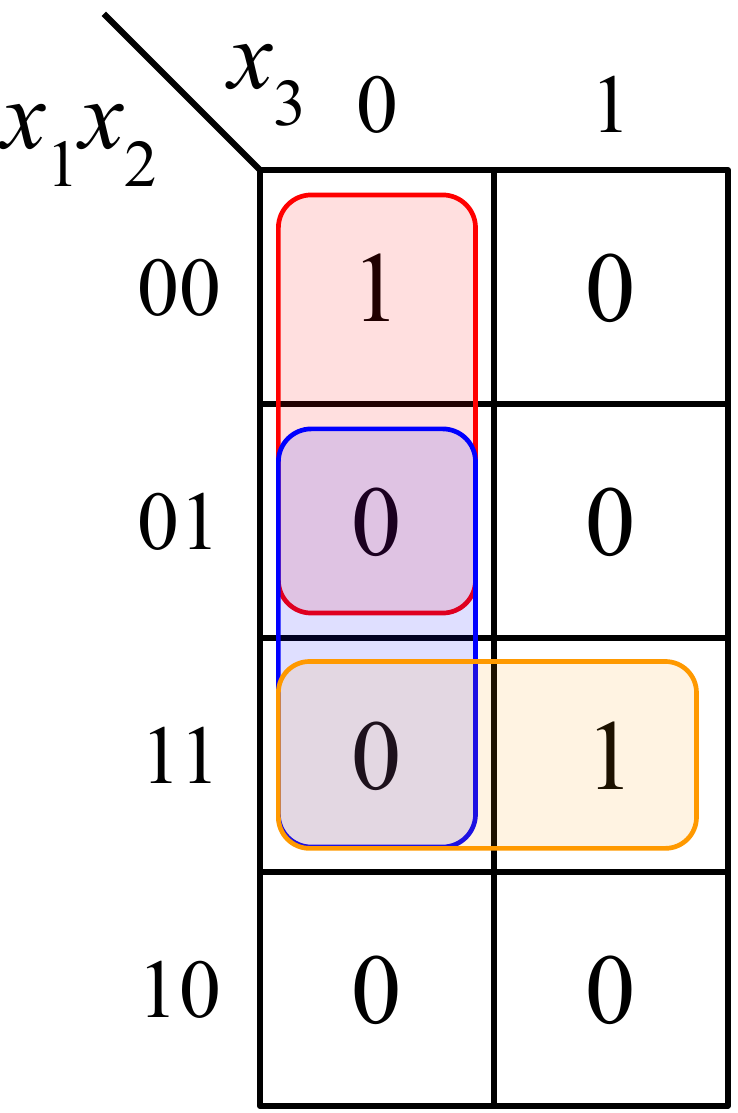}
        \caption{Karnaugh map of $f$ showing the groups for the GRM form: $\textcolor{orange}{x_1x_2} \oplus \textcolor{red}{\bar{x}_1\bar{x}_3} \oplus \textcolor{blue}{x_2\bar{x}_3}$.}
        \label{fig:ex-evgf-grm-k-map}
    \end{figure}

   Aiming for GRM forms and factorizing expressions to reduce the number of literals in a product and thus the number of inputs to the Toffoli gates is a core part of our methodology to minimize circuits.

\end{example}

\subsection{Butterfly Diagrams}
\label{butterfly}

There are many fundamentally different concepts of a ``butterfly diagram" in physics, astronomy \cite{rudiger95}, computer engineering, systems theory, and other fields. The concept of \textit{butterfly diagram} used in engineering is different than those in other areas, and it is useful for mathematical structure, visualization of algorithms, and efficient realization in parallel hardware. Well-known butterfly diagrams are created for the Fourier \cite{weinstein69, shanks69, vahid20}, Haar \cite{falkowski06}, Hadamard \cite{falkowski06}, and other spectral transforms. Less popular are diagrams created for Fixed-Polarity Reed-Muller transforms \cite{li06, jin20}. In this paper, we create new classes of butterfly diagrams that are useful to synthesize ternary input quantum circuits and binary quantum circuits with decoders. 

\begin{figure}[!htb]
    \centering
    \begin{subfigure}{0.25\linewidth}
        \centering
        \includegraphics[width=0.95\linewidth]{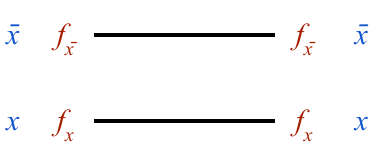}
        \caption{Shannon: $f = \textcolor{mediumblue}{\bar{a}}\textcolor{darkred}{f_{\bar{a}}} \oplus \textcolor{mediumblue}{a}\textcolor{darkred}{f_a}$.}
        \label{fig:butterfly-shannon}
    \end{subfigure}
    \hspace{0.01\linewidth}
    \begin{subfigure}{0.35\linewidth}
        \centering
        \includegraphics[width=0.8\linewidth]{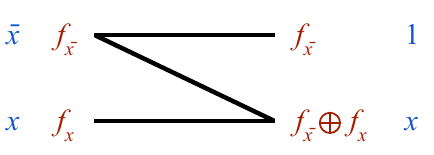}
        \caption{Positive Davio: $f = \textcolor{darkred}{f_{\bar{a}}} \oplus \textcolor{mediumblue}{a}(\textcolor{darkred}{f_{\bar{a}} \oplus f_a})$.}
        \label{fig:butterfly-positive-davio}
    \end{subfigure}
    \hspace{0.01\linewidth}
    \begin{subfigure}{0.35\linewidth}
        \centering
        \includegraphics[width=0.8\linewidth]{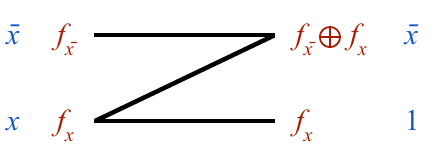}
        \caption{Negative Davio: $f = \textcolor{darkred}{f_a} \oplus \textcolor{mediumblue}{\bar{a}}(\textcolor{darkred}{f_{\bar{a}} \oplus f_a})$.}
        \label{fig:butterfly-negative-davio}
    \end{subfigure}
    \caption{The butterfly diagram kernels for the binary Shannon, positive Davio, and negative Davio expansions.}
    \label{fig:butterfly-expansions}
\end{figure}

Examples of butterfly diagram kernels are shown in Fig.~\ref{fig:butterfly-shannon}, Fig.~\ref{fig:butterfly-positive-davio}, and Fig.~\ref{fig:butterfly-negative-davio}, which correspond to the Shannon, positive Davio, and negative Davio expansions, respectively. In Fig.~\ref{fig:butterfly-expansions}, the \textcolor{darkred}{red} variables represent the inputs, while the \textcolor{mediumblue}{blue} variables represent the values the inputs are multiplied by. The middle part is the butterfly diagram kernel, where the points where the lines meet represent an XOR operation, and both sides correspond to the same function. 

For these butterfly diagrams, the left side represents the Shannon expansion, $f = \textcolor{mediumblue}{\bar{a}}\textcolor{darkred}{f_{\bar{a}}} \oplus \textcolor{mediumblue}{a}\textcolor{darkred}{f_a}$, where the inputs are the cofactors. The right side represents their respective expansions, which are stated in the respective captions.

The positive Davio expansion also represents converting the variable $a$ to positive polarity, and the negative Davio expansion represents converting the variable $a$ to negative polarity. Applying this method for each variable in a function can be used to transform a function from its \textit{minterms} to an FPRM, where the minterms are the terms that contain literals for all variables. 

An example of a butterfly diagram for a two-variable function which converts minterms to an FPRM of polarity 01 ($x_1$ is negative polarity and $x_2$ is positive polarity) is shown in Fig.~\ref{fig:fprm-butterfly-example}. The first layer (column) of the butterfly diagram converts all instances of the variable $x_2$ to positive polarity with the positive Davio butterfly diagram kernels, and the second layer converts all instances of the variable $x_1$ to negative polarity with the negative Davio butterfly diagram kernels. The inputs and outputs for the function $\textcolor{mediumblue}{\bar{x}_1\bar{x}_2 \oplus x_1\bar{x}_2 \oplus x_1x_2}$, which has the FPRM form $\textcolor{mediumblue}{\bar{x}_1x_2 \oplus 1}$ are listed in the figure. A value of \textcolor{darkred}{1} means that the corresponding term is in the function, while a value of \textcolor{darkred}{0} means that it is not. The concept of butterfly diagrams for FPRM forms is expanded to multi-valued input logic in Section~\ref{fprm-butterfly}.

\begin{figure}[!htb]
    \centering
    \includegraphics[width=0.4\linewidth]{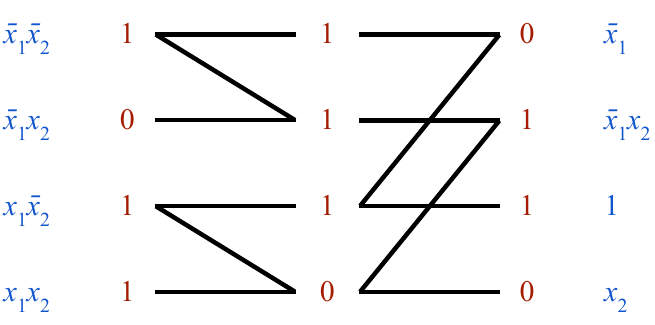}
    \caption{The butterfly diagram that transforms a function of $x_1$ and $x_2$ from the XOR of minterms: $\textcolor{mediumblue}{\bar{x}_1\bar{x}_2 \oplus x_1\bar{x}_2 \oplus x_1x_2}$, to the FPRM form with polarity 01: $\textcolor{mediumblue}{\bar{x}_1x_2 \oplus 1}$.}
    \label{fig:fprm-butterfly-example}
\end{figure}

\section{Multi-Valued Input, Binary Output Functions}
\label{mvi}
In literature~\cite{ilyas22}, the concept of multi-valued input, multi-valued output functions related to the classical Reed-Muller (RM) forms exists. However, this paper introduces a new structural design for binary quantum circuits based on canonical forms, namely MVI-FPRM and MVI-GRM, for \textit{multi-valued input}, \textit{binary output} (MVI) functions. In general \cite{mishchenko01, fazel07, schmitt19, meuli18, meuli19, rice09, rice11, nayeem11}, quantum circuits are synthesized from ESOP \cite{sasao93, csanky93, kazimirov21, papakonstantinou17, riener19, green91, song97} expressions (also known as AND-XOR expressions, Boolean polynomials, or Zhegalkin polynomials) that are formally built with two levels, or similar circuits. In contrast, the circuit structure from this paper has three levels: a decoder level, an AND level, and an XOR level. Because MVI functions use multi-valued variables as a representation of multiple binary variables, MVI functions are realized as binary circuits rather than multi-valued circuits. Our goal is to avoid large, expensive $n$-bit Toffoli gates ($n>3$). Circuits from these MVI functions can then be used to construct blocks such as logical comparators and arithmetic circuits, in the oracles of different quantum search algorithms, such as Grover's Algorithm~\cite{grover96}, BHT-QAOA~\cite{al-bayaty24bht-qaoa}, and the Quantum Walk Algorithm~\cite{alasow24}.

In addition, our defined forms and related optimization methods can be used in the future, when ternary input, binary output technology~\cite{ilyas22} becomes available for quantum reversible circuits.

In this paper, binary variables will be denoted by lowercase letters ($x$) and multi-valued variables will be denoted by uppercase letters ($X$).

\begin{definition}
\label{def:mvi function}
    A multi-valued input, completely specified binary output (MVI) function is a mapping $F(X_1, X_2, \dots, X_n): V_1 \times V_2 \times \dots \times V_n \rightarrow \{0,1\}$, where $X_i$ is a multi-valued variable that takes in a value from the set $V_i = \{0, 1, \dots. v_i-1\}$, where $v_i$ is the radix. $\square$
\end{definition}

It is important to note that MVI functions can take in variables of any \textit{radix}, meaning that the variables could be ternary (3-valued), quaternary (4-valued), quinary (5-valued), or another radix. The variables for an MVI function are also not required to be of the same radix. For example, a function could take in the quaternary variable $X_1$ and the ternary variable $X_2$.

In binary logic, a literal is equal to 1 if $x=1$, and it is positive ($x$) polarity, or if $x=0$, and it is negative ($\bar{x}$) polarity. Extending this to MVI logic leads to the definition of a \textit{multi-valued input (MVI) literal} in Definition~\ref{def:mvi-literal}. 

\begin{definition}
\label{def:mvi-literal}
    A multi-valued input, binary output (MVI) literal of the multi-valued variable $X$ for a given set of truth values $S \subseteq V = \{0, 1, \dots. v-1\}$, denoted by $X^{S}$, is defined as
    $$
    X^{S} = 
    \begin{cases}
    1 & \text{if } X \in S\\
    0  & \text{if } X \notin S.\\
    \end{cases}
    $$
    $\square$
\end{definition}

For instance, the literal $X^{1,2,3}$ is equal to 1 if the variable $X$ is equal to 1, 2, or 3. Otherwise, the literal is equal to 0. The variable $X$, which is inputted, is multi-valued. However, the literal $X^{1,2,3}$, which is outputted, is binary. A product of literals ($X_1^{S_1}X_2^{S_2} \dots X_n^{S_n}$) is a \textit{product term}, which is also called a \textit{product} or \textit{term} for short, and is also a binary value. An MVI literal with a single value (only one element in $S$, e.g., $X^0$ and $X^2$) will be called a \textit{single-valued literal}. One method of realizing an MVI function as a binary quantum circuit is demonstrated in Example~\ref{ex:mvi function}. 

\begin{example} 
    \label{ex:mvi function}
    The goal is to create a circuit for the function $f = X_1^{0,2,3}X_2^{0,1}$, where $X_1$ and $X_2$ are quaternary (4-valued) variables.
    
    \begin{table}[!htb]
        \centering
        \begin{subtable}[!h]{0.3\linewidth}
            \centering
            \resizebox{0.5\linewidth}{!}{
            \begin{tabular}{cc|c}
                $x_{1a}$ & $x_{1b}$ & $X_1$\\
                \hline
                0 & 0 & 0 \\
                0 & 1 & 1 \\
                1 & 0 & 2 \\
                1 & 1 & 3 \\
            \end{tabular}}
            \caption{$X_1$ in terms of the binary variables $x_{1a}$ and $x_{1b}$.}
        \end{subtable}
        \begin{subtable}[!h]{0.3\linewidth}
            \centering
            \resizebox{0.5\linewidth}{!}{
            \begin{tabular}{cc|c}
                $x_{2a}$ & $x_{2b}$ & $X_2$\\
                \hline
                0 & 0 & 0 \\
                0 & 1 & 1 \\
                1 & 0 & 2 \\
                1 & 1 & 3 \\
            \end{tabular}}
            \caption{$X_2$ in terms of the binary variables $x_{2a}$ and $x_{2b}$.}
        \end{subtable}
        \caption{$X_1$ and $X_2$ encoded by binary variables.}
        \label{tab:ex1-x-encoding}
    \end{table}

    To create a binary circuit for this function, we can represent the quaternary variables $X_1$ and $X_2$ with the binary variables $x_{1a}$ and $x_{1b}$ for $X_1$, and $x_{2a}$ and $x_{2b}$ for $X_2$. As shown in Table~\ref{tab:ex1-x-encoding}, the value for 0 can be encoded as 00, 1 as 01, 2 as 10, and 3 as 11. Converting the tables into Karnaugh maps for $X_1$ and $X_2$ in terms of $x_{1a}$, $x_{1b}$, $x_{2a}$, and $x_{2b}$ gives the maps shown in Fig.~\ref{fig:ex1-x-map}. Karnaugh maps for the literals in the function $f$, $X_1^{0,2,3}$ and $X_2^{0,1}$, are also shown in Fig.~\ref{fig:ex1-x-map}.
    
    \begin{figure}[!htb]
        \centering
        \begin{subfigure}[!h]{0.15\linewidth}
            \centering
            \includegraphics[width=\linewidth]{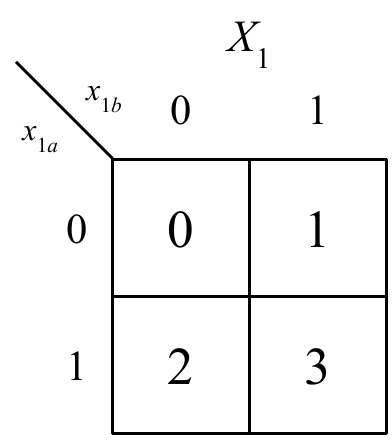}
            \caption{$X_1$.}
        \end{subfigure}
        \begin{subfigure}[!h]{0.15\linewidth}
            \centering
            \includegraphics[width=\linewidth]{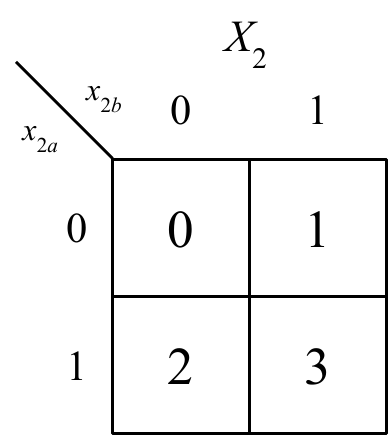}
            \caption{$X_2$.}
        \end{subfigure}
        \centering
        \begin{subfigure}[!h]{0.15\linewidth}
            \centering
            \includegraphics[width=\linewidth]{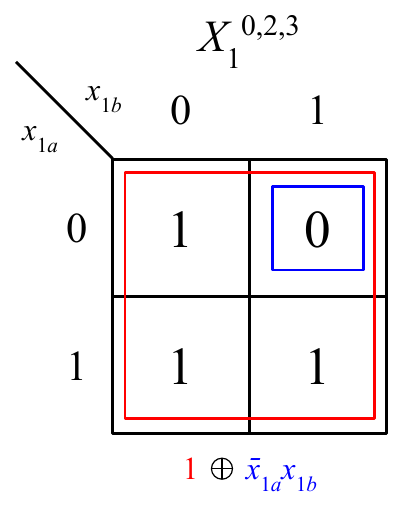}
            \caption{$X_1^{0,2,3}$.}
        \end{subfigure}
        \begin{subfigure}[!h]{0.15\linewidth}
            \centering
            \includegraphics[width=\linewidth]{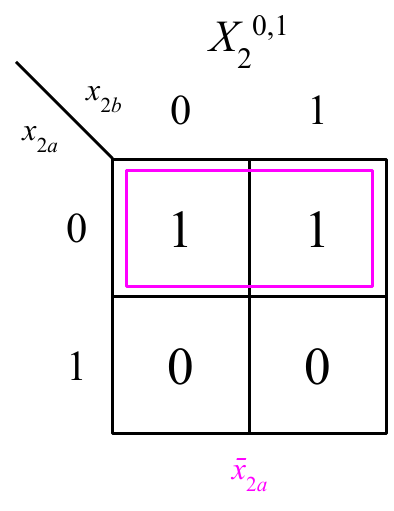}
            \caption{$X_2^{0,1}$.}
        \end{subfigure}
        \caption{Karnaugh maps for $X_1$, $X_2$ and the literals $X_1^{0,2,3}$, $X_2^{0,1}$ encoded by binary variables for Example~\ref{ex:mvi function}.}
        \label{fig:ex1-x-map}
    \end{figure}

    \begin{figure}[!htb]
    \centering
        \begin{subfigure}{0.5\linewidth}
            \centering
            \includegraphics[width=.8\linewidth]{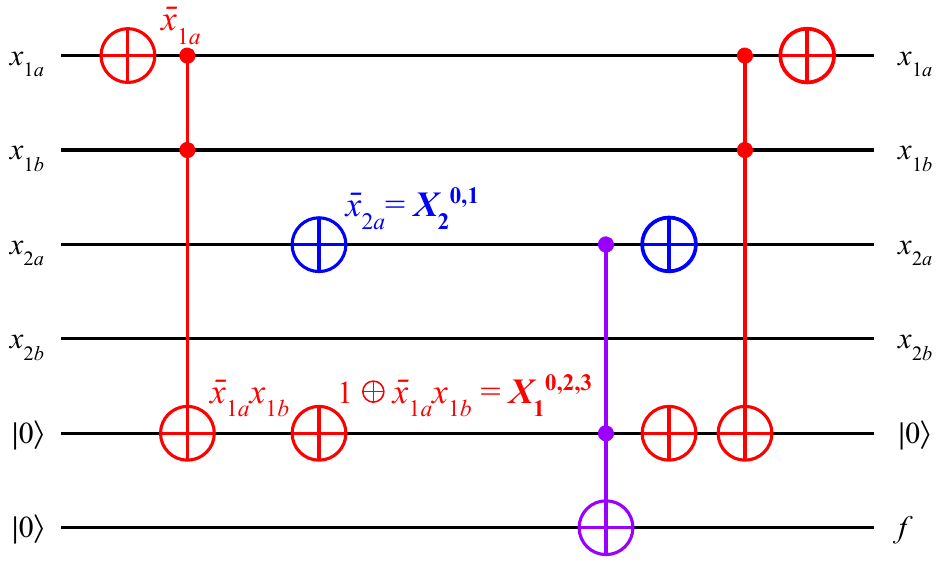}
            \caption{Binary quantum circuit realization of the function $f$ with mirror gates.}
            \label{fig:ex1-circuit}
        \end{subfigure}
        \begin{subfigure}{0.26\linewidth}
            \centering
            \includegraphics[width=\linewidth]{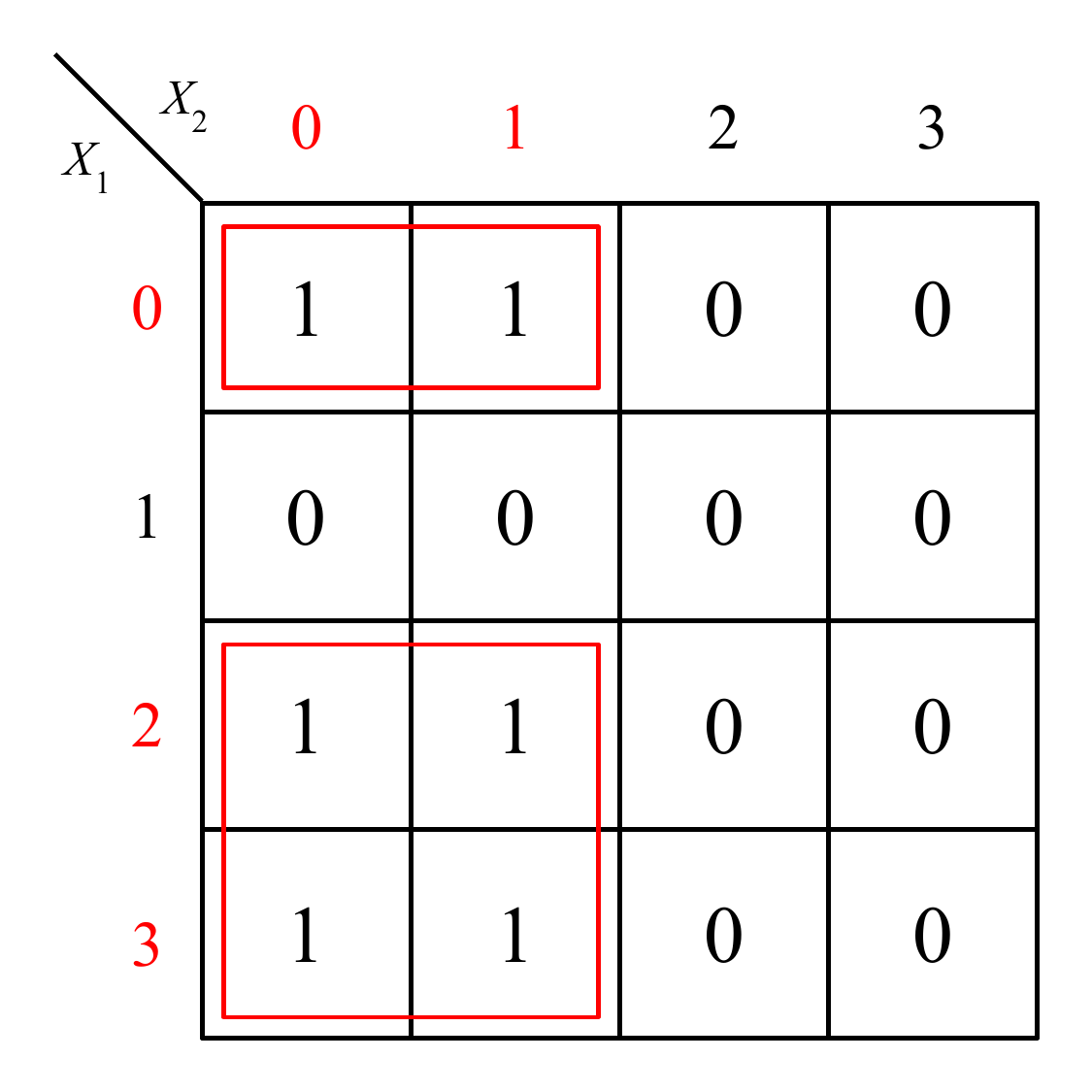}
            \caption{Marquand chart of $f$.}
            \label{fig:ex1-f-chart}
            \end{subfigure}
        \caption{The circuit and Marquand chart of $f=X_1^{0,2,3}X_2^{0,1}$ for Example~\ref{ex:mvi function}.}
    \end{figure}
    
    From the groupings shown in Fig.~\ref{fig:ex1-x-map} for $X_1^{0,2,3}$, it can be found that $X_1^{0,2,3}=1 \oplus \bar{x}_{1a}x_{1b}$ and $X_2^{0,1}=\bar{x}_{2a}$. $f$ can be realized as the circuit shown in Fig.~\ref{fig:ex1-circuit}. This function is also represented in the Marquand chart shown in Fig.~\ref{fig:ex1-f-chart}. The rows represent the possible values of $X_1$ and the columns represent the possible values of $X_2$. The numbers in the cells represent the value of $f$ for the given values of $X_1$ and $X_2$. The values of $X_1$ and $X_2$ that lead to a value of 1 for their respective literals $X_1^{0,2,3}$ and $X_2^{0,1}$ are in \textcolor{red}{red}. Note that the function $f$ is only equal to 1 if both literals are 1, and that the MVI expression naturally factorizes the function. Also note that the values listed for the rows and columns of the Marquand chart are in natural order, and that, in contrast to Karnaugh maps, the groupings on the Marquand chart are not required to be adjacent.

\end{example}

\subsection{Operating with Multi-Valued Input Literals}
\label{mvi operations}

Many laws of Boolean algebra \cite{wakerly99, song96} can be extended to MVI literals. This section covers the basic operations for algebraically manipulating equations dealing with these literals.

Notably, $X^S$ is always 1 when $S=V=\{0,1,\dots,v-1\}$, so $X^V=1$, and $X^S$ is always 0 when $S=\varnothing$, where $\varnothing$ is the empty set, and $X$ is a $v$-valued variable ($v\geq2$). The AND, OR, and XOR operations ($X_1^{S_1}X_2^{S_2}$,  $X_1^{S_1}+X_2^{S_2}$, and  $X_1^{S_1} \oplus X_2^{S_2}$) are identical to Boolean algebra, since the literals are binary. However, when applying the AND, OR, and XOR operations on literals of the same variable, e.g., $X^{1,2} \oplus X^{0,1}$, there are rules to simplify such expressions, as stated in Eq.~\eqref{eq:mvi-and-op}, Eq.~\eqref{eq:mvi-or-op}, and Eq.~\eqref{eq:mvi-xor-op} respectively, where $\cup$ is set union, $\cap$ is set intersection, and $\Delta$ is set symmetric difference. The symmetric difference $A \Delta B$ of two sets $A$ and $B$ is equal to the set of all elements in either $A$ or $B$ but not both, as the set counterpart to the XOR operation.
\begin{align}
    \label{eq:mvi-and-op}
    X^{S_1}X^{S_2} &= X^{S_1 \cap S_2} \\
    \label{eq:mvi-or-op}
    X^{S_1} + X^{S_2} &= X^{S_1 \cup S_2} \\
    \label{eq:mvi-xor-op}
    X^{S_1} \oplus X^{S_2} &= X^{S_1 \Delta S_2} \\
    \label{eq:mvi-not-op}
    \overline{X^{S}} = 1 \oplus X^S &= X^{V \Delta S}
\end{align}

For example, $X^{0,1,3,4}X^{1,2,4} = X^{1,4}$, $X^{0,1,3,4} + X^{1,2,4} = X^{0,1,2,3,4}$, and $X^{0,1,3,4} \oplus X^{1,2,4} = X^{0,2,3}$.

The NOT operation in Boolean ring algebra is $\bar{x} = 1 \oplus x$. The NOT operation for MVI logic is similarly defined in Eq.~\eqref{eq:mvi-not-op}. For example, for the quinary (5-valued) variable $X$, $\overline{X^{0,1,4}} = X^{2,3}$.

In this paper, the AND and XOR operations will be utilized, where the AND operation can be realized with a Toffoli gate, and the XOR operation can be realized with a CNOT or Toffoli gate.

\subsection{Decoders}
\label{decoders}
A decoder converts binary variables into MVI literals. All possible literals should be able to be represented by an XOR expression of the decoder's outputted literals, as shown in Example~\ref{ex:decoder}.

\begin{example}
    \label{ex:decoder}
    Let $X$ be a ternary (3-valued) variable, which is encoded by the binary variables $x_1$ and $x_2$. A possible decoder for $X$ can output the literals $X^{1,2}$, $X^1$, and $X^0$. Every literal can be created by an XOR expression with the outputted literals, as shown in Table~\ref{tab:ex2}. Any literals of $X$ that are needed can then be synthesized using CNOT gates.
    
    \begin{table}[!htb]
        \centering
        \renewcommand{\arraystretch}{1.2}
        \begin{tabular}{c|c|c}
            Literal & Binary Code & In Terms of the Outputted Literals \\
            \hline
            $X^2$ & 001 & $X^{1,2} \oplus X^1$ \\
            $X^1$ & 010 & $X^1$ \\
            $X^{1,2}$ & 011 & $X^{1,2}$ \\
            $X^0$ & 100 & $X^0$ \\
            $X^{0,2}$ & 101 & $X^{1,2} \oplus X^1 \oplus X^0$ \\
            $X^{0,1}$ & 110 & $X^1 \oplus X^0$ \\
            $X^{0,1,2}$ & 111 & $X^{1,2} \oplus X^0$ \\
        \end{tabular}
        \caption{All literals represented by a binary code and XOR expression with $X^{1,2}$, $X^1$, and $X^0$ for Example~\ref{ex:decoder}.}
        \label{tab:ex2}
    \end{table}
    
    This decoder can also be represented as the \textit{polarity matrix} $P$.
    $$
    P=
    \overset{0 \quad 1 \quad2}{
    \begin{bmatrix}
        0 & 1 & 1 \\
        0 & 1 & 0 \\
        1 & 0 & 0
    \end{bmatrix}
    }
    =
    \begin{bmatrix}
    T^1 \\
    T^2 \\
    T^3
    \end{bmatrix}
    $$
    
    The first row represents the literal $X^{1,2}$, since the values in column 1 and column 2 are 1, where the columns are labeled with the first column as 0. The second row represents the literal $X^1$, since the value of column 1 is 1. And the third row represents the literal $X^0$, since the value of column 0 is 1.
    
    The rows, which correspond to each outputted literal, can be represented as the row vectors $T^1$, $T^2$, and $T^3$. Note that these vectors are \textit{linearly independent}, meaning that no vector in the set {$T^1$, $T^2$, $T^3$} can be obtained from linearly operating on the other vectors \cite{roman05}. Here, XOR is the linear operation used. The polarity matrix $P$, or more generally $P_i$, will be elaborated on in the next subsection.
    
    \begin{figure}[htbp]
        \centering
        \begin{subfigure}[t]{0.1\linewidth}
            \centering
            \includegraphics[width=\linewidth]{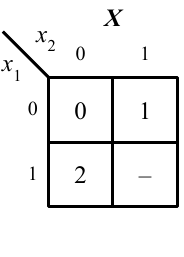}
            \caption{$X$.}
            \label{fig:k-map-x}
        \end{subfigure}
        \begin{subfigure}[t]{0.1\linewidth}
            \centering
            \includegraphics[width=\linewidth]{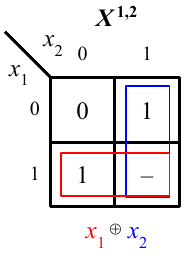}
            \caption{$X^{1,2}$.}
            \label{fig:k-map-x12}
        \end{subfigure}
        \begin{subfigure}[t]{0.1\linewidth}
            \centering
            \includegraphics[width=\linewidth]{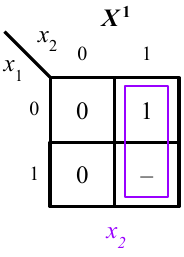}
            \caption{$X^1$.}
            \label{fig:k-map-x1}
        \end{subfigure}
        \begin{subfigure}[t]{0.1\linewidth}
            \centering
            \includegraphics[width=\linewidth]{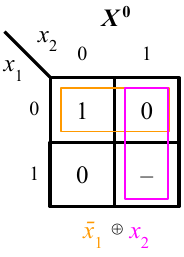}
            \caption{$X^0$.}
            \label{fig:k-map-x0}
        \end{subfigure}
        \begin{subfigure}[t]{0.4\linewidth}
            \centering
            \includegraphics[width=0.8\linewidth]{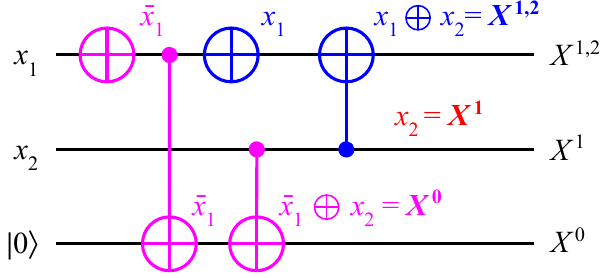}
            \caption{Decoder.}
            \label{fig:decoder-example}
        \end{subfigure}
        \caption{Karnaugh maps and the decoder for the literals $X^{1,2}$, $X^1$, and $X^0$ in Example~\ref{ex:decoder}.}
    \end{figure}
    
    The realization of a binary quantum circuit for this decoder is shown in Fig.~\ref{fig:decoder-example}.
\end{example}

\subsection{Polarity With Multi-Valued Input Literals}
\label{mvi polarity}
In binary logic, the polarity is either negative or positive. However, for $v$-valued logic, the polarity is defined as a matrix since all possible functions can only be defined with at least $v$ different literals. The literals represented by the polarity matrix $P$ from Example~\ref{ex:decoder} correspond to the decoder's outputted literals. The polarity matrix is formally defined in Definition~\ref{def:polarity}.

\begin{definition}
    \label{def:polarity}
    The polarity matrix $P$ (or polarity for short) of a multi-valued variable $X$ is a matrix where the $r$th row vector $T^r$ of the matrix is the binary representation of the polarity literal $P^r$. $\square$
\end{definition}

The concept of polarity for multi-valued variables is formally expanded on in Theorem~\ref{thm:polarity}.

\begin{theorem}
    \label{thm:polarity}
    A multi-valued literal $X^S$, where $S \subseteq V=\{0,1,\dots,v-1\}$, can be represented by $v$ polarity literals $P^1, P^2, \dots, P^v$. The values of the polarity literals form the row vectors $T^r$ of the linearly independent $v \times v$ matrix $P$.
\end{theorem}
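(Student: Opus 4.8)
The plan is to recognize that the set of all MVI literals of the single variable $X$, equipped with the XOR operation, is a vector space over $\mathrm{GF}(2)$, so that the statement reduces to the standard fact that $v$ linearly independent vectors span a $v$-dimensional space. First I would set up the correspondence between literals and binary vectors. For a subset $S \subseteq V$, let $\chi_S \in \mathrm{GF}(2)^v$ denote its binary representation, i.e.\ the vector whose $j$th entry is $1$ if $j \in S$ and $0$ otherwise; this is exactly the binary representation of $X^S$ used to build the rows of $P$ in Definition~\ref{def:polarity}. The crucial observation is that Eq.~\eqref{eq:mvi-xor-op}, $X^{S_1} \oplus X^{S_2} = X^{S_1 \Delta S_2}$, translates into vector language as $\chi_{S_1} \oplus \chi_{S_2} = \chi_{S_1 \Delta S_2}$, because the symmetric difference of sets is precisely the componentwise XOR of their characteristic vectors. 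Hence the map $X^S \mapsto \chi_S$ carries the XOR of literals to addition in $\mathrm{GF}(2)^v$.

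Since every literal satisfies $X^S \oplus X^S = X^{\varnothing} = 0$, scalar multiplication over $\mathrm{GF}(2)$ is forced ($0 \cdot X^S = 0$ and $1 \cdot X^S = X^S$), so the set of literals is a $\mathrm{GF}(2)$-vector space isomorphic to $\mathrm{GF}(2)^v$. As there are $2^v$ subsets of $V$, this space has $2^v$ elements and therefore dimension $v$. Next I would bring in the polarity literals. By Definition~\ref{def:polarity}, the binary representation of $P^r$ is the $r$th row $T^r$ of $P$, i.e.\ $\chi_{S_r}$ where $P^r = X^{S_r}$. The hypothesis that $P$ is linearly independent says precisely that the $v$ vectors $T^1, \dots, T^v$ are linearly independent in the $v$-dimensional space $\mathrm{GF}(2)^v$; a linearly independent set whose size equals the dimension is a basis, so $\{T^1, \dots, T^v\}$ spans $\mathrm{GF}(2)^v$.

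Finally, the conclusion follows by transporting the basis property back through the isomorphism. Given any literal $X^S$, its vector $\chi_S$ can be written uniquely as $\chi_S = \bigoplus_{r=1}^{v} c_r T^r$ with coefficients $c_r \in \mathrm{GF}(2)$; applying the inverse of the isomorphism yields $X^S = \bigoplus_{r\,:\,c_r = 1} P^r$, the required representation of $X^S$ as an XOR of the polarity literals. The same dimension count also shows that $v$ literals are necessary, since no set of fewer than $v$ vectors can span a $v$-dimensional space. I do not expect a genuine obstacle here: the only step requiring care is verifying that the XOR of literals matches the addition of their characteristic vectors, so that Eq.~\eqref{eq:mvi-xor-op} gives a bona fide vector-space isomorphism; after that, the result is the textbook statement that $v$ linearly independent vectors form a basis of a $v$-dimensional space.
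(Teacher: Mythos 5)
Your proposal is correct and follows essentially the same route as the paper's proof: both identify the literal $X^S$ with its binary characteristic vector and invoke the fact that the $v$ linearly independent rows of $P$ span $\mathrm{GF}(2)^v$, so every such vector is a bit-by-bit XOR of rows. The only difference is one of rigor, not of method: you explicitly verify via Eq.~\eqref{eq:mvi-xor-op} that XOR of literals corresponds to addition of characteristic vectors (so the basis property transfers back to literals), a step the paper's terse proof leaves implicit.
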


\begin{proof}
    A property of a linearly independent $m \times m$ matrix $O$ ($m \in \mathbb{N}$) with elements $o_{ij}\in\{0,1\}$ is that any $m$-dimensional vector $U$ with elements $u_{ij}\in\{0,1\}$ can be represented as a bit-by-bit XOR operation on the row vectors of the matrix $O$. 
    
    Thus, any set of truth values $S \subseteq V=\{0,1,\dots,v-1\}$ of a literal $X^S$ can be represented as an XOR expression of the values $P^r$ with $r \subseteq V=\{0,1,\dots,v-1\}$, where $T^r$ is the $r$th row vector of the linearly independent $v \times v$ matrix $P$.
\end{proof}

Theorem 1 states that any MVI literal $X^S$ can be expressed in terms of the polarity literals of a polarity $P$. Note that a form, according to Theorem 1, is a canonical form, meaning that the form is unique for each function and polarity. This is because the polarity matrix $P$ is linearly independent.

The number of potential polarities $n_P$ (where order does not matter) for an $v$-valued variable is stated in Eq.~\ref{eq:num polarities} \cite{roman05}.

\begin{equation}
    \label{eq:num polarities}
    n_P = \frac{1}{v!}\prod_{k=0}^{v-1}2^v-2^k
\end{equation}

The number of polarities from binary (2-valued) to quinary (5-valued) logic is shown in Table~\ref{tab:num-polarities}.

\begin{table}[!htb]
    \centering
    \begin{tabular}{l|c}
         & Number of Polarities \\
        \hline
        Binary (2-Valued) & 3 \\
        Ternary (3-Valued) & 28 \\
        Quaternary (4-Valued) & 840 \\
        Quinary (5-Valued) & 83,328 \\
    \end{tabular}
    \caption{Number of polarities for different valued logic.}
    \label{tab:num-polarities}
\end{table}

All 28 of the possible polarities for a ternary (3-valued) variable are listed in Table~\ref{tab:ternary-polarities}

\begin{table}[!htb]
    \centering
    \begin{tabular}{ccccccc}
         $\begin{bmatrix}
             1 & 0 & 0 \\
             0 & 1 & 0 \\
             0 & 0 & 1
         \end{bmatrix}$
         & 
         $\begin{bmatrix}
             1 & 0 & 0 \\
             0 & 1 & 0 \\
             1 & 0 & 1
         \end{bmatrix}$
         &
         $\begin{bmatrix}
             1 & 0 & 0 \\
             0 & 1 & 0 \\
             0 & 1 & 1
         \end{bmatrix}$
         & 
         $\begin{bmatrix}
             1 & 0 & 0 \\
             0 & 1 & 0 \\
             1 & 1 & 1
         \end{bmatrix}$
         &
         $\begin{bmatrix}
             1 & 0 & 0 \\
             0 & 0 & 1 \\
             1 & 1 & 0
         \end{bmatrix}$
         & 
         $\begin{bmatrix}
             1 & 0 & 0 \\
             0 & 0 & 1 \\
             0 & 1 & 1
         \end{bmatrix}$
         &
         $\begin{bmatrix}
             1 & 0 & 0 \\
             0 & 0 & 1 \\
             1 & 1 & 1
         \end{bmatrix}$ \\[20px]
         $\begin{bmatrix}
             1 & 0 & 0 \\
             1 & 0 & 0 \\
             1 & 0 & 1
        \end{bmatrix}$
        &
        $\begin{bmatrix}
             1 & 0 & 0 \\
             1 & 1 & 0 \\
             0 & 1 & 1
        \end{bmatrix}$
        &
        $\begin{bmatrix}
             1 & 0 & 0 \\
             1 & 1 & 0 \\
             1 & 1 & 1
        \end{bmatrix}$
        &
        $\begin{bmatrix}
             1 & 0 & 0 \\
             1 & 0 & 1 \\
             0 & 1 & 1
        \end{bmatrix}$
        &
        $\begin{bmatrix}
             1 & 0 & 0 \\
             1 & 0 & 1 \\
             1 & 1 & 1
        \end{bmatrix}$
        &
        $\begin{bmatrix}
             0 & 1 & 0 \\
             0 & 0 & 1 \\
             1 & 1 & 0
        \end{bmatrix}$
        &
        $\begin{bmatrix}
             0 & 1 & 0 \\
             0 & 0 & 1 \\
             1 & 0 & 1
        \end{bmatrix}$ \\[20px]
        $\begin{bmatrix}
             0 & 1 & 0 \\
             0 & 0 & 1 \\
             1 & 1 & 1
        \end{bmatrix}$
        &
        $\begin{bmatrix}
             0 & 1 & 0 \\
             1 & 1 & 0 \\
             1 & 0 & 1
        \end{bmatrix}$
        &
        $\begin{bmatrix}
             0 & 1 & 0 \\
             1 & 1 & 0 \\
             0 & 1 & 1
        \end{bmatrix}$
        &
        $\begin{bmatrix}
             0 & 1 & 0 \\
             1 & 1 & 0 \\
             1 & 1 & 1
        \end{bmatrix}$
        &
        $\begin{bmatrix}
             0 & 1 & 0 \\
             1 & 0 & 1 \\
             0 & 1 & 1
        \end{bmatrix}$
        &
        $\begin{bmatrix}
             0 & 1 & 0 \\
             0 & 1 & 1 \\
             1 & 1 & 1
        \end{bmatrix}$
        &
        $\begin{bmatrix}
             0 & 0 & 1 \\
             1 & 1 & 0 \\
             1 & 0 & 1
        \end{bmatrix}$ \\[20px]
        $\begin{bmatrix}
             0 & 0 & 1 \\
             1 & 1 & 0 \\
             0 & 1 & 1
        \end{bmatrix}$
        &
        $\begin{bmatrix}
             0 & 0 & 1 \\
             1 & 0 & 1 \\
             0 & 1 & 1
        \end{bmatrix}$
        &
        $\begin{bmatrix}
             0 & 0 & 1 \\
             1 & 0 & 1 \\
             1 & 1 & 1
        \end{bmatrix}$
        &
        $\begin{bmatrix}
             0 & 0 & 1 \\
             0 & 1 & 1 \\
             1 & 1 & 1
        \end{bmatrix}$
        &
        $\begin{bmatrix}
             1 & 1 & 0 \\
             1 & 0 & 1 \\
             1 & 1 & 1
        \end{bmatrix}$
        &
        $\begin{bmatrix}
             1 & 1 & 0 \\
             0 & 1 & 1 \\
             1 & 1 & 1
        \end{bmatrix}$
        &
        $\begin{bmatrix}
             1 & 0 & 1 \\
             0 & 1 & 1 \\
             1 & 1 & 1
        \end{bmatrix}$
    \end{tabular}
    \caption{All ternary polarities.}
    \label{tab:ternary-polarities}
\end{table}

Vectors that are not linearly independent can also be used to represent the polarity literals. Thus, Theorem~\ref{thm:polarity} can be generalized to Lemma~\ref{lem:polarity-gen}.

\begin{lemma}
    \label{lem:polarity-gen}
    Any set of vectors $T^r$ can be used for the polarity literals $P^r$, if every possible set $S \subseteq V$ for the literal $X^S$ can be generated by an AND-XOR expression with a subset of the polarity literals 
    $\square$
\end{lemma}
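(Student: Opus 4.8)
The plan is to reduce the claim to a statement about the Boolean ring of truth vectors, exploiting the fact that for a single multi-valued variable every binary-output function is itself a literal. First I would fix the dictionary between literals and vectors: to each literal $X^S$ I associate its truth vector $t_S \in \{0,1\}^v$ whose $k$-th entry is $1$ exactly when $k \in S$, which is a bijection between the $2^v$ literals and the $2^v$ vectors of $\{0,1\}^v$. Under this dictionary, Eq.~\eqref{eq:mvi-xor-op} says that the XOR of two literals corresponds to the componentwise sum over $\mathrm{GF}(2)$ of their vectors, and Eq.~\eqref{eq:mvi-and-op} says that the AND corresponds to the componentwise product; together with the constant literal $X^V$ (the all-ones vector, equal to $1$) these equip the set of literals with the structure of a commutative ring in which AND-XOR expressions are exactly the ring-generated elements.

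The second step is to record the simplification that makes the lemma tractable: every completely specified binary-output function of the single variable $X$ is \emph{itself} a literal $X^S$, namely for $S$ the set of values of $X$ on which the function evaluates to $1$. Consequently, generating all literals is the same as generating all functions in which $X$ occurs. This gives the stated (sufficiency) direction directly: if the hypothesis holds, so that every $X^S$ is expressible as an AND-XOR expression in a subset of the polarity literals $P^r$, then every function of $X$ is likewise expressible, and hence the vectors $T^r$ can indeed serve as a polarity. I would also note that the converse necessity is immediate, since a set of vectors from which some literal $X^S$ could not be built would fail to represent that very function, so the AND-XOR-closure condition is in fact the exact criterion for a generating set.

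Finally, I would close the loop with Theorem~\ref{thm:polarity} to show that the lemma is a genuine weakening of it. When the $v$ vectors $T^r$ are linearly independent they already span all of $\{0,1\}^v$ using XOR alone, and XOR expressions are the special case of AND-XOR expressions in which no AND appears; hence every polarity admitted by Theorem~\ref{thm:polarity} satisfies the hypothesis here. The reverse inclusion fails, because componentwise AND can produce vectors outside the linear span: for instance, from $T^1=(1,1,0)$ and $T^2=(0,1,1)$ one obtains $T^1 \wedge T^2 = (0,1,0)$ and then recovers all three standard basis vectors by XOR, so this linearly dependent pair (indeed a set of fewer than $v$ members) already generates the whole ring. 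The hard part is not computational but definitional: the crux is to make precise what it means for a set of vectors to ``be used as a polarity'' and to justify that closure under componentwise AND and XOR in the truth-vector ring is exactly the right criterion, after which the equivalence asserted in the lemma is essentially a restatement of that closure property.
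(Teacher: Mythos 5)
Your proposal is correct, but there is no proof in the paper to compare it with: the authors state Lemma~\ref{lem:polarity-gen} without argument, closing it with the same $\square$ they use for definitions, and follow it only with a remark that linearly dependent polarities lose canonicity (multiple representations of one function) and tend to be costlier. In effect the paper treats the lemma as definitional, and your write-up supplies the substance it leaves implicit. Three things you add are genuinely useful. First, the dictionary: literals of a $v$-valued variable biject with truth vectors in $\{0,1\}^v$, and by Eqs.~\eqref{eq:mvi-and-op} and \eqref{eq:mvi-xor-op} AND and XOR pass to componentwise product and GF(2) sum, so AND-XOR expressions are exactly the elements of the Boolean subring generated by the $T^r$. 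Second, the observation that turns the near-tautology into a proof: every completely specified binary-output function of a single multi-valued variable is itself a literal $X^S$, so ``generates every literal'' and ``represents every function of $X$'' are the same condition, which is precisely what ``can be used for the polarity literals'' must mean. Third, your example $T^1=(1,1,0)$, $T^2=(0,1,1)$, where the componentwise AND gives $(0,1,0)$ and XORs then recover every vector of $\{0,1\}^3$, shows the lemma is a strict generalization of Theorem~\ref{thm:polarity}: AND escapes the linear span, so even fewer than $v$ linearly dependent vectors can suffice --- the paper asserts the generalization but never exhibits such an instance. Your own caveat is the right one: since the paper never formally defines ``can be used for the polarity literals,'' any proof must begin by fixing that meaning, and under the natural reading (the $T^r$ generate every function of the variable under AND-XOR) your closure argument settles the lemma completely.
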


In a polarity matrix $P$ that is not linearly independent, there is more than one way to generate $X^S$ from the given polarity literals. This is because, since $P$ is not linearly independent, some vectors $T^r$ (corresponding to $P^r$) can be represented by an XOR combination of other row vectors in $P$, leading to multiple representations of the same function. Thus, the XOR expressions created with such polarity literals are no longer canonical forms. These polarities can be used to make quantum circuits, but they are costlier than their linear independent alternatives.

\begin{table}[!htb]
    \centering
    \begin{tabular}{l|l}
        Name & Symbol \\
        \hline
        Binary Variable & $x$ \\
        MVI Variable & $X$ \\
        MVI Literal & $X^S$ \\
        Set of Truth Values & $S$ \\
        Set of All Possible Truth Values & $V=\{0,1,\dots,v-1\}$ \\
        Polarity Matrix & $P$ \\
        Polarity Literal & $P^r$ \\
        Row Vector from Literal & $T^r$ 
    \end{tabular}
    \caption{Notation.}
    \label{tab:notation}
\end{table}

Table~\ref{tab:notation} summarizes the notation used throughout this paper, although slight differences may appear (e.g., both $P$ and $Q$ will be used for polarity). When working with more than one variable, subscripts will be used, such as with $X_i^{S_i}$.

\subsection{Ternary Input Shannon and Davio-like Expansions}
\label{ternary-expansions}
The Shannon expansion \cite{shannon}, which is stated in Eq.~\eqref{eq:shannon}, and the Davio expansions \cite{davio}, which are stated in Eq.~\eqref{eq:positive-davio} and Eq.~\eqref{eq:negative-davio}, from binary logic can be generalized to MVI logic. This subsection introduces similar expressions for ternary input logic.

Recall that the original Shannon expansion, with XOR, from Eq.~\eqref{eq:shannon} is 
\begin{equation*}
    f = af_{a} \oplus \bar{a}f_{\bar{a}}.
\end{equation*}

The literal $a$ is 1 if $a=1$, and the literal $\bar{a}$ is 1 if $a=0$. Thus, let $a=a^1$ and $\bar{a}=a^0$, where $a^k$ is 1 if $a=k$. The cofactors $f_a$ and $f_{\bar{a}}$ can be rewritten as $f_{a=1}$ and $f_{a=0}$, with $f_{a=k}$ equal to the function $f$ when $a=k$.

The Shannon expansion can thus be rewritten as
\begin{align*}
    f 
    &= a^1f_{a=1} \oplus a^0f_{a=0} \\
    &= a^0f_{a=0} \oplus a^1f_{a=1}.
\end{align*}

This can be extended to ternary input functions. The binary variable $a$ can be replaced by the multi-valued variable $X$. The ternary input Shannon expansion is expressed in Eq.~\eqref{eq:ternary-input-shannon}.

\begin{equation}
    \label{eq:ternary-input-shannon}
    f = X^0f_{X=0} \oplus X^1f_{X=1} \oplus X^2f_{X=2}
\end{equation}

The polarity matrix for the ternary variable $X$ with $P^1=X^0$, $P^2=X^1$, and $P^3=X^2$ is
$$\begin{bmatrix}
    1 & 0 & 0 \\
    0 & 1 & 0 \\ 
    0 & 0 & 1 \\
\end{bmatrix}$$
which is equivalent to finding a function in terms of $X^0$, $X^1$, and $X^2$. This is the same as finding the Shannon expansion. 

Like their binary counterparts, the ternary input Davio-like expansions can be obtained from the ternary input Shannon expansion. One potential Davio-like expansion is shown in Eq.~\eqref{eq:ternary-davio-example1}, which corresponds to the polarity with literals $X^{0,1,2}=1$, $X^{0,1}$, and $X^{1,2}$.
\begin{align}
    \label{eq:ternary-davio-example1}
    f
    &= X^0f_{X=0} \oplus X^1f_{X=1} \oplus X^2f_{X=2} \nonumber \\
    &= (1 \oplus X^{1,2})f_{X=0} \oplus (1 \oplus X^{0,1} \oplus X^{1,2})f_{X=1} \oplus (1 \oplus X^{0,1})f_{X=2} \nonumber \\
    &= f_{X=0} \oplus X^{1,2}f_{X=0} \oplus f_{X=1} \oplus X^{0,1}f_{X=1} \oplus X^{1,2}f_{X=1} \oplus f_{X=2} \oplus X^{0,1}f_{X=2} \nonumber \\
    &= f_{X=0} \oplus f_{X=1} \oplus f_{X=2} \oplus X^{0,1}f_{X=1} \oplus X^{0,1}f_{X=2} \oplus X^{1,2}f_{X=0} \oplus X^{1,2}f_{X=1} \nonumber \\
    &= (f_{X=0} \oplus f_{X=1} \oplus f_{X=2}) \oplus X^{0,1}(f_{X=1} \oplus f_{X=2}) \oplus X^{1,2}(f_{X=0} \oplus f_{X=1})
\end{align}

However, this is not the only possible ternary input Davio-like expansion. Another variant is shown in Eq.~\eqref{eq:ternary-davio-example2}, which corresponds to the polarity $X^0$, $X^{0,1}$, $X^{0,2}$.
\begin{align}
    \label{eq:ternary-davio-example2}
    f
    &= X^0f_{X=0} \oplus X^1f_{X=1} \oplus X^2f_{X=2} \nonumber \\
    &= X^0f_{X=0} \oplus (X^0 \oplus X^{0,1})f_{X=1} \oplus (X^0 \oplus X^{0,2})f_{X=2} \nonumber \\
    &= X^0f_{X=0} \oplus X^0f_{X=1} \oplus X^{0,1}f_{X=1} \oplus X^0f_{X=2} \oplus X^{0,2}f_{X=2} \nonumber \\
    &= X^0f_{X=0} \oplus X^0f_{X=1} \oplus X^0f_{X=2} \oplus X^{0,1}f_{X=1} \oplus X^{0,2}f_{X=2} \nonumber \\
    &= X^0(f_{X=0} \oplus f_{X=1} \oplus f_{X=2}) \oplus X^{0,1}(f_{X=1}) \oplus X^{0,2}(f_{X=2})
\end{align}
The other polarity matrices correspond to all the possible Davio-like expansions, so there are 27 Davio-like expansions in total. These relations can also be generalized to any multi-valued input logic.

Although multi-valued input Shannon and Davio-like expansions won't be elaborated further in this paper, they are important, as they link the concept of polarity for multi-valued input logic with classical Boolean algebra.

\subsection{Extending Reed-Muller Forms to Multi-Valued Input Functions}
\label{mvi-rm}
The approach presented in this paper for generating multi-valued input FPRM forms, defined in Definition~\ref{def:mvi-fprm}, is similar to those from \cite{schafer91}. We propose calling multi-valued input FPRM forms ``MVI-FPRM" forms instead. Similarly, we call multi-valued input GRM forms, defined in Definition~\ref{def:mvi-grm}, ``MVI-GRM" forms.

Before MVI-FPRM forms are formally defined, let us consider a simple example of such forms in Example~\ref{ex:fprm}, which shows the algebraic derivation.

\begin{example}
    \label{ex:fprm}
    Take the MVI function $F_1=X_1^{0,2,3}X_2^{0,1}$ with quaternary (4-valued) variable $X_1$ and ternary (3-valued) variable $X_2$. Note that this differs from the function $f$ in Example~\ref{ex:mvi function} because $X_2$ is ternary for $F_1$ while it was quaternary for $f$.

    Suppose that the function has the polarity $P_1$,$P_2$.
    $$
    P_1=
    \begin{bmatrix}
        1 & 1 & 1 & 1 \\
        0 & 1 & 0 & 1 \\
        0 & 0 & 1 & 1 \\
        0 & 1 & 1 & 1
    \end{bmatrix}
    =
    \begin{bmatrix}
        T_1^1 \\
        T_1^2 \\
        T_1^3 \\
        T_1^4
    \end{bmatrix}
    , \
    P_2=
    \begin{bmatrix}
        1 & 1 & 1 \\
        1 & 0 & 0 \\
        0 & 0 & 1
    \end{bmatrix}
    =
    \begin{bmatrix}
        T_2^1 \\
        T_2^2 \\
        T_2^3
    \end{bmatrix}
    $$

    This means that the polarity literals for $X_1$ are $$P_1^1=X_1^{0,1,2,3}=1, \ P_1^2=X_1^{1,3}, \ P_1^3=X_1^{2,3}, \ P_1^4=X_1^{1,2,3}.$$ The polarity literals for $X_2$ are $$P_2^1=X_1^{0,1,2}=1, \ P_2^2=X_1^0, \ P_2^3=X_1^2.$$
    
    The literals $X_1^{0,2,3}$ and $X_2^{0,1}$ can be represented in terms of the polarity literals as follows:
    \begin{align*}
        X_1^{0,2,3} &= P_1^1 \oplus P_1^3 \oplus P_1^4 \\
        &= X_1^{0,1,2,3} \oplus X_1^{2,3} \oplus X_1^{1,2,3} \\
        X_2^{0,1} &= P_2^1 \oplus P_2^3 \\
        &= X_2^{0,1,2} \oplus X_2^2.
    \end{align*}

    From substituting these values into the function $F_1$, taking into account that $P_1^1=P_2^1=1$, and expanding, we obtain the MVI-FPRM form for $F_1$, as demonstrated below:  
    
    \begin{align*}
        F_1 &=X_1^{0,2,3}X_2^{0,1} \\ 
        &= (P_1^1 \oplus P_1^3 \oplus P_1^4)(P_2^1 \oplus P_2^3) \\
        &= (1 \oplus P_1^3 \oplus P_1^4)(1 \oplus P_2^3) \\
        &= 1 \oplus P_1^3 \oplus P_1^4 \oplus P_2^3 \oplus P_1^3P_2^3 \oplus P_1^4P_2^3 \\
        &= 1 \oplus X_1^{2,3} \oplus X_1^{1,2,3}\oplus X_2^2 \oplus X_1^{2,3}X_2^2 \oplus X_1^{1,2,3}X_2^2.
    \end{align*}
\end{example}

Definition~\ref{def:mvi-fprm} extends the concept of FPRM forms for Boolean functions to the concept of MVI-FPRM forms for MVI functions.

\begin{definition}
\label{def:mvi-fprm}
    The MVI-FPRM form of a single-output MVI function $F(X_1, X_2,\dots, X_n)$ with polarity $P_1, P_2,\dots, P_n$ is defined with the so-called spectral coefficients $M_{P_1^{r_1}, P_2^{r_2},\dots, P_n^{r_n}}$ in Eq.~\eqref{eq:mvi-fprm}. Similar to binary FPRM forms, MVI-FPRM forms assign each variable a fixed polarity.
    \begin{align}
            \label{eq:mvi-fprm}
            &F(X_1,X_2,\dots, X_n) = \bigoplus_{r_i \in V_i} M_{P_1^{r_1},P_2^{r_2},\dots,P_n^{r_n}}P_1^{r_1}P_2^{r_2} \cdots P_n^{r_n}
    \end{align}
    $\square$
\end{definition}

Please note that while in binary, for a function of $n$ variables, there are $2^n$ possible FPRM forms, but in the case of ternary input functions, there are $28^n$ MVI-FPRM forms, which makes synthesis much more difficult and points to the importance of selecting good decoders.

Recall that the binary GRM is defined as a form where each product term has a different set of variables than every other term. Therefore, we extend this to the MVI-GRM form. 

\begin{definition}
\label{def:mvi-grm}
    The MVI-GRM form of a single-output MVI function is defined as a form where every two product terms have a different set of multi-valued variables. $\square$
\end{definition}

For instance, $X_1^{0,1}X_2^{0,2,3} \oplus X_1^{0,2}X_2^{1}$ is not an MVI-GRM form because both terms have the set of variables $\{X_1, X_2\}$, but $X_1^{0,1}X_2^{0,2,3} \oplus X_1^{0,2}X_3^{1}$ is an MVI-GRM form.

Although the names "spectral" and "spectrum" will be used, no knowledge of spectral theory is necessary. The spectral coefficients are a format for representing the MVI-FPRM form of a function with 0's and 1's.

The polarity of an MVI-FPRM form is defined in Definition~\ref{def:mvi-fprm polarity}.

\begin{definition}
    \label{def:mvi-fprm polarity}
    The polarity of an MVI-FPRM form is the vector of polarity matrices describing the polarity for each multi-valued literal. $\square$
\end{definition}

\section{Circuit Synthesis Based on MVI-FPRM}
\label{mvi-fprm-circuit}

An approach for visualizing the MVI-FPRM form of a function using Marquand charts is shown in Example~\ref{ex:fprm spectrum}.

\begin{example}
    \label{ex:fprm spectrum}
    The MVI-FPRM form of the function $F_1=X_1^{0,2,3}X_2^{0,1}$ from Example~\ref{ex:fprm} with polarities $P_1$ and $P_2$ can be represented by its spectrum $M$ as shown in Fig.~\ref{fig:ex4-spectrum}, which illustrates the standard trivial functions, which are all the possible product terms with the given polarity literals.

    \begin{figure}[!htb]
        \centering
        \includegraphics[width=0.6\linewidth]{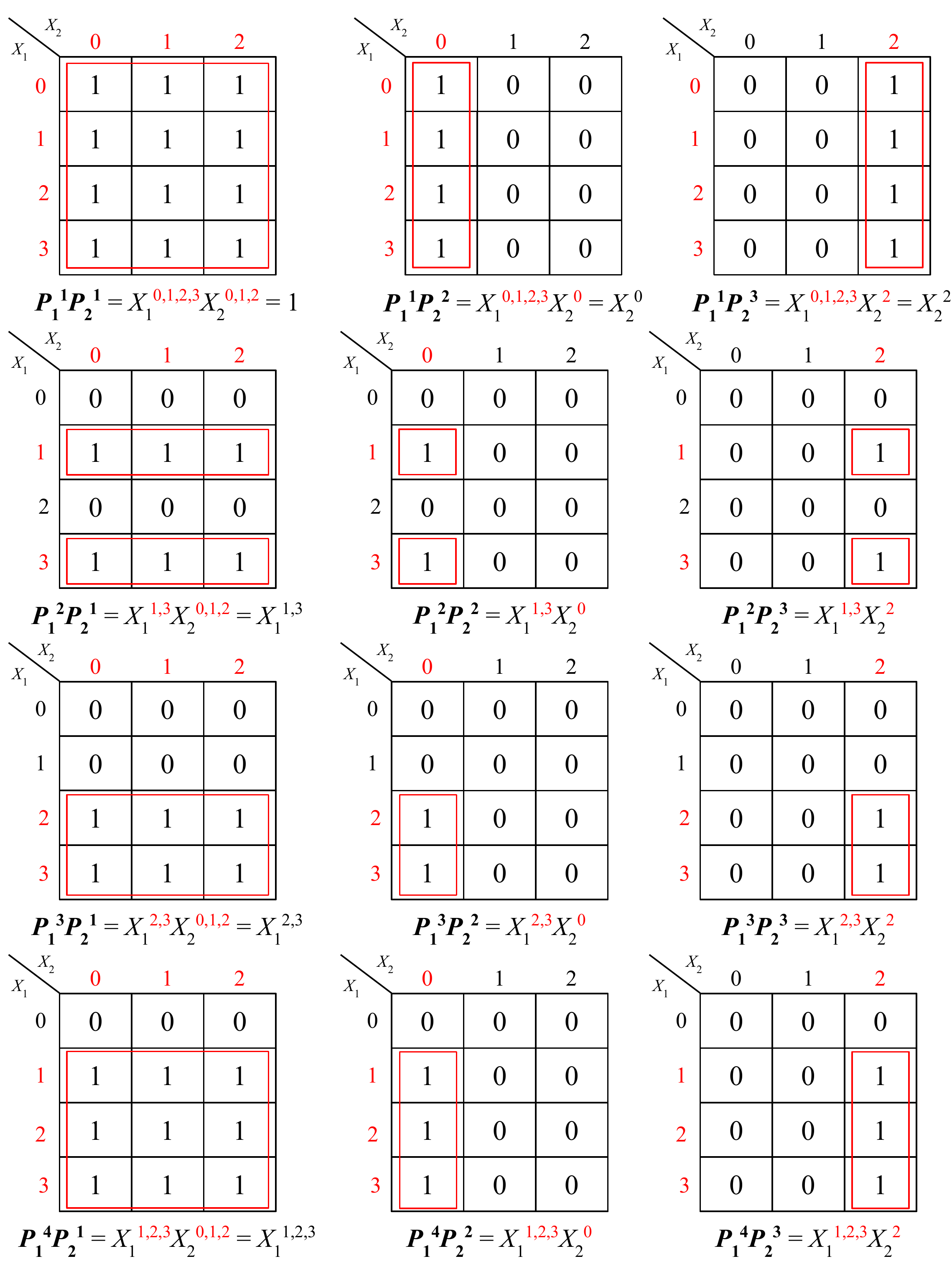}
        \caption{Standard trivial functions for variables $X_1$ and $X_2$ with polarities $P_1$ and $P_2$ respectively for Example~\ref{ex:fprm spectrum}.}
        \label{fig:ex4-spectrum}
    \end{figure}

    The function $F_1$ can be shown to be equal to 
    $$F_1=1 \oplus P_1^3 \oplus P_1^4 \oplus P_2^3 \oplus P_1^3P_2^3 \oplus P_1^4P_2^3$$ 
    (remember that $P_1^1=P_2^1=1$).

    The MVI-FPRM form for this function can be visually verified by taking the XOR of the Marquand charts for each term in $1 \oplus P_1^3 \oplus P_1^4 \oplus P_2^3 \oplus P_1^3P_2^3 \oplus P_1^4P_2^3$ and comparing it to the Marquand chart for $F_1$.

    \begin{figure}[!htb]
        \centering
        \includegraphics[width=0.7\linewidth]{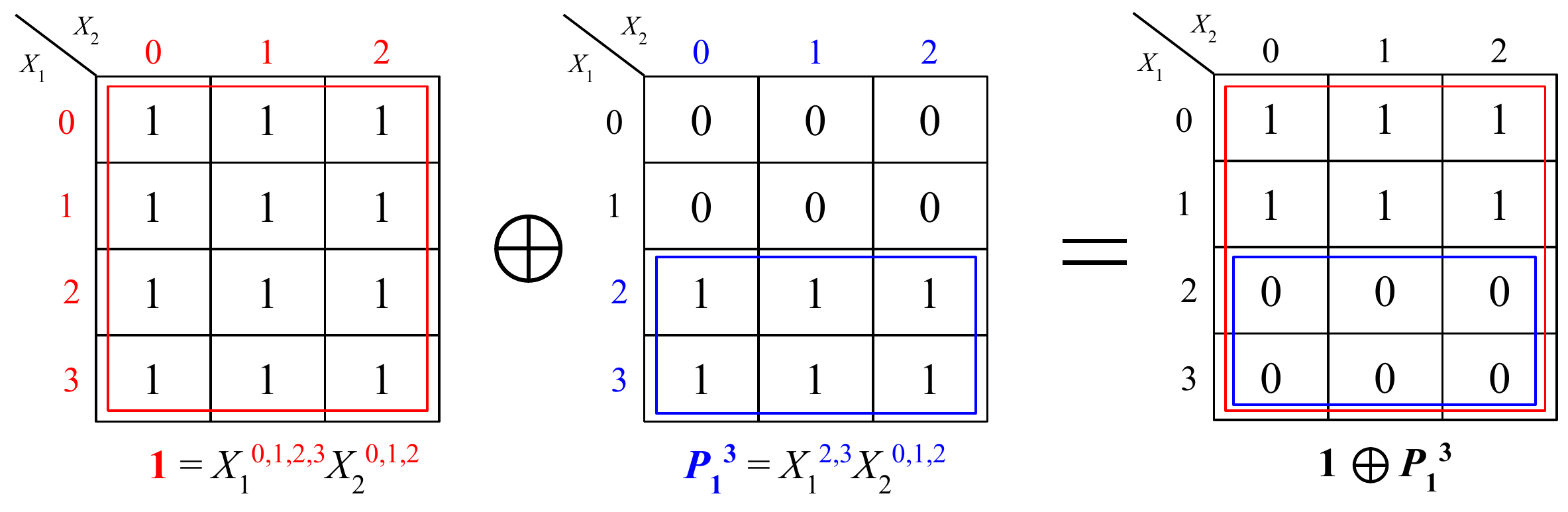}
        \caption{Step 1: XOR $1$ and $P_1^3$ for Example~\ref{ex:fprm spectrum}.}
        \label{fig:ex4-xor-step1}
    \end{figure}

    \begin{figure}[!htb]
        \centering
        \includegraphics[width=0.7\linewidth]{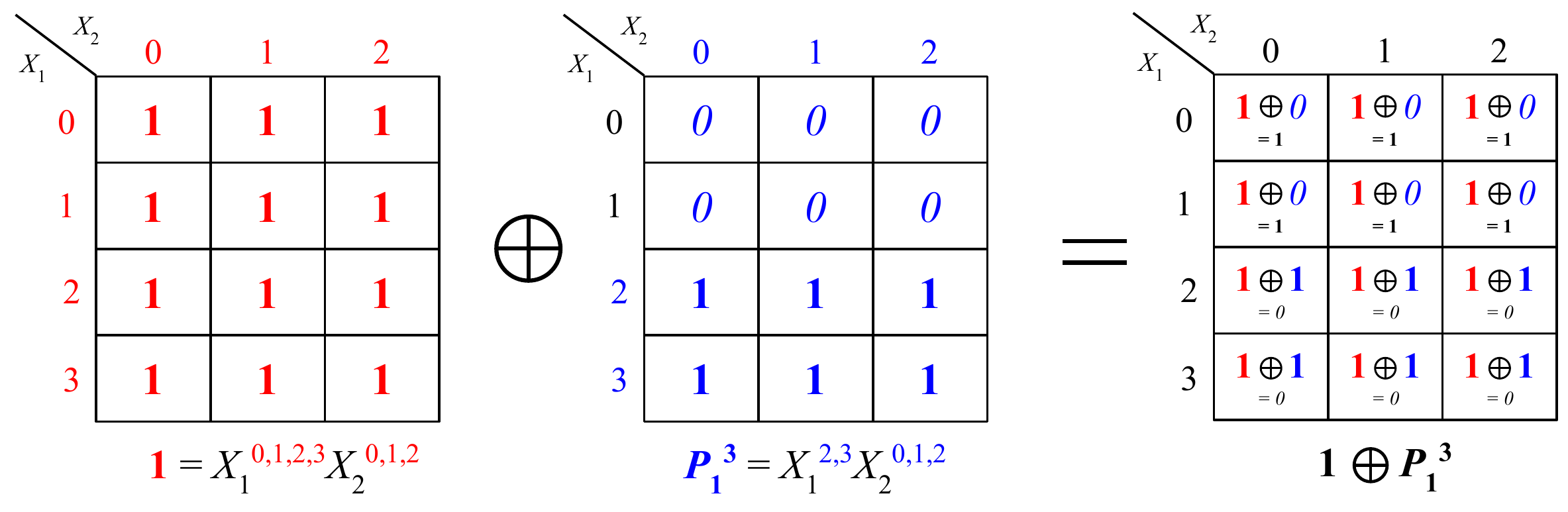}
        \caption{Process showing the XOR operation in each cell for Example~\ref{ex:fprm spectrum}.}
        \label{fig:ex4-xor-step1-xorvis}
    \end{figure}

    Let's start with the first two terms: $1$ and $P_1^3$. The first course of action is to draw both of the Marquand Charts for $1$ and $P_1^3$. Next, group together all the 1's in the Marquand charts, where the group for $1$ is in \textcolor{red}{red}, while the group for $P_1^3$ is in \textcolor{blue}{blue}. Then, overlap the two groups on a new Marquand chart. For each cell in the chart, if an odd number of groups cover it, then the value for that cell is 1; if an even number of groups cover it, then the value for that cell is 0. Fig.~\ref{fig:ex4-xor-step1} shows the first step of the verification process, which combines all of the previous steps into one. The resulting Marquand chart is for the function $1 \oplus P_1^3$. This method essentially finds the XOR of the values in each cell for the functions, which is depicted in Fig.~\ref{fig:ex4-xor-step1-xorvis}. 

    \begin{figure}[!htb]
        \centering
        \begin{subfigure}[!h]{0.7\linewidth}
            \centering
            \includegraphics[width=\linewidth]{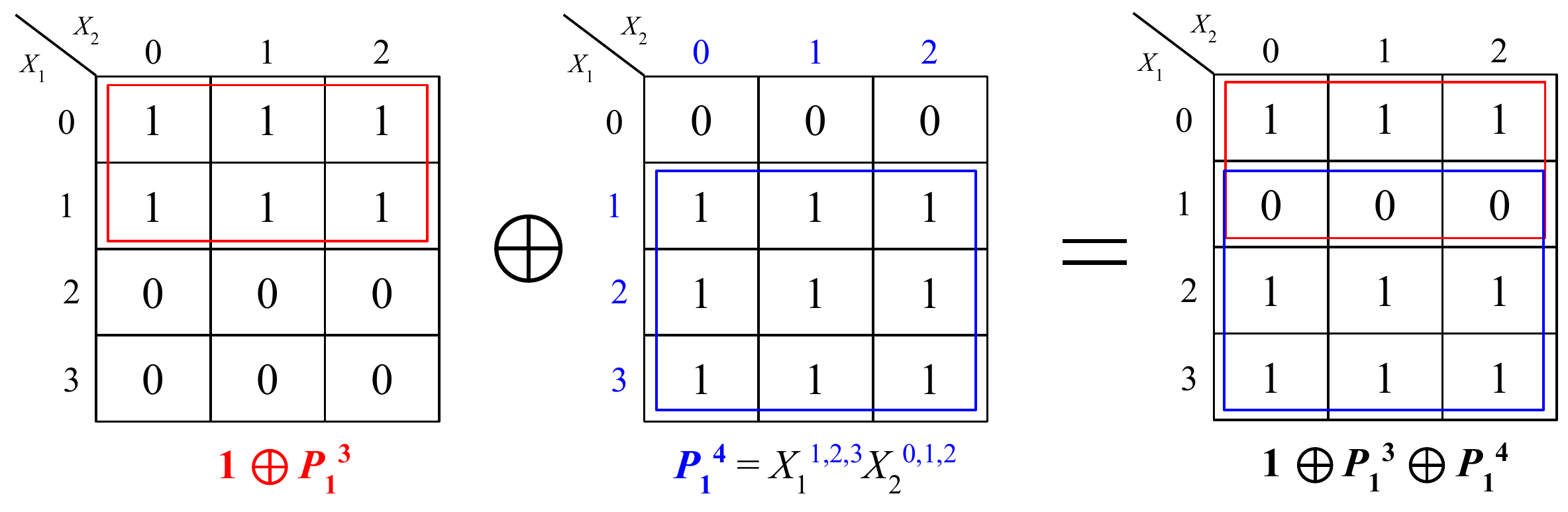}
            \caption{Step 2: XOR $1 \oplus P_1^3$ with $P_1^4$.}
        \end{subfigure}
        \begin{subfigure}[!h]{0.7\linewidth}
            \centering
            \includegraphics[width=\linewidth]{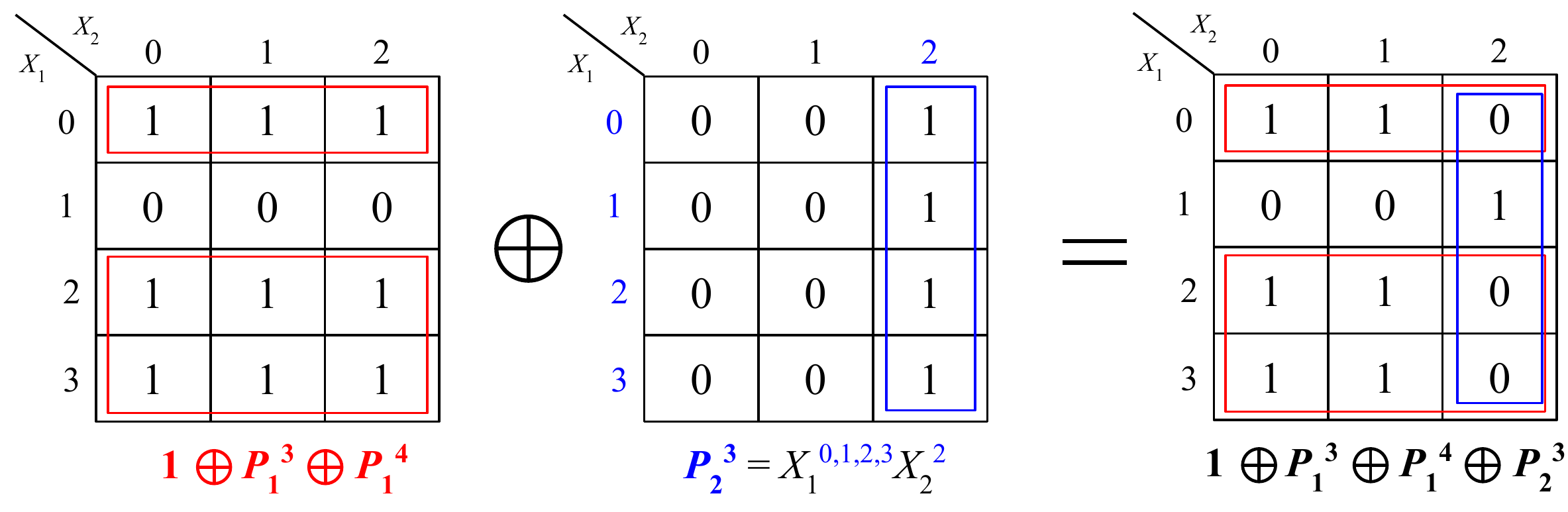}
            \caption{Step 3: XOR $1 \oplus P_1^3 \oplus P_1^4$ with $P_2^3$.}
        \end{subfigure}
        \begin{subfigure}[!h]{0.7\linewidth}
            \centering
            \includegraphics[width=\linewidth]{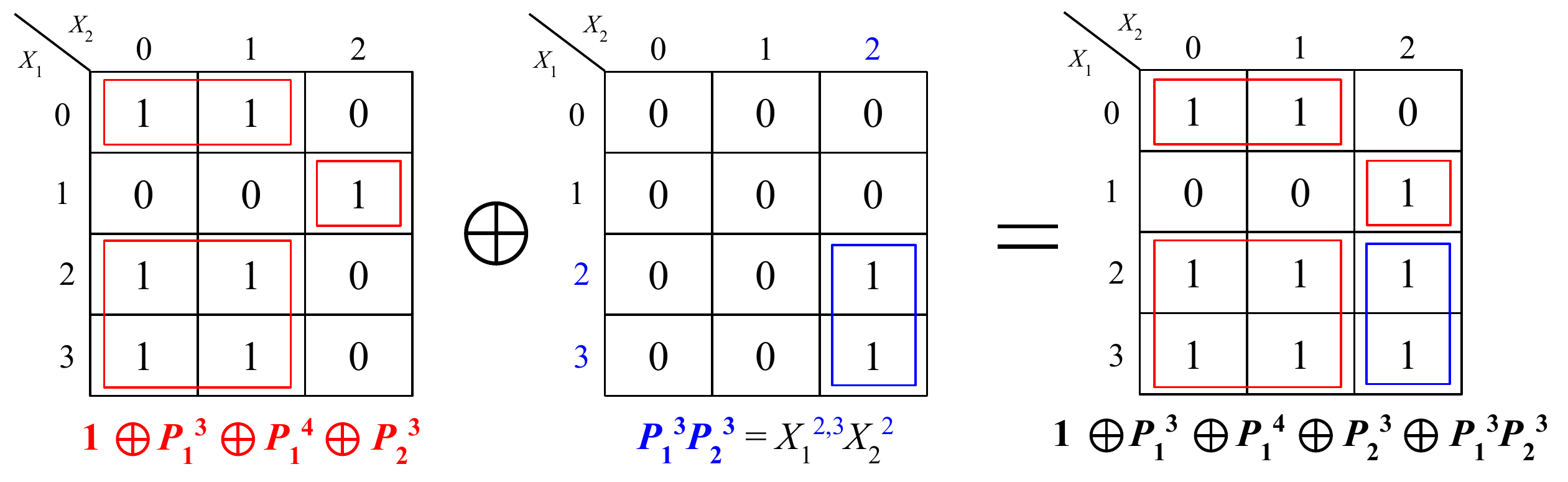}
            \caption{Step 4: XOR $1 \oplus P_1^3 \oplus P_1^4 \oplus P_2^3$ with $P_1^3P_2^3$.}
        \end{subfigure}
        \begin{subfigure}[!h]{0.7\linewidth}
            \centering
            \includegraphics[width=\linewidth]{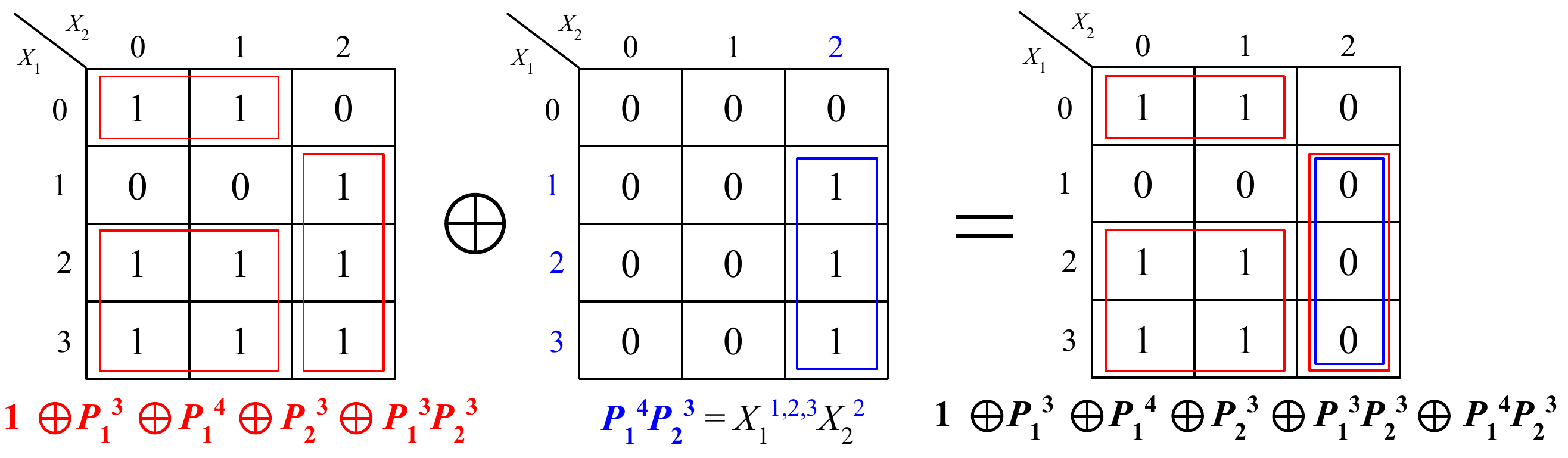}
            \caption{Step 5: XOR $1 \oplus P_1^3 \oplus P_1^4 \oplus P_2^3 \oplus P_1^3P_2^3$ with $P_1^4P_2^3$ to get $F_1$.}
        \end{subfigure}
        \caption{The remaining four steps for applying XOR on the Marquand charts for all terms in $1 \oplus P_1^3 \oplus P_1^4 \oplus P_2^3 \oplus P_1^3P_2^3 \oplus P_1^4P_2^3$ for Example~\ref{ex:fprm spectrum}.}
        \label{fig:ex4-xor-process}
    \end{figure}

    Applying this strategy with all the other terms gives the Marquand charts shown in Fig.~\ref{fig:ex4-xor-process}. The resulting chart is equal to the one in Fig.~\ref{fig:ex4-f-chart}, which is the Marquand chart that is obtained from the original function $F_1=X_1^{0,2,3}X_2^{0,1}$.

    The spectral coefficients for the function $F_1$ with the polarities $P_1$ and $P_2$ are listed in Table~\ref{tab:ex4-spectrum}. A value of 1 for the spectral coefficient $M_{P_1^{r_1}P_2^{r_2}}$ indicates that the corresponding term $P_1^{r_1}P_2^{r_2}$ is in the MVI-FPRM form of the function $F_1$, while a value of 0 indicates that it is not.

    \begin{table}[!htb]
        \centering
        \renewcommand{\arraystretch}{1.3}
        \resizebox{\textwidth}{!}{%
        \begin{tabular}{l|ccc ccc ccc ccc}
            Coefficient 
            & $M_{P_1^1P_2^1}$ & $M_{P_1^1P_2^2}$ & $M_{P_1^1P_2^3}$ 
            & $M_{P_1^2P_2^1}$ & $M_{P_1^2P_2^2}$ & $M_{P_1^2P_2^3}$ 
            & $M_{P_1^3P_2^1}$ & $M_{P_1^3P_2^2}$ & $M_{P_1^3P_2^3}$
            & $M_{P_1^4P_2^1}$ & $M_{P_1^4P_2^2}$ & $M_{P_1^4P_2^3}$
            \\[2.5px]
            \hline
            Value
            & 1 & 0 & 1
            & 0 & 0 & 0
            & 1 & 0 & 1
            & 1 & 0 & 1
            \\
        \end{tabular}}
        \caption{Spectrum of function $F_1$ for Example~\ref{ex:fprm spectrum}.}
        \label{tab:ex4-spectrum}
    \end{table}

    \begin{figure}[tbp]
        \centering
        \begin{subfigure}{0.35\linewidth}
            \centering
            \includegraphics[width=\linewidth]{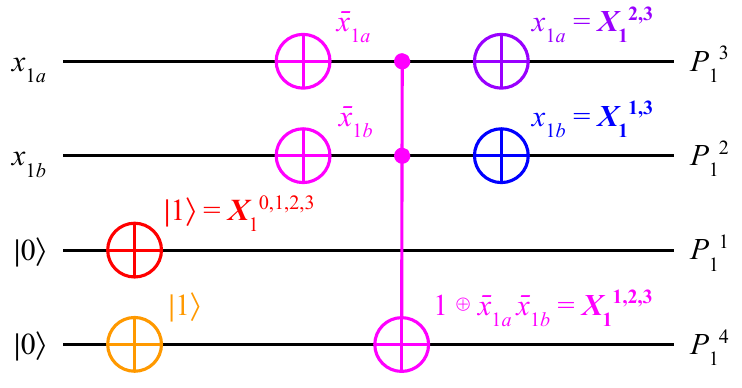}
            \caption{Decoder for $P_1$.}
            \label{fig:ex4-x1-decoder}
        \end{subfigure}
        \begin{subfigure}{0.275\linewidth}
            \centering
            \includegraphics[width=\linewidth]{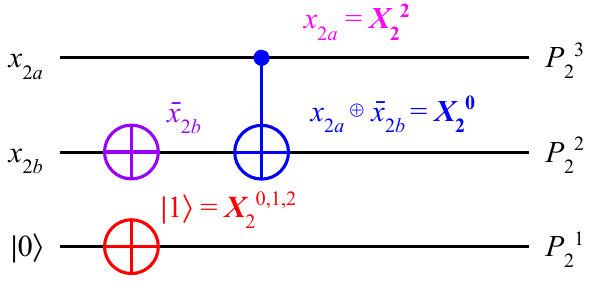}
            \caption{Decoder for $P_2$.}
            \label{fig:ex4-x2-decoder}
        \end{subfigure}
        \\
        \begin{subfigure}{0.22\linewidth}
            \centering
            \includegraphics[width=\linewidth]{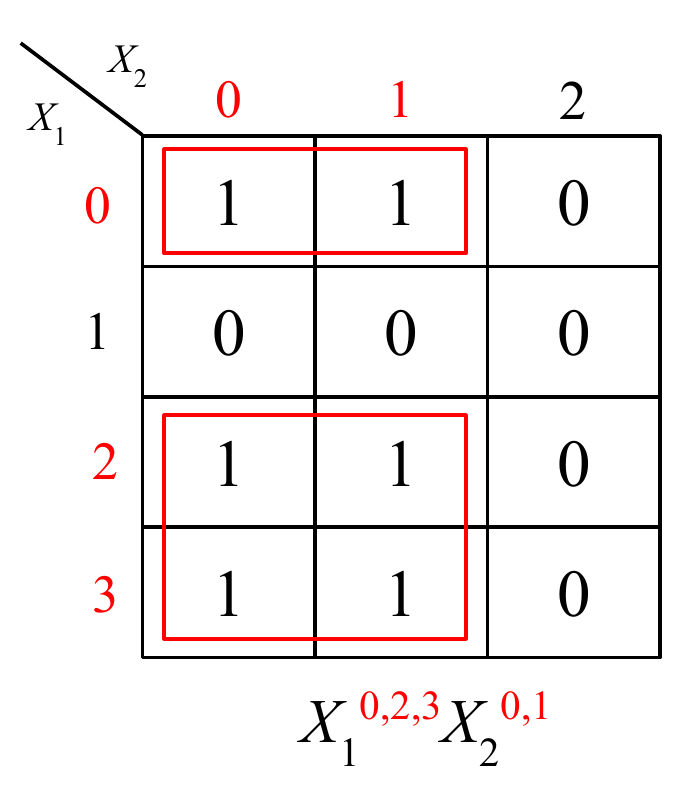}
            \caption{Marquand chart for function $F_1$.}
            \label{fig:ex4-f-chart}
        \end{subfigure}
        \begin{subfigure}{0.4\linewidth}
            \includegraphics[width=\linewidth]{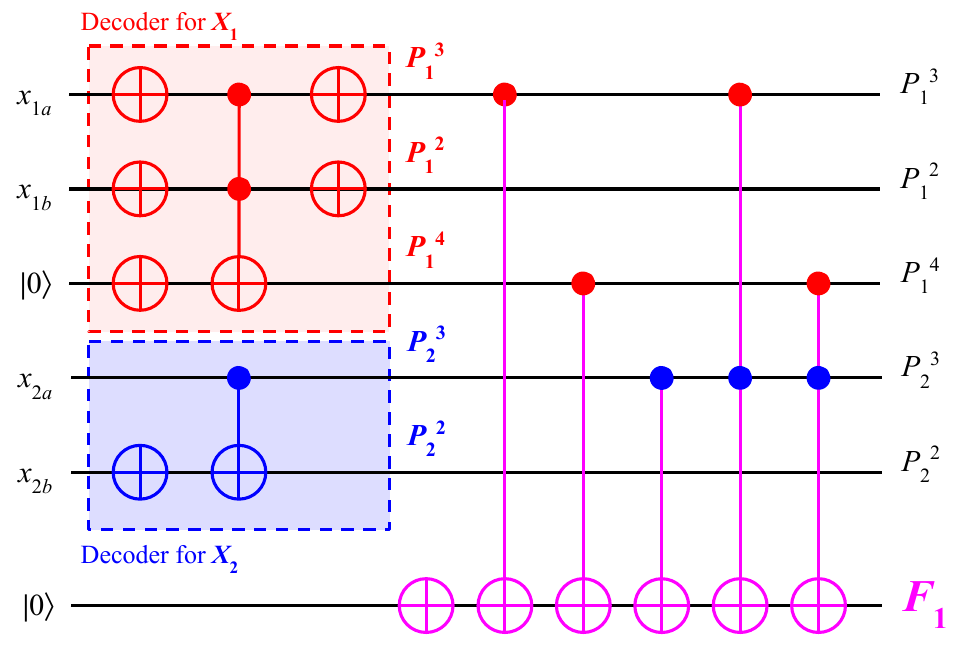}
            \caption{Circuit for $F_1$.}
            \label{fig:ex4-circuit}
        \end{subfigure}
        \caption{Marquand chart, decoders for $P_1$ and $P_2$, and circuit for $F_1$ from Example~\ref{ex:fprm spectrum}.}
    \end{figure}

    Fig.~\ref{fig:ex4-x1-decoder} shows the decoder for $P_1$ (with $X_1$ encoded by binary variables $x_{1a}$ and $x_{1b}$) and Fig.~\ref{fig:ex4-x2-decoder} shows the decoder for $P_2$ (with $X_2$ encoded by binary variables $x_{2a}$ and $x_{2b}$). Realizing $F_1$ as a binary quantum circuit leads to the circuit shown in Fig.~\ref{fig:ex4-circuit}, which uses the results of the decoders (with $P_1^1$ and $P_2^1$ omitted because they both equal 1) to calculate $F_1$. It can be noted that the bigger the circuit is, the more negligible the cost of the decoders becomes.
\end{example}

So far, the examples have covered functions with only one product term. However, these methods can also apply to functions with multiple terms, as demonstrated in Example~\ref{ex:fprm two terms}.

\begin{example}
    \label{ex:fprm two terms}
    The MVI-FPRM form of the function $F_2 = X_1^{0,2,3}X_2^{0,1} \oplus X_1^0X_2^2$ with the same polarities from Examples~\ref{ex:fprm} and \ref{ex:fprm spectrum} can be calculated by finding the XOR of the MVI-FPRM forms for the terms $X_1^{0,2,3}X_2^{0,1}$ and $X_1^0X_2^2$.

    The Marquand chart for $F_2$ is shown in Fig.~\ref{fig:ex5-f-chart}.

    \begin{figure}[!htb]
        \centering
        \includegraphics[width=0.25\linewidth]{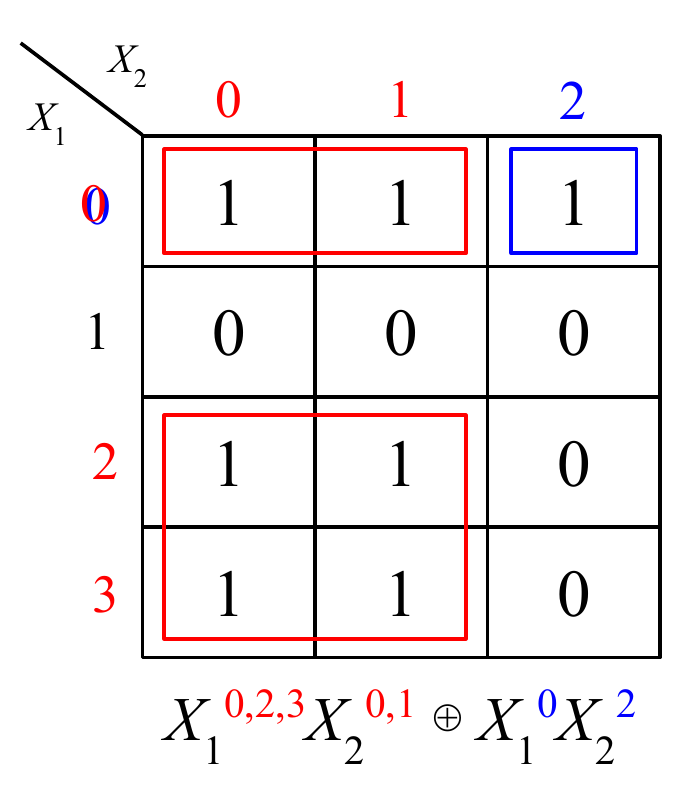}
        \caption{Marquand chart for function $F_2$ from Example~\ref{ex:fprm two terms}.}
        \label{fig:ex5-f-chart}
    \end{figure}

    In Examples~\ref{ex:fprm} and \ref{ex:fprm spectrum}, it was found and verified that the MVI-FPRM form of the first term ($F_1$) is 
    \begin{align*}
        X_1^{0,2,3}X_2^{0,1}
        &= P_1^1P_2^1 \oplus P_1^3P_2^1 \oplus P_1^4P_2^1 \oplus P_1^1P_2^3 \oplus P_1^3P_2^3 \oplus P_1^4P_2^3 \\
        &= 1 \oplus P_1^3 \oplus P_1^4 \oplus P_2^3 \oplus P_1^3P_2^3 \oplus P_1^4P_2^3.
    \end{align*}

    The MVI-FPRM form of the second term, $X_1^0X_2^2$, in the function can be found to be
    \begin{align*}
        X_1^0X_2^2
        &= P_1^1P_2^3 \oplus P_1^4P_2^3 \\
        &= P_2^3 \oplus P_1^4P_2^3.
    \end{align*}
    This is verified algebraically as follows:
    \begin{align*}
        P_1^1P_2^3 \oplus P_1^4P_2^3 
        &= (P_1^1 \oplus P_1^4)P_2^3 \\
        &= (X_1^{0,1,2,3} \oplus X^{1,2,3})X_2^2 \\
        &= X_1^0X_2^2.
    \end{align*}

    The MVI-FPRM form for the whole function $F_2$ can be found by using spectral coefficients or algebraically. Both methods will be done in this example, starting with spectral coefficients. 
    
    The spectral coefficients for the terms $X_1^{0,2,3}X_2^{0,1}$ and $X_1^0X_2^2$ are listed in Table~\ref{tab:ex5-spectral-coeff}.

    \begin{table}[!htb]
        \centering
        \renewcommand{\arraystretch}{1.3}
        \resizebox{\textwidth}{!}{%
        \begin{tabular}{l|ccc ccc ccc ccc}
            Coefficient 
            & $M_{P_1^1P_2^1}$ & $M_{P_1^1P_2^2}$ & $M_{P_1^1P_2^3}$ 
            & $M_{P_1^2P_2^1}$ & $M_{P_1^2P_2^2}$ & $M_{P_1^2P_2^3}$ 
            & $M_{P_1^3P_2^1}$ & $M_{P_1^3P_2^2}$ & $M_{P_1^3P_2^3}$
            & $M_{P_1^4P_2^1}$ & $M_{P_1^4P_2^2}$ & $M_{P_1^4P_2^3}$
            \\[2.5px]
            \hline
            Value for $X_1^{0,2,3}X_2^{0,1}$
            & 1 & 0 & 1
            & 0 & 0 & 0
            & 1 & 0 & 1
            & 1 & 0 & 1
            \\
            Value for $X_1^0X_2^2$
            & 0 & 0 & 1
            & 0 & 0 & 0
            & 0 & 0 & 0
            & 0 & 0 & 1
            \\
        \end{tabular}}
        \caption{Spectrum of the functions $X_1^{0,2,3}X_2^{0,1}$ and $X_1^0X_2^2$ for Example~\ref{ex:fprm two terms}.}
        \label{tab:ex5-spectral-coeff}
    \end{table}
    
    The spectral coefficients for $F_2$ can be calculated using the spectral coefficients from its terms $X_1^{0,2,3}X_2^{0,1}$ and $X_1^0X_2^2$.
    
    This process is expressed in Table~\ref{tab:ex5-spectral-coeff-xor}, where the XOR of the spectral coefficients for $X_1^{0,2,3}X_2^{0,1}$ and $X_1^0X_2^2$ in each column is calculated, and the result is the spectral coefficient for $F$.

    \begin{table}[!htb]
        \centering
        \renewcommand{\arraystretch}{1.3}
        \resizebox{\textwidth}{!}{%
        \begin{tabular}{l|ccc ccc ccc ccc}
            Coefficient 
            & $M_{P_1^1P_2^1}$ & $M_{P_1^1P_2^2}$ & $M_{P_1^1P_2^3}$ 
            & $M_{P_1^2P_2^1}$ & $M_{P_1^2P_2^2}$ & $M_{P_1^2P_2^3}$ 
            & $M_{P_1^3P_2^1}$ & $M_{P_1^3P_2^2}$ & $M_{P_1^3P_2^3}$
            & $M_{P_1^4P_2^1}$ & $M_{P_1^4P_2^2}$ & $M_{P_1^4P_2^3}$
            \\[2.5px]
            \hline
            Value for $X_1^{0,2,3}X_2^{0,1}$
            & 1 & 0 & 1
            & 0 & 0 & 0
            & 1 & 0 & 1
            & 1 & 0 & 1
            \\
            Value for $X_1^0X_2^2$
            & 0 & 0 & 1
            & 0 & 0 & 0
            & 0 & 0 & 0
            & 0 & 0 & 1
            \\
            \hline
            Value for $F$
            & 1 & 0 & 0
            & 0 & 0 & 0
            & 1 & 0 & 1
            & 1 & 0 & 0
        \end{tabular}}
        \caption{Spectrum of the function $F_2$ from Example~\ref{ex:fprm two terms}.}
        \label{tab:ex5-spectral-coeff-xor}
    \end{table}

    From the resulting spectral coefficients, it can be found that 
    \begin{align*}
        F_2
        &= P_1^1P_2^1 \oplus P_1^3P_2^1 \oplus P_1^3P_2^3 \oplus P_1^4P_2^1 \\
        &= 1 \oplus P_1^3 \oplus P_1^3P_2^3 \oplus P_1^4 \\ 
        &= 1 \oplus X_1^{2,3} \oplus X_1^{2,3}X_2^2 \oplus X_1^{1,2,3}.
    \end{align*}

    Calculating the MVI-FPRM form of $F_2$ algebraically yields the same results, as done below:
    \begin{align*}
        F_2
        &= X_1^{0,2,3}X_2^{0,1} \oplus X_1^0X_2^2 \\
        &= (1 \oplus P_1^3 \oplus P_1^4 \oplus P_2^3 \oplus P_1^3P_2^3 \oplus P_1^4P_2^3) \oplus (P_2^3 \oplus P_1^4P_2^3) \\
        &= 1 \oplus P_1^3 \oplus P_1^4 \oplus \textcolor{red}{P_2^3} \oplus P_1^3P_2^3 \oplus \textcolor{blue}{P_1^4P_2^3} \oplus \textcolor{red}{P_2^3} \oplus \textcolor{blue}{P_1^4P_2^3} \\ 
        &= 1 \oplus P_1^3 \oplus P_1^4 \oplus P_1^3P_2^3.
    \end{align*}

    The function $F_2$ can then be realized as a circuit, as shown in Fig.~\ref{fig:ex5-circuit-directly}. 
    
    \begin{figure}[!htb]
        \centering
        \includegraphics[width=0.4\linewidth]{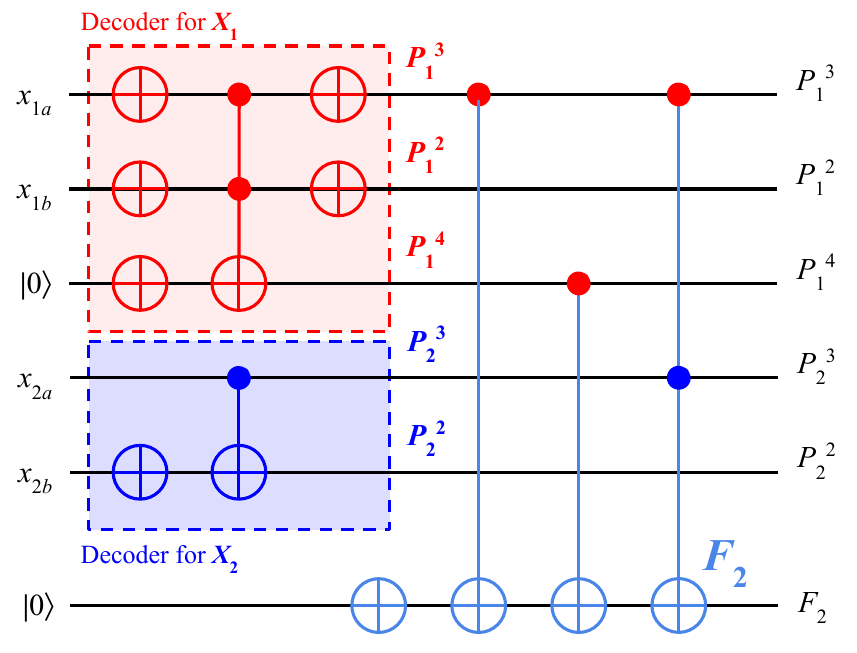}
        \caption{Circuit for $F_2$ realized with decoders from the MVI-FPRM form with polarity $P_1$,$P_2$ for Example~\ref{ex:fprm two terms}.}
        \label{fig:ex5-circuit-directly}
    \end{figure}
    
    The circuit in Fig.~\ref{fig:ex5-circuit-directly} has 7 NOT gates, 3 CNOT gates, and 2 3-bit Toffoli Gates. So, the Maslov cost is 20, and the TQC is 157.
\end{example}

The function $F_2$, from Example~\ref{ex:fprm two terms}, was expressed as $X_1^{0,2,3}X_2^{0,1} \oplus X_1^0X_2^2$. However, functions can be expressed in multiple ways, but the MVI-FPRM is unique for a function. This is shown in Example~\ref{ex:fprm two terms-changed terms}.

\begin{example}
    \label{ex:fprm two terms-changed terms}
    Take the function $F_2$ from Example~\ref{ex:fprm two terms} with polarities $P_1$ and $P_2$.

    The Marquand chart of $F_2$ was drawn in Fig~\ref{fig:ex5-f-chart}. The chart from this figure showed the groupings for the terms $X_1^{0,2,3}X_2^{0,1}$ and $X_1^0X_2^2$, which were the terms used to represent $F_2$ in Example~\ref{ex:fprm two terms}.

    However, there are more possible groupings. One such grouping is shown in Fig.~\ref{fig:ex6-f-chart}.

    \begin{figure}[!htb]
        \centering
        \includegraphics[width=0.25\linewidth]{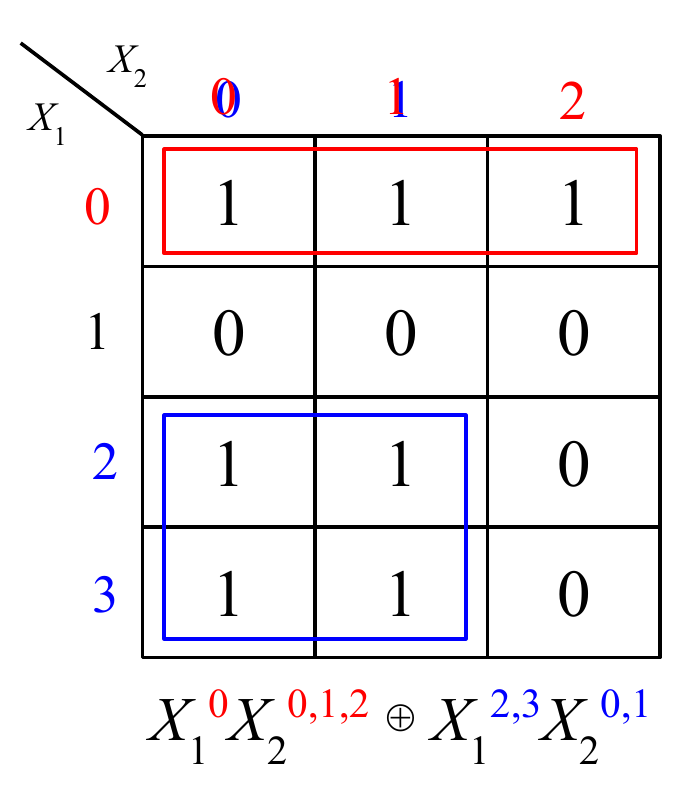}
        \caption{Marquand chart of $F_2$ with a different grouping from Example~\ref{ex:fprm two terms-changed terms}.}
        \label{fig:ex6-f-chart}
    \end{figure}

    The \textcolor{red}{red} group corresponds to the term $X_1^{\textcolor{red}{0}}X_2^{\textcolor{red}{0,1,2}}=X_1^0$, and the \textcolor{blue}{blue} group corresponds to the term $X_1^{\textcolor{blue}{2,3}}X_2^{\textcolor{blue}{0,1}}$. So $$F_2 = X_1^0 \oplus X_1^{2,3}X_2^{0,1}.$$

    The MVI-FPRM form of the first term, $X_1^0$, is 
    \begin{align*}
        X_1^0
        &= X_1^{0,1,2,3} \oplus X_1^{1,2,3} \\
        &= P_1^1 \oplus P_1^4 \\
        &= 1 \oplus P_1^4.
    \end{align*}

    And the MVI-FPRM form of the second term, $X_1^{2,3}X_2^{0,1}$ is 
    \begin{align*}
        X_1^{2,3}X_2^{0,1}
        &= X_1^{2,3}(X_2^{0,1,2} \oplus X_2^2) \\
        &= P_1^3(P_2^1 \oplus P_2^3) \\
        &= P_1^3(1 \oplus P_2^3) \\
        &= P_1^3 \oplus P_1^3P_2^3.
    \end{align*}

    Combining the MVI-FPRM forms of the two terms gives the same form for $F_2$ that was found in Example~\ref{ex:fprm two terms}: $1 \oplus P_1^3 \oplus P_1^4 \oplus P_1^3P_2^3$.

    \begin{figure}[!htb]
        \centering
        \includegraphics[width=0.6\linewidth]{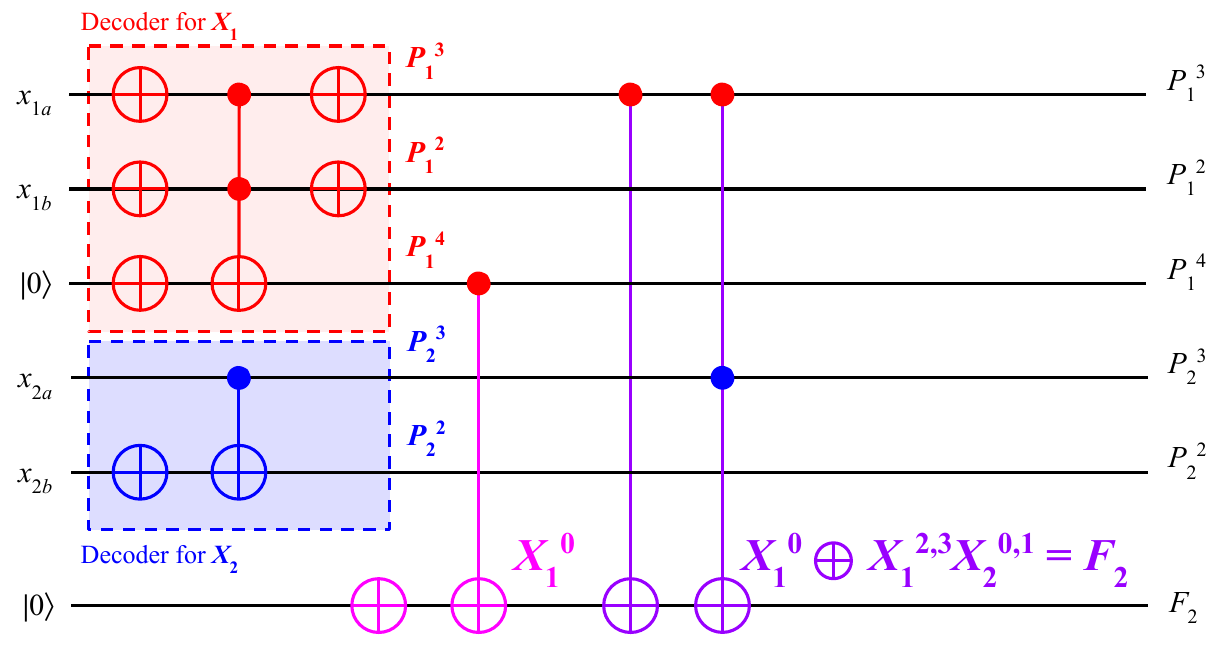}
        \caption{Circuit for $F_2$ realized from the terms $X_1^0$ and $X_1^{2,3}X_2^{0,1}$ from Example~\ref{ex:fprm two terms-changed terms}.}
        \label{fig:ex6-circuit}
    \end{figure}
    
    Constructing the circuit for $F_2$, shown in Fig.~\ref{fig:ex6-circuit}, consists of 7 NOT gates, 3 CNOT gates, and 2 3-bit Toffoli gates and has a cost of 20 (Maslov) and 157 (TQC), which is the same as the circuit from Fig.~\ref{fig:ex5-circuit-directly}. This example shows how the MVI-FPRM is a canonical form.
\end{example}

An important factor for the cost of the circuit is the polarity, which is demonstrated in Example~\ref{ex:fprm two terms-changed polarity}.

\begin{example}
    \label{ex:fprm two terms-changed polarity}
    Take the function $F_2 = X_1^{0,2,3}X_2^{0,1} \oplus X_1^0X_2^2 = X_1^0 \oplus X_1^{2,3}X_2^{0,1}$ from Example~\ref{ex:fprm two terms}. Let's find the MVI-FPRM form of $F_2$ with the polarities: 
    $$
    Q_1=
    \begin{bmatrix}
        1 & 1 & 1 & 1 \\
        1 & 0 & 0 & 0 \\
        0 & 1 & 1 & 0 \\
        0 & 0 & 1 & 1
    \end{bmatrix}
    , \
    Q_2=
    \begin{bmatrix}
        1 & 1 & 1 \\
        1 & 1 & 0 \\
        1 & 0 & 1
    \end{bmatrix}
    .
    $$

    The polarity literals for $X_1$ are $$Q_1^1=X_1^{0,1,2,3}=1, \ Q_1^2=X_1^0, Q_1^3=X_1^{1,2}, \ Q_1^4=X_1^{2,3}.$$ And the polarity literals for $X_2$ are $$Q_2^1=X_2^{0,1,2}=1, \ Q_2^2=X_2^{0,1}, \ Q_2^3=X_2^{0,2}.$$
    
    It can be found that the MVI-FPRM form of $F_2$ is 
    \begin{align*}
        F_2
        &= X_1^0 \oplus X_1^{2,3}X_2^{0,1} \\
        &= Q_1^2 \oplus Q_1^4Q_2^2. \\
    \end{align*}

    \begin{figure}[tbp]
        \centering
        \begin{subfigure}{0.35\linewidth}
            \centering
            \includegraphics[width=\linewidth]{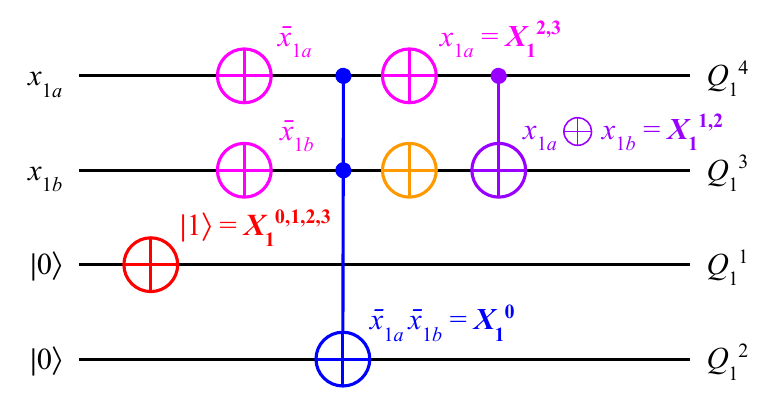}
            \caption{Decoder for $Q_1$.}
            \label{fig:ex7-x1-decoder}
        \end{subfigure}
        \begin{subfigure}{0.19\linewidth}
            \centering
            \includegraphics[width=\linewidth]{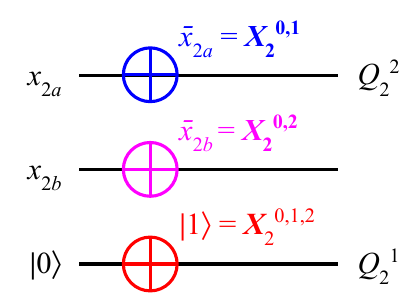}
            \caption{Decoder for $Q_2$.}
            \label{fig:ex7-x2-decoder}
        \end{subfigure}
        \\
        \begin{subfigure}{0.4\linewidth}
            \centering
            \includegraphics[width=\linewidth]{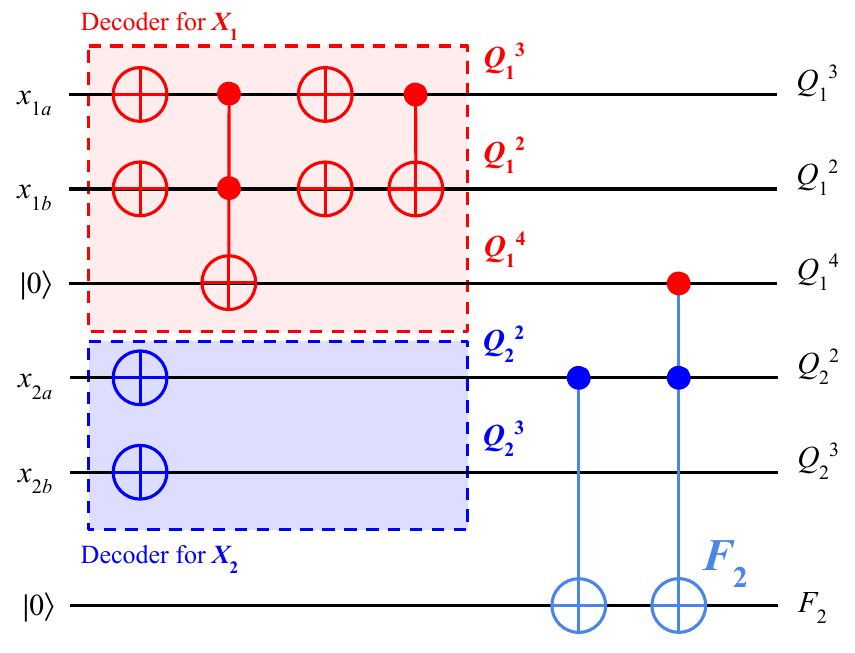}
            \caption{Circuit for $F_2$.}
            \label{fig:ex7-circuit}
        \end{subfigure}
        \caption{Decoders for $Q_1$ and $Q_2$, and circuit for $F_2$, for Example~\ref{ex:fprm two terms-changed polarity}.}
    \end{figure}

    The decoder for $X_1$ with polarity $Q_1$ is realized in Fig.~\ref{fig:ex7-x1-decoder}, the decoder for $X_2$ with polarity $Q_2$ is realized in Fig.~\ref{fig:ex7-x2-decoder}, and the full circuit realization of $F_2$ is shown in Fig.~\ref{fig:ex7-circuit}.

    The resulting circuit has 6 NOT gates, 2 CNOT gates, and 2 3-bit Toffoli gates with a cost of 18 (Maslov) or 142 (TQC).

    This shows that for the function $F_2$, the polarities $Q_1$ and $Q_2$ lead to a less costly circuit than the polarities $P_1$ and $P_2$. Thus, it is important to take into account what polarities to use depending on the function.
\end{example}

All the previous examples are single-output; however, this method is not restricted to single-output functions. This is presented in Example~\ref{ex:fprm multi-output}.

\begin{example}
    \label{ex:fprm multi-output}
    Let $F$ be a multi-output function which outputs $F_1$ (from Examples~\ref{ex:fprm} and \ref{ex:fprm spectrum}) and $F_2$ (from Example~\ref{ex:fprm two terms}) with polarities $Q_1$ and $Q_2$.

    The MVI-FPRM form of the function $F_1$ is
    \begin{align*}
        F_1
        &= X_1^{0,2,3}X_2^{0,1} \\
        &= (X_1^0 \oplus X_1^{2,3})(X_2^{0,1}) \\
        &= (Q_1^2 \oplus Q_1^4)(Q_2^2) \\
        &= Q_1^2Q_2^2 \oplus Q_1^4Q_2^2.
    \end{align*}

    The MVI-FPRM form of $F_2$ was found in Example~\ref{ex:fprm two terms-changed polarity} to be
    $$F_2 = Q_1^2 \oplus Q_1^4Q_2^2.$$

    The spectral coefficients of the two functions are listed in Table~\ref{tab:ex8-spectral-coeff}.

    \begin{table}[!htb]
        \centering
        \renewcommand{\arraystretch}{1.3}
        \resizebox{\textwidth}{!}{%
        \begin{tabular}{l|ccc ccc ccc ccc}
            Coefficient 
            & $M_{Q_1^1Q_2^1}$ & $M_{Q_1^1Q_2^2}$ & $M_{Q_1^1Q_2^3}$ 
            & $M_{Q_1^2Q_2^1}$ & $M_{Q_1^2Q_2^2}$ & $M_{Q_1^2Q_2^3}$ 
            & $M_{Q_1^3Q_2^1}$ & $M_{Q_1^3Q_2^2}$ & $M_{Q_1^3Q_2^3}$
            & $M_{Q_1^4Q_2^1}$ & $M_{Q_1^4Q_2^2}$ & $M_{Q_1^4Q_2^3}$
            \\[2.5px]
            \hline
            Value for \textcolor{magenta}{$F_1$}
            & 0 & 0 & 0
            & 0 & \textcolor{red}{1} & 0
            & 0 & 0 & 0
            & 0 & \textbf{\textcolor{magenta}{1}} & 0
            \\
            Value for \textcolor{brown}{$F_2$}
            & 0 & 0 & 0
            & \textcolor{brown}{1} & 0 & 0
            & 0 & 0 & 0
            & 0 & \textbf{\textcolor{brown}{1}} & 0
            \\
        \end{tabular}}
        \caption{Spectrum of the functions \textcolor{magenta}{$F_1$} and \textcolor{brown}{$F_2$} with polarities $Q_1$ and $Q_2$ for Example~\ref{ex:fprm multi-output}.}
        \label{tab:ex8-spectral-coeff}
    \end{table}

    The multi-output function $F$ can be realized as a circuit by first computing the shared terms, in this case the term $Q_1^4Q_2^2$, and then computing the remaining individual output functions, as illustrated in Fig.~\ref{fig:ex8-circuit}. Note that the term $Q_1^4Q_2^2$ is shared between $F_1$ and $F_2$.

    \begin{figure}[!htb]
        \centering
        \includegraphics[width=0.4\linewidth]{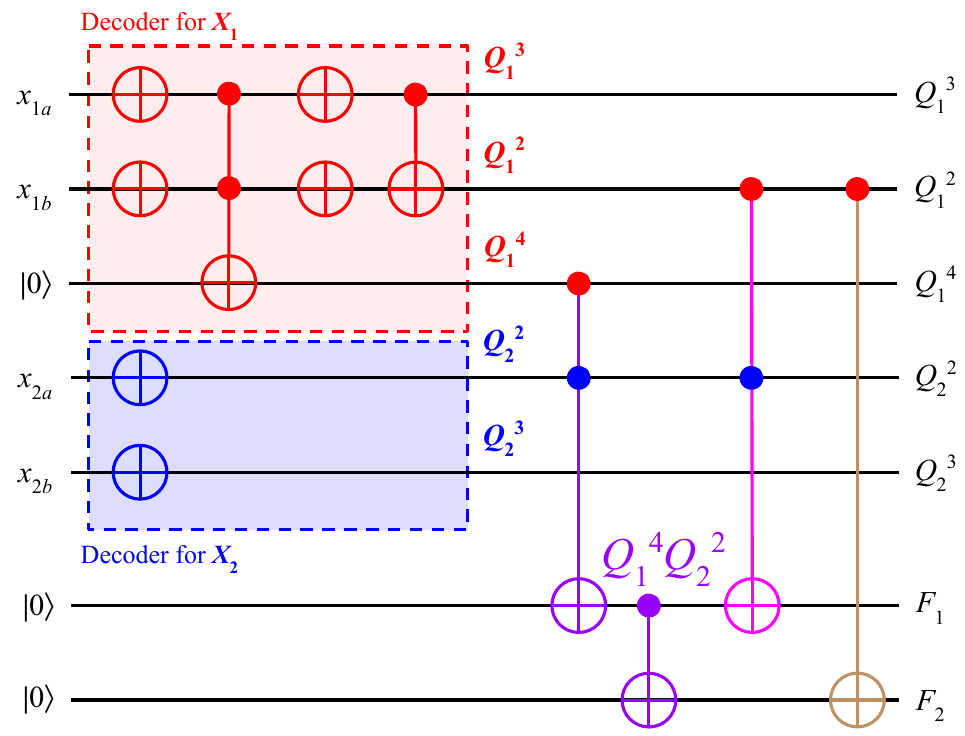}
        \caption{Circuit for $F$ with outputs $F_1$ and $F_2$ from Example~\ref{ex:fprm multi-output}.}
        \label{fig:ex8-circuit}
    \end{figure}
\end{example}

These methods can be applied to more complicated functions, such as the one in Example~\ref{ex:fprm three variables}. 

\begin{example}
    \label{ex:fprm three variables}
    Let $F_3=X_4^1X_5^{0,2} \oplus X_3^{1,2}X_4^{0,1}X_5^0 \oplus X_4^{0,2}X_5^{1,2} \oplus X_3^2X_4^1X_5^1$, where $X_3$, $X_4$, and $X_5$ are ternary variables. 
    
    Let the polarities be
    $$
    P_3=
    \begin{bmatrix}
        1 & 1 & 1 \\
        1 & 0 & 1 \\
        0 & 1 & 1 \\
    \end{bmatrix}
    , \
    P_4=
    \begin{bmatrix}
        1 & 1 & 1 \\
        1 & 1 & 0 \\
        0 & 1 & 0
    \end{bmatrix}
    , \
    P_5=
    \begin{bmatrix}
        1 & 1 & 1 \\
        1 & 1 & 0 \\
        0 & 1 & 1
    \end{bmatrix}
    .
    $$
    
    Thus, the polarity literals for $X_3$ are $$P_3^1=X_3^{0,1,2}=1, \ P_3^2=X_3^{0,2}, \ P_3^3=X_3^{1,2}.$$
    The polarity literals for $X_4$ are $$P_4^1=X_4^{0,1,2}=1, \ P_4^2=X_4^{0,1}, \ P_4^3=X_4^1.$$
    The polarity literals for $X_5$ are $$P_5^1=X_5^{0,1,2}=1, \ P_5^2=X_5^{0,1}, \ P_5^3=X_5^{1,2}.$$

    The MVI-FPRM form of the function $F_3$ can be found to be
    \begin{align*}
        F_3 
        &= P_3^3P_4^2 \oplus P_3^2P_5^3 \oplus P_4^3P_5^2 \\
        &= X_3^{1,2}X_4^{0,1} \oplus X_3^{0,2}X_5^{1,2} \oplus X_4^1X_5^{0,1}
    \end{align*}

    \begin{figure}[!htb]
        \centering
        \begin{subfigure}[!h]{0.21\linewidth}
            \centering
            \includegraphics[width=\linewidth]{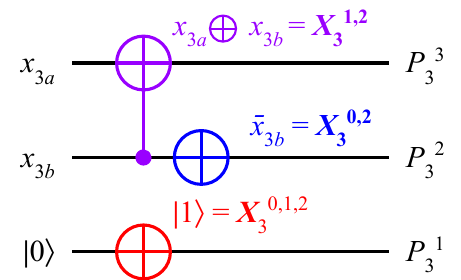}
            \caption{$P_3$.}
            \label{fig:ex10-x3-decoder}
        \end{subfigure}
        \begin{subfigure}[!h]{0.17\linewidth}
            \centering
            \includegraphics[width=\linewidth]{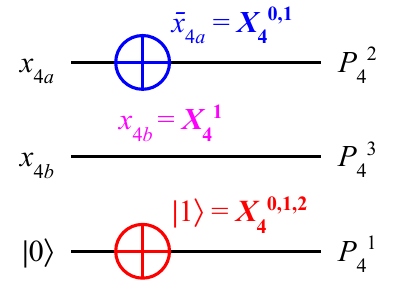}
            \caption{$P_4$.}
            \label{fig:ex10-x4-decoder}
        \end{subfigure}
        \begin{subfigure}[!h]{0.2\linewidth}
            \centering
            \includegraphics[width=\linewidth]{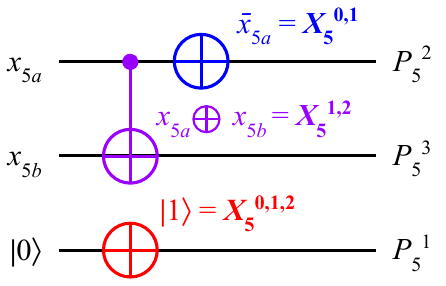}
            \caption{$P_5$.}
            \label{fig:ex10-x5-decoder}
        \end{subfigure}
        \caption{Decoders for $P_3$, $P_4$, and $P_5$ from Example~\ref{ex:fprm three variables}.}
    \end{figure}
    
    \begin{figure}[!htb]
    \centering
        \begin{subfigure}{0.3\linewidth}
            \centering
            \includegraphics[width=\linewidth]{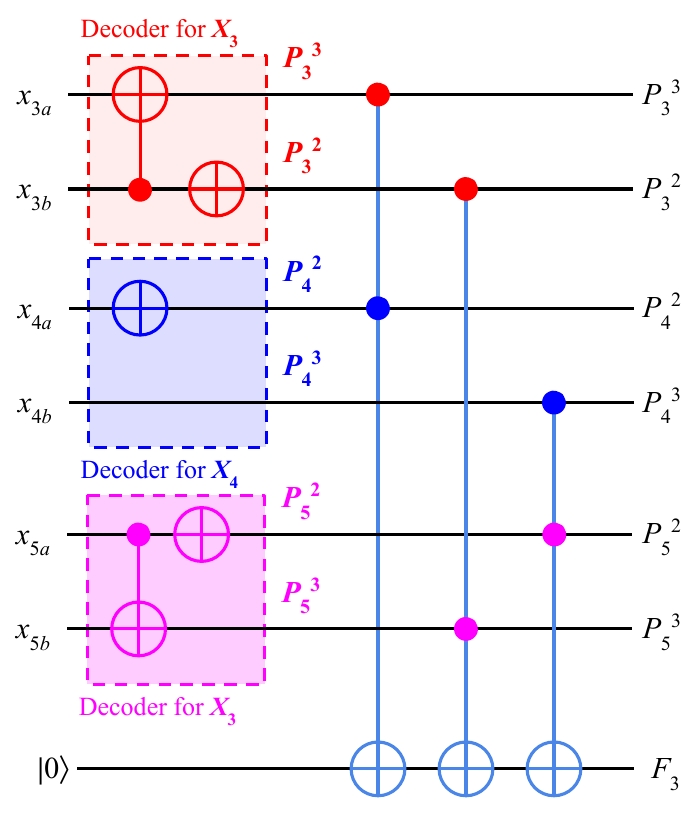}
            \caption{Circuit for $F_3$ based on MVI-FPRM with decoders.}
            \label{fig:ex10-circuit}
        \end{subfigure}
        \begin{subfigure}{0.34\linewidth}
            \centering
            \includegraphics[width=.6\linewidth]{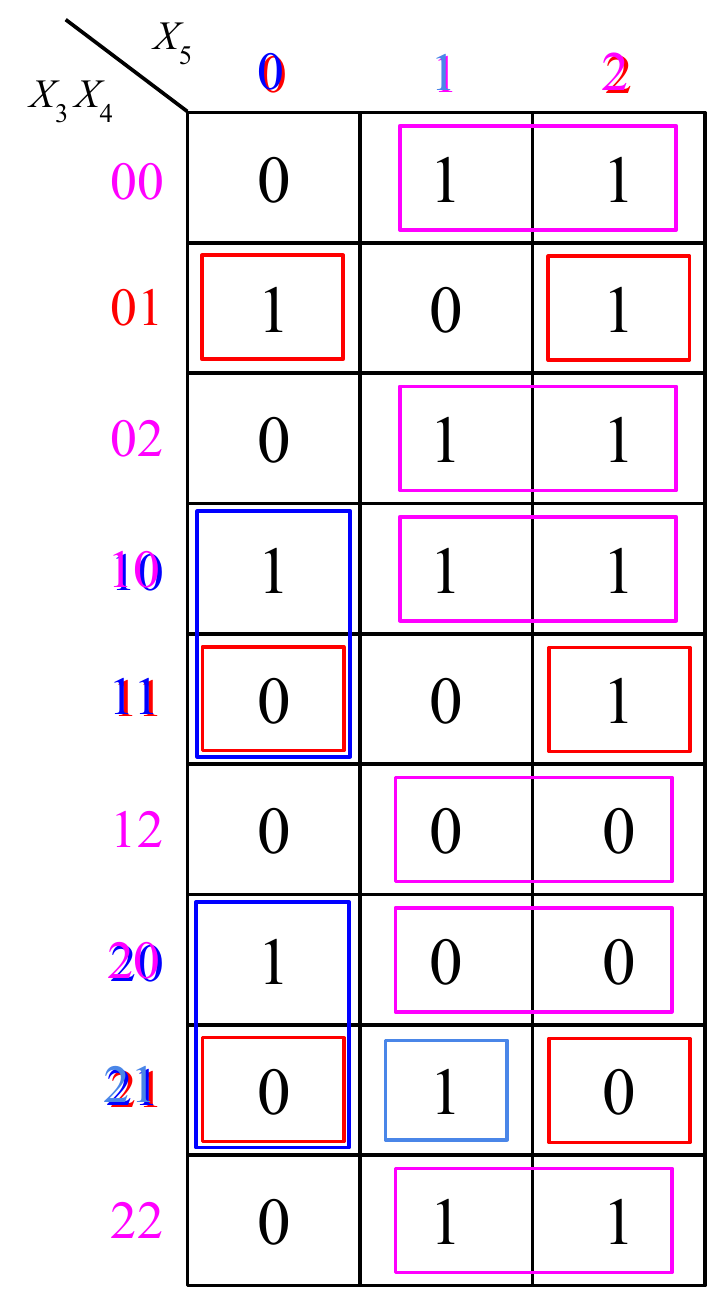}
            \caption{Marquand Chart from the original expression: $X_4^{\textcolor{red}{1}}X_5^{\textcolor{red}{0,2}} \oplus X_3^{\textcolor{blue}{1,2}}X_4^{\textcolor{blue}{0,1}}X_5^{\textcolor{blue}{0}} \oplus  X_4^{\textcolor{magenta}{0,2}}X_5^{\textcolor{magenta}{1,2}} \oplus X_3^{\textcolor{cornflowerblue}{2}}X_4^{\textcolor{cornflowerblue}{1}}X_5^{\textcolor{cornflowerblue}{1}}$.}
            \label{fig:ex10-marquand}
        \end{subfigure}
        \begin{subfigure}{0.34\linewidth}
            \centering
            \includegraphics[width=0.6\linewidth]{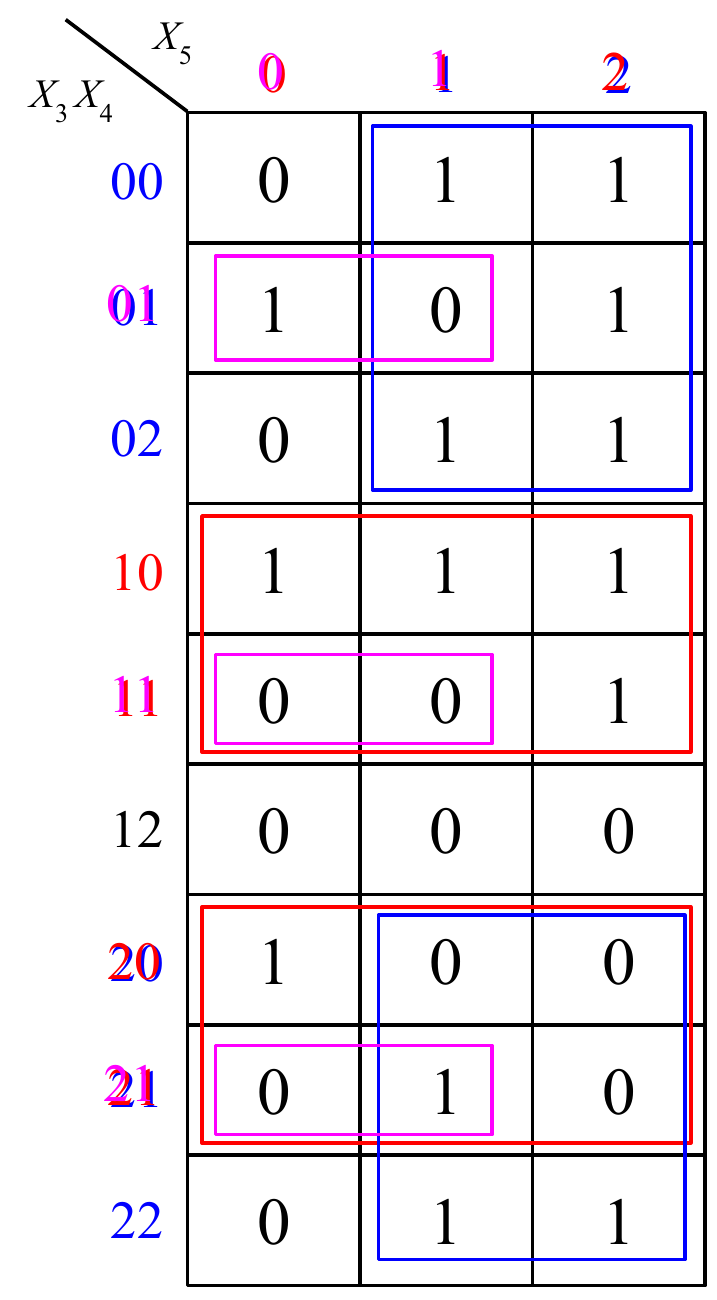}
            \caption{Marquand Chart from the MVI-FPRM: $X_3^{\textcolor{red}{1,2}}X_4^{\textcolor{red}{0,1}} \oplus X_3^{\textcolor{blue}{0,2}}X_5^{\textcolor{blue}{1,2}} \oplus X_4^{\textcolor{magenta}{1}}X_5^{\textcolor{magenta}{0,1}}$.}
            \label{fig:ex10-f-chart}
        \end{subfigure}
        \caption{Circuit and Marquand chart for $F_3$ from Example~\ref{ex:fprm three variables}.}
    \end{figure}

    The decoders for $P_3$, $P_4$, and $P_5$ are realized in Figs.~\ref{fig:ex10-x3-decoder}, \ref{fig:ex10-x4-decoder}, and \ref{fig:ex10-x5-decoder}. The full circuit for $F_3$ is illustrated in Fig.~\ref{fig:ex10-circuit}. The function $F_3$ is shown on the Marquand charts in Fig.~\ref{fig:ex10-marquand} and Fig.~\ref{fig:ex10-f-chart}. The rows and columns of a Marquand chart for over two variables are created by listing the possible values of $X_3$,$X_4$,$X_5$ in natural order.

    The resulting circuit, which contains 2 NOT gates, 2 CNOT gates, and 3 3-bit Toffoli gates, has a Maslov cost of 19 and TQC of 192.

    If, instead, the circuit was realized directly from the binary ESOP expression, then the result would be the circuit in Fig.~\ref{fig:ex10-circuit-esop}, which contains 3 NOT gates, 2 CNOT gates, 1 3-bit Toffoli gate, and 5 4-bit Toffoli gates. This leads to a Maslov cost of 75 and and TQC 630, which is more expensive than the MVI-FPRM-based circuit. 
    \begin{align*}
        F_3
        &= X_4^1X_5^{0,2} \oplus X_3^{1,2}X_4^{0,1}X_5^0 \oplus X_4^{0,2}X_5^{1,2} \oplus X_3^2X_4^1X_5^1 \\
        &= (x_{4b})(\bar{x}_{5b}) \oplus (x_{3a} \oplus x_{3b})(\bar{x}_{4a})(\bar{x}_{5a} \oplus x_{5b}) 
        \oplus (\bar{x}_{4b})(x_{5a} \oplus x_{5b}) \oplus (x_{3a})(x_{4b})(x_{5b}) \\
        &= x_{4b}\bar{x}_{5b} \oplus
        x_{3a}\bar{x}_{4a}\bar{x}_{5a} \oplus x_{3a}\bar{x}_{4a}x_{5b} \oplus
        x_{3b}\bar{x}_{4a}\bar{x}_{5a} \oplus x_{3b}\bar{x}_{4a}x_{5b}
        \oplus \bar{x}_{4b}x_{5a} \oplus \bar{x}_{4b}x_{5b} \oplus x_{3a}x_{4b}x_{5b} \\
        &= x_{4b} \oplus x_{5b} \oplus
        x_{3a}\bar{x}_{4a}\bar{x}_{5a} \oplus x_{3a}\bar{x}_{4a}x_{5b} \oplus
        x_{3b}\bar{x}_{4a}\bar{x}_{5a} \oplus x_{3b}\bar{x}_{4a}x_{5b}
        \oplus \bar{x}_{4b}x_{5a} \oplus x_{3a}x_{4b}x_{5b}
    \end{align*}

    \begin{figure}[!htb]
        \centering
        \includegraphics[width=0.4\linewidth]{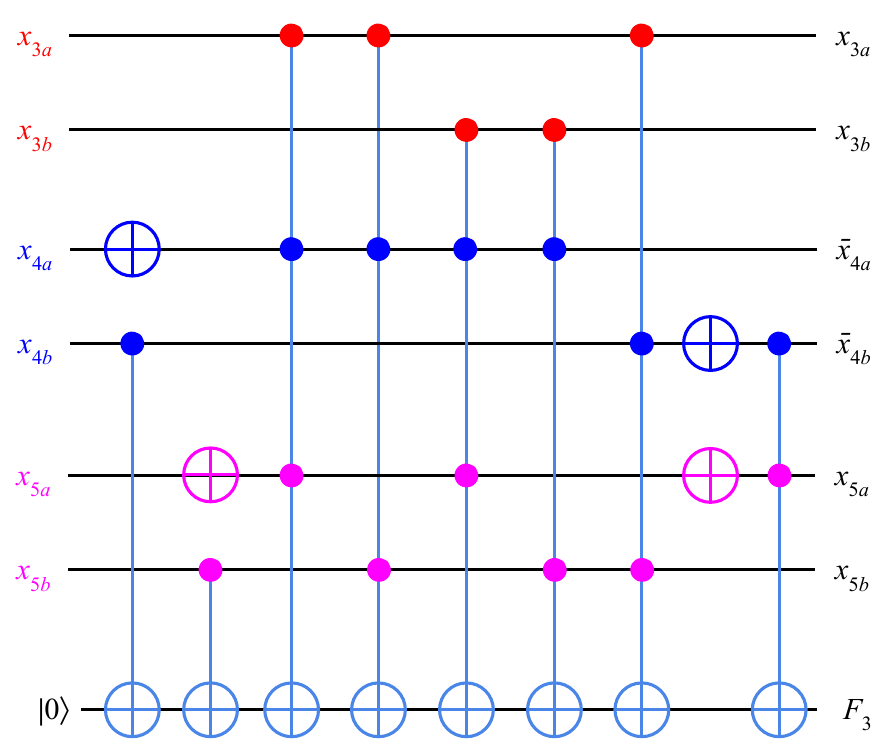}
        \caption{ESOP-based circuit for $F_3$ from Example~\ref{ex:fprm three variables}.}
        \label{fig:ex10-circuit-esop}
    \end{figure}
\end{example}

The full process starting from binary variables is demonstrated in Example~\ref{ex:fprm full}.

\begin{example}
    \label{ex:fprm full}
    Let $F_4=x_ax_b \oplus \bar{x}_ax_bx_d \oplus x_cx_d\bar{x}_e \oplus \bar{x}_c\bar{x}_d\bar{x}_f$, where $x_a$, $x_b$, $x_c$, $x_d$, $x_e$, and $x_f$ are binary variables.

    The first step is to pair the binary variables to create multi-valued variables. Let the quaternary variable $X_6$ be encoded by $x_a$ and $x_b$, let the quaternary variable $X_7$ be encoded by $x_c$ and $x_d$, and let the quaternary variable $X_8$ be encoded by $x_e$ and $x_f$.

    Let the polarity be $P_6$,$P_7$,$P_8$.
    \small{
    $$
    P_6=
    \begin{bmatrix}
        1 & 1 & 1 & 1 \\
        0 & 0 & 1 & 0 \\
        0 & 0 & 0 & 1 \\
        0 & 1 & 0 & 1
    \end{bmatrix}
    , \
    P_7=
    \begin{bmatrix}
        1 & 1 & 1 & 1 \\
        1 & 0 & 0 & 0 \\
        0 & 0 & 0 & 1 \\
        0 & 1 & 0 & 1
    \end{bmatrix}
    , \
    P_8=
    \begin{bmatrix}
        1 & 1 & 1 & 1 \\
        1 & 1 & 0 & 0 \\
        1 & 0 & 1 & 0 \\
        0 & 1 & 1 & 1
    \end{bmatrix}
    .
    $$
    }

    The next step is to rewrite $F_4$ in terms of the polarity literals (with $P_6^1=P_7^1=P_8^1=1$).
    \begin{align*}
        F_4
        &= x_ax_b \oplus \bar{x}_ax_bx_d \oplus x_cx_d\bar{x}_e \oplus \bar{x}_c\bar{x}_d\bar{x}_f \\
        &= X_6^3 \oplus X_6^2X_7^{1,3} \oplus X_7^3X_8^{0,1} \oplus X_7^0X_8^{0,2} \\
        &= P_6^3 \oplus P_6^2P_7^4 \oplus P_7^3P_8^2 \oplus P_7^2P_8^3
    \end{align*}

    \begin{figure}[!htb]
        \centering
        \begin{subfigure}{0.3\linewidth}
            \centering
            \includegraphics[width=\linewidth]{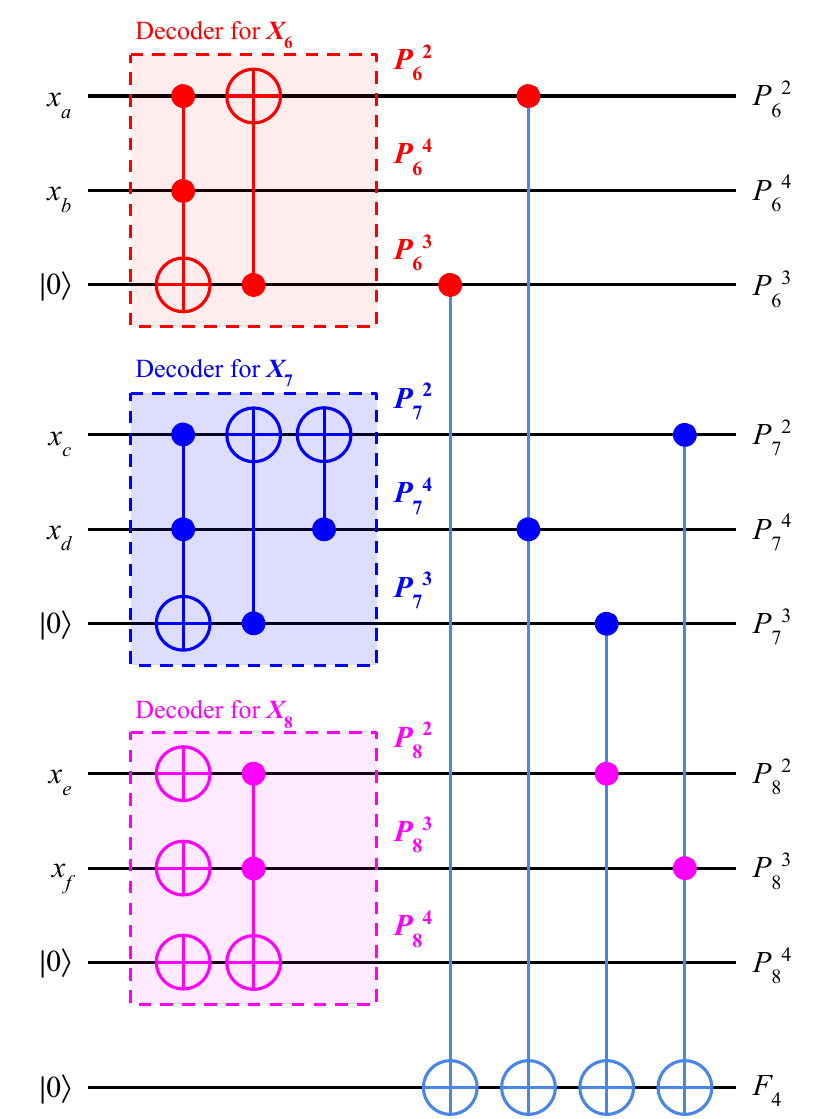}
            \caption{MVI-FPRM-based circuit.}
            \label{fig:ex11-circuit}
        \end{subfigure}
        \hspace{0.05\linewidth}
        \begin{subfigure}{0.3\linewidth}
            \centering
            \includegraphics[width=\linewidth]{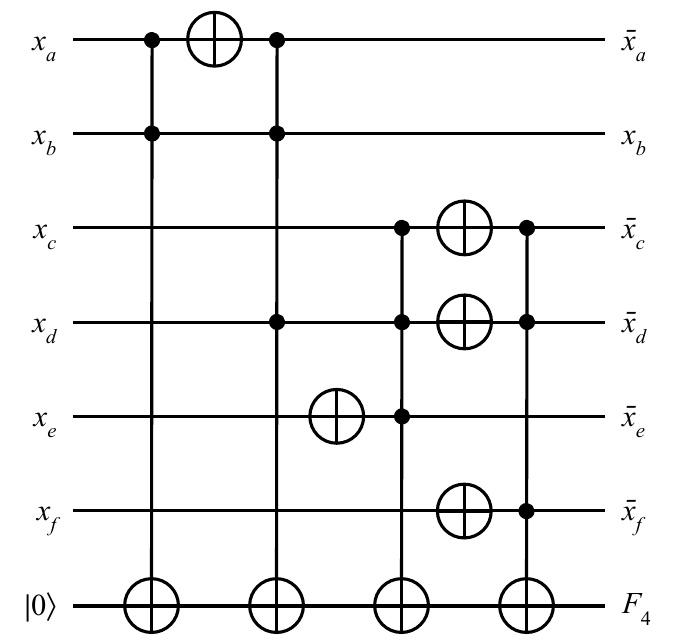}
            \caption{ESOP-based circuit.}
            \label{fig:ex11-circuit-esop}
        \end{subfigure}
        \caption{Circuits for $F_4$ from Example~\ref{ex:fprm full}.}
    \end{figure}

    The function $F_4$ is realized with the circuit in Fig.~\ref{fig:ex11-circuit}. This circuit consists of 3 NOT gates, 4 CNOT gates, and 6 3-bit Toffoli gates, and thus the circuit has a Maslov cost of 37 and a TQC of 383.

    If the function $F_4$ was realized using the ESOP instead, then the resulting circuit, shown in Fig.~\ref{fig:ex11-circuit-esop}, would instead consist of 5 NOT gates, 1 3-bit Toffoli gate, and 3 4-bit Toffoli gates, and thus the circuit has a Maslov cost of 49 and a TQC of 386, making it more expensive than the MVI-FPRM-based circuit.
    
\end{example}

In summary, MVI functions can be transformed into a form analogous to the FPRM form from classical Reed-Muller logic, where each variable has a fixed polarity. In the case of MVI variables, the polarities are not just positive and negative, and instead there are many more options, as explained in Section~\ref{mvi polarity}.

Polarities can be realized beforehand in a circuit by using decoders, as presented in Section~\ref{decoders}, which are placed at the start of a circuit, and the outputs, the polarity literals, can be used to realize all possible functions. The cost of the decoders becomes negligible as the size of the circuit increases.

The main factors, which have been demonstrated by the examples in this section, that determine the cost of realizing an FPRM, are:
\begin{itemize}
    \item Polarity: This affects the cost of the decoders and how many gates are needed to realize the terms. It is important to choose a suitable polarity for a function.
    \item Pairings: The specific binary variables that are paired to encode the multi-valued variables. There also may be more than two binary variables that encode a multi-valued variable (e.g., three binary variables can encode an 8-valued variable).
\end{itemize}

Comparisons between MVI-FPRM-based and ESOP-based circuits for $F_3$ and $F_4$ are summarized in Table~\ref{tab:comparison-fprm}.

\begin{table}[!htb]
    \centering
    \renewcommand{\arraystretch}{1.3}
    \resizebox{\linewidth}{!}{
    \begin{tabular}{c||cccc|cc||cccc|cc}
          & \multicolumn{6}{c||}{MVI-FPRM-Based} & \multicolumn{6}{c}{ESOP-Based}  \\
         \cline{2-13}
         Function & NOT & CNOT & 3-bit Toffoli & 4-bit Toffoli & Maslov Cost & TQC & NOT & CNOT & 3-bit Toffoli & 4-bit Toffoli & Maslov Cost & TQC \\
         \hline
         \hline
         $F_3$ & 2 & 2 & 3 & 0 & 19 & 192 & 3 & 2 & 1 & 5 & 75 & 630 \\
         $F_4$ & 3 & 4 & 6 & 0 & 37 & 383 & 5 & 0 & 1 & 3 & 49 & 386
    \end{tabular}}
    \caption{MVI-FPRM-based vs ESOP-based circuits for $F_3$ and $F_4$.}
    \label{tab:comparison-fprm}
\end{table}

It is important to remember that MVI functions use multi-valued variables as a representation of multiple binary variables, and all the circuits and operations are binary, not multi-valued.

\section{Products-Matching Method}
\label{products-matching}
We introduce a products-matching method to find the exact minimum MVI-FPRM form and the corresponding quantum circuit. 

The method for finding the MVI-FPRM form includes two stages. In the first stage, every multi-valued literal is transformed to a polarity that is specified by the chosen linearly independent polarity matrix. In the second stage, the terms consisting of transformed literals are used to calculate the final MVI-FPRM form. The code for the transformation of a multiple-valued literal is selected in such a way that the second stage of the transformation does not depend on the polarity chosen for the literal. This approach requires the introduction of normalized codes.

\subsection{Transformation of a Multi-Valued Input Literal}
\label{products-matching-mvi-literal}
The basic steps of the method for the transformation of a multiple-valued literal to its representation of polarity literals in the normalized code will be illustrated in Example~\ref{ex:products-matching-literal}. 

\begin{example}
    \label{ex:products-matching-literal}
    The goal is to find the normalized code of the literal $X^0$ (where $X$ is quaternary) with the polarity $P$.
    $$
    P =
    \begin{bmatrix}
        1 & 1 & 1 & 1 \\
        0 & 1 & 0 & 1\\
        0 & 0 & 1 & 1\\
        0 & 1 & 1 & 1
    \end{bmatrix}
    $$

    \begin{table}[!htbp]
        \centering
        \renewcommand{\arraystretch}{1.15}
        \begin{tabular}{c|c||c|c}
             MVI Literal & Binary Code & Polarity Literal & Normalized Code \\
             \hline
             $X^0$ & 1000 & $P^1$ & 1000 \\
             $X^1$ & 0100 & $P^2$ & 0100 \\ 
             $X^2$ & 0010 & $P^3$ & 0010 \\
             $X^3$ & 0001 & $P^4$ & 0001 \\
        \end{tabular}
        \caption{Binary and normalized codes for Example~\ref{ex:products-matching-literal}. The binary code is based on the set of truth values of the MVI literal, while the normalized coder is based on the polarity literals.}
        \label{tab:ex11-binary-and-normalized-codes}
    \end{table}
    
    The binary and normalized codes are listed in Table~\ref{tab:ex11-binary-and-normalized-codes}. The literal $X^0$ can be represented by the binary code 1000. $X^0$ can also be expressed as $P^1 \oplus P^4$, which has the normalized code 1001. The new code representing this combination of polarity literals can be derived by the bit-by-bit XOR operation of the code representation of the polarity literals. The normalized code has a `1’ in its bit representation corresponding to the index of the polarity literal $P^r$.

    \begin{table}[!htbp]
        \centering
        \renewcommand{\arraystretch}{1.15}
        \begin{tabular}{c|c|c}
             MVI Literal & XOR of Polarity Literals & Normalized Code \\
             \hline
             $X^0$ & $P^1 \oplus P^4$ & 1001 \\
             $X^1$ & $P^3 \oplus P^4$ & 0011 \\ 
             $X^2$ & $P^2 \oplus P^4$ & 0101 \\
             $X^3$ &$P^2 \oplus P^3 \oplus P^4$ & 0111 \\
        \end{tabular}
        \caption{MVI literals and their respective XOR expression with polarity literals and normalized codes, for Example~\ref{ex:products-matching-literal}.}
        \label{tab:ex11-normalized-code}
    \end{table}

    The MVI literals $X^0$, $X^1$, $X^2$, and $X^3$ are expressed in terms of the polarity literals and represented by a normalized code in Table~\ref{tab:ex11-normalized-code}. To find the normalized code of any MVI literal $X^S$, apply a bit-by-bit XOR operation on the normalized codes of all the literals $X^r$ such that $r \in S$. For instance, the normalized code for $X^{0,1}$ is 1010, found by the bit-by-bit XOR of the codes 1001 (for $X^0$) and 0011 (for $X^1$).

    \begin{table}[!htbp]
        \centering
        \renewcommand{\arraystretch}{1.15}
        \resizebox{\linewidth}{!}{
        \begin{tabular}{ll||c|c|c|c|c}
             Literals & & $P^1$ & $P^2$ & $P^3$ & $P^4$ & $P^1 \oplus P^2$ \\
             Binary Code & & 1111 & 0101 & 0011 & 0111 & 1010 \\
             \hline
             $X^{0,2}$ & 1010 & & & & & 1 \\
             $X^{1,3}$ & 0101 & & 1 & & \\
             $X^{2,3}$ & 0011 & & & 1 & \\
             $X^{1,2,3}$ & 0111 & & & & 1 & \\
             $X^{0,1,2,3}$ & 1111 & 1 & & & \\
             \hline
             \hline

            Literals & & $P^1 \oplus P^3$ & $P^1 \oplus P^4$ & $P^2 \oplus P^3$ & $P^2 \oplus P^4$ & $P^3 \oplus P^4$ \\
            Binary Code & & 1100 & 1000 & 0110 & 0010 & 0100 \\
            \hline
            $X^0$ & 1000 & & 1 & & \\
            $X^{0,1}$ & 1100 & 1 & & & \\
            $X^{1,2}$ & 0110 & & & 1 & \\
            $X^2$ & 0010 & & & & 1 & \\
            $X^1$ & 0100 & & & & & 1 \\

             \hline
             \hline
             Literals & & $P^1 \oplus P^2 \oplus P^3$ & $P^1 \oplus P^2 \oplus P^4$ & $P^1 \oplus P^3 \oplus P^4$ & $P^2 \oplus P^3 \oplus P^4$ & $P^1 \oplus P^2 \oplus P^3 \oplus P^4$ \\
             Binary Code & & 1001 & 1101 & 1011 & 0001 & 1110 \\
             \hline
             $X^3$ & 0001 & & & & 1 & \\
             $X^{0,3}$ & 1001 & 1 & & & & \\
             $X^{0,1,2}$ & 1110 & & & & & 1 \\
             $X^{0,1,3}$ & 1101 & & 1 & & & \\
             $X^{0,2,3}$ & 1011 & & & 1 & & \\
        \end{tabular}}
        \caption{Table for finding the MVI literals in terms of polarity literals using normalized codes, from Example~\ref{ex:products-matching-literal}.}
        \label{tab:ex-products-matching-literal}
    \end{table}
\end{example}

The algorithm for transforming an MVI literal into its normalized code is described as follows, and is demonstrated in Table~\ref{tab:ex-products-matching-literal}:
\begin{enumerate}
    \item Generate all possible XOR combinations of the polarity literals $P^r$ for later comparison with the original MVI literal (first row).
    \item Compare the binary representation of the MVI literal with the binary representation of the XOR combination (second row and second column). If these two binary representations are equal, assign to the multiple-valued literal its normalized code (shown by marking a `1' in the cell).
\end{enumerate}

\subsection{Transformation of a Multi-Valued Input Function}
\label{products-matching-mvi-function}
After the transformation of every original multiple-valued literal, as described in Section~\ref{products-matching-mvi-literal}, the entire set of multiple-valued terms of the function has to be changed to the MVI-FPRM form. In our method, every product term of the input function is compared to the normalized code of the indices of all spectral coefficients.

The basic steps of the algorithm for the products-matching method are explained in Example~\ref{ex:products-matching-mvi-function}.

\begin{example}
    \label{ex:products-matching-mvi-function}
    Let $G$ be a function of the ternary variable $X_1$, quaternary variable $X_2$, and ternary variable $X_3$. For the MVI-FPRM form of $G$, the variables have the polarities $P_1$, $P_2$, and $P_3$, respectively.

    \begin{table}[!htbp]
        \centering
        \renewcommand{\arraystretch}{1.3}
        \resizebox{\linewidth}{!}{%
        \begin{tabular}{c|ccccccc}
             $M$ & $M_{P_1^1P_2^1P_3^1}$ &
             $M_{P_1^1P_2^1P_3^2}$ &
             $M_{P_1^1P_2^1P_3^3}$ &
             $M_{P_1^1P_2^2P_3^1}$ &
             $M_{P_1^1P_2^2P_3^2}$ &
             $\cdots$ &
             $M_{P_1^3P_2^4P_3^3}$ \\[2.5px]
             \hline
             Code &
             100 1000 100 & 
             100 1000 010 & 
             100 1000 001 & 
             100 0100 100 & 
             100 0100 010 & 
             $\cdots$ &
             001 0001 001
        \end{tabular}
        }
        \caption{The normalized code for the complete MVI-FPRM spectrum of the function $G$ from Example~\ref{ex:products-matching-mvi-function}.}
        \label{tab:ex-spectrum-code}
    \end{table}
    
    The normalized codes corresponding to each spectral coefficient are stated in Table~\ref{tab:ex-spectrum-code}. It can be observed that the code for each spectral coefficient consists of a combination of the codes for three polarity literals, in which every polarity literal is taken from a different original multi-valued variable.
\end{example}
    
The final value of the spectral coefficient $M_{P_1^{r_1}P_2^{r_2}P_3^{r_3}}$, determined by a column from Table~\ref{tab:ex-spectrum-code}, is obtained by comparing the normalized codes of all product terms of the Boolean function with the variables $X_1, X_2, \dots, X_n$ represented with normalized codes. The value $M_{P_1^{r_1}P_2^{r_2}P_3^{r_3}}$ is the representation of the output function, if the intersection (bit-by-bit AND operation) of the codes is not empty (no part of the code for a variable is all 0's). To obtain the final value $M_{P_1^{r_1}P_2^{r_2}P_3^{r_3}}$ for all terms of the function, the XOR operation has to be performed for all values $M_{P_1^{r_1}P_2^{r_2}P_3^{r_3}}$.

The final MVI-FPRM form includes terms that are obtained by replacing the polarity literals in the indices of the spectral coefficients, which have a value that is not 0, with their binary representation.

To summarize, the procedure for obtaining MVI-FPRM of a multi-output Boolean function can be described by the following algorithm:
\begin{enumerate}
    \item Transform all the MVI literals of the function to their normalized codes according to the chosen polarities for the variables (algorithm in Section~\ref{products-matching-mvi-literal}).

    \item Calculate the MVI-FPRM spectrum for the normalized codes.
    
    \item Replace the polarity literals of non-zero spectral coefficients by their original MVI literal. The output functions for the terms are given by the values of the spectral coefficients.
\end{enumerate}

As one can observe, each spectral coefficient can be calculated in turn, which allows one to solve larger problems. In addition, every expression created by the program can be factorized and transformed to a form of MVI-GRM, which reduces the cost of the resulting quantum circuit.

\section{MVI-FPRM-Based 2-Bit Adder}
\label{adder}

MVI-FPRM forms can be used practically to create quantum adders. Assume the binary variables $x_a$, $x_b$, $x_c$, $x_d$. The variables $x_a$ and $x_b$ are paired to create the quaternary variable $X_1$, and $x_c$ and $x_d$ are paired to create $X_2$, as stated in Table~\ref{tab:adder-truth-table}. The functions $f_c$, $f_0$, and $f_1$ are represented by the Marquand charts in Fig.~\ref{fig:adder-fc}, Fig.~\ref{fig:adder-f0}, and Fig.~\ref{fig:adder-f1}, respectively.

\begin{table}[!htbp]
    \centering
    \renewcommand{\arraystretch}{1.1}
    \begin{tabular}{cc|cc|c||cc|cc|c}
         $x_a$, $x_b$ & $X_1$ & $x_c$, $x_d$ & $X_2$ & $f_cf_0f_1$ & $x_a$, $x_b$ & $X_1$ & $x_c$, $x_d$ & $X_2$ & $f_cf_0f_1$ \\
         \hline
         00 & 0 & 00 & 0 & 000 & 
         10 & 2 & 00 & 0 & 010 \\
         00 & 0 & 01 & 1 & 001 & 
         10 & 2 & 01 & 1 & 011 \\
         00 & 0 & 10 & 2 & 010 & 
         10 & 2 & 10 & 2 & 100 \\
         00 & 0 & 11 & 3 & 011 & 
         10 & 2 & 11 & 3 & 101 \\
         01 & 1 & 00 & 0 & 001 &
         11 & 3 & 00 & 0 & 011 \\
         01 & 1 & 01 & 1 & 010 &
         11 & 3 & 01 & 1 & 100 \\
         01 & 1 & 10 & 2 & 011 &
         11 & 3 & 10 & 2 & 101 \\
         01 & 1 & 11 & 3 & 100 &
         11 & 3 & 11 & 3 & 110 \\
    \end{tabular}
    \caption{Truth table for a 2-bit MVI adder.}
    \label{tab:adder-truth-table}
\end{table}
\begin{figure}[!htbp]
    \centering
    \begin{subfigure}{0.25\linewidth}
        \centering
        \includegraphics[width=\linewidth]{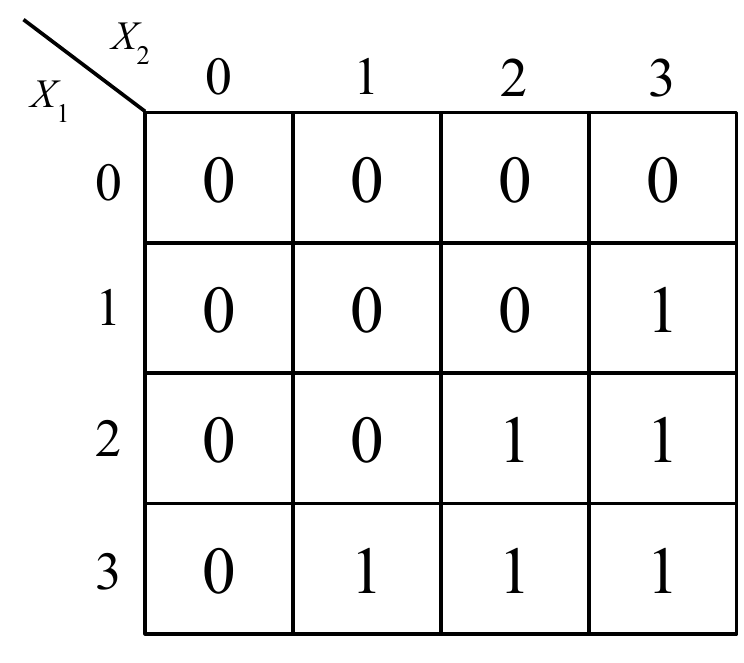}
        \caption{$f_c$.}
        \label{fig:adder-fc}
    \end{subfigure}
    \begin{subfigure}{0.25\linewidth}
        \centering
        \includegraphics[width=\linewidth]{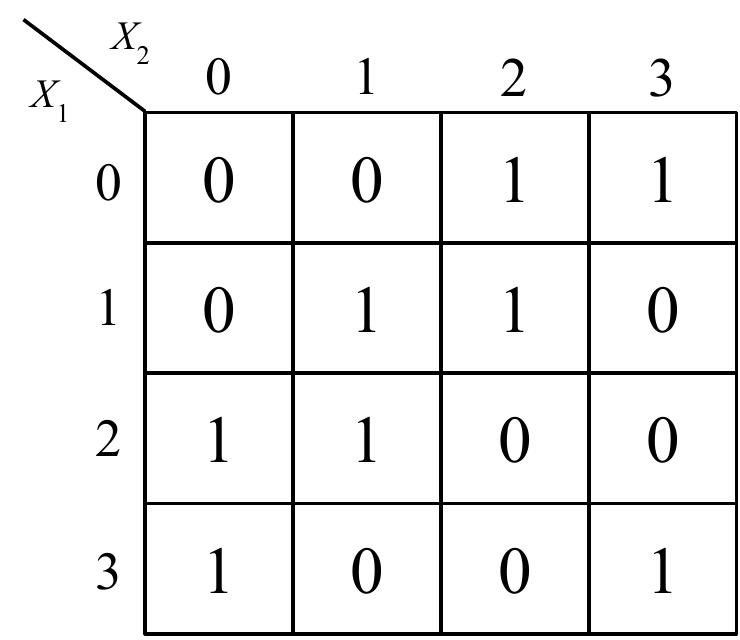}
        \caption{$f_0$.}
        \label{fig:adder-f0}
    \end{subfigure}
    \begin{subfigure}{0.25\linewidth}
        \centering
        \includegraphics[width=\linewidth]{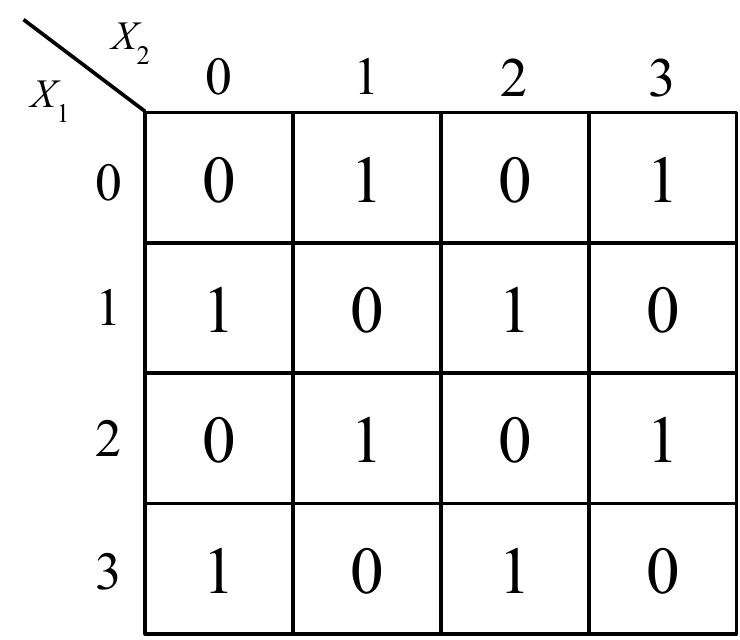}
        \caption{$f_1$.}
        \label{fig:adder-f1}
    \end{subfigure}
    \caption{Marquand charts for $f_c$, $f_0$, and $f_1$.}
    \end{figure}

There are many possible polarities for realizing the adder, and this was only one possible pairing. This is a difficult search problem, since there are $6 \cdot 480^2$ possible pairings and polarities. This section will cover two possible solutions. The intention here is to present the cost differences between various solutions.

\subsection{First Possible 2-Bit Adder}
\label{adder-1}

Let $X_1$ and $X_2$ have the polarities $P_1$ and $P_2$ respectively.
$$
P_1 = 
\begin{bmatrix}
    1 & 1 & 1 & 1 \\
    0 & 1 & 0 & 1 \\
    0 & 0 & 1 & 0 \\
    1 & 1 & 0 & 0
\end{bmatrix}
, \
P_2 = 
\begin{bmatrix}
    1 & 1 & 1 & 1 \\
    0 & 1 & 0 & 1 \\
    0 & 0 & 1 & 0 \\
    1 & 1 & 0 & 0
\end{bmatrix}
$$

\begin{table}[!htbp]
    \centering
    \renewcommand{\arraystretch}{1.1}
    \resizebox{\linewidth}{!}{%
    \begin{tabular}{cc|c|c|c||cc|c|c|c}
         $X_1$ & $X_2$ & Binary Code & Normalized Code & $f_cf_0f_1$ & $X_1$ & $X_2$ & Binary Code & Normalized Code & $f_cf_0f_1$ \\
         \hline
         0 & 0 & 1000 1000 & 1110 1110 & 000 &
         2 & 0 & 0010 1000 & 0010 1110 & 010 \\
         0 & 1 & 1000 0100 & 1110 1111 & 001 &
         2 & 1 & 0010 0100 & 0010 1111 & 011 \\
         0 & 2 & 1000 0010 & 1110 0010 & 010 &
         2 & 2 & 0010 0010 & 0010 0010 & 100 \\
         0 & 3 & 1000 0001 & 1110 1011 & 011 &
         2 & 3 & 0010 0001 & 0010 1011 & 101 \\
         1 & 0 & 0100 1000 & 1111 1110 & 001 &
         3 & 0 & 0001 1000 & 1011 1110 & 011 \\
         1 & 1 & 0100 0100 & 1111 1111 & 010 &
         3 & 1 & 0001 0100 & 1011 1111 & 100 \\
         1 & 2 & 0100 0010 & 1111 0010 & 011 &
         3 & 2 & 0001 0010 & 1011 0010 & 101 \\
         1 & 3 & 0100 0001 & 1111 1011 & 100 &
         3 & 3 & 0001 0001 & 1011 1011 & 110 \\
    \end{tabular}}
    \caption{Normalized codes for a 2-bit MVI adder with the polarities $P_1$ and $P_2$.}
    \label{tab:adder-codes}
\end{table}

The products-matching method can be applied to obtain an MVI-FPRM form, which then can be realized as a binary quantum circuit.

\textbf{Step 1:} \textit{Transform all the MVI literals of the function to their normalized codes according to the chosen polarities for the variables.}

The literals are $X_i^0=P_i^1 \oplus P_i^2 \oplus P_i^3$ (normalized code 1110), $X_i^1=P_1^1 \oplus P_i^2 \oplus P_i^3 \oplus P_i^4$ (normalized code 1111), $X_i^2=P_1^3$ (normalized code 0010), and $X_i^3=P_1^1 \oplus P_i^3 \oplus P_i^4$ (normalized code 1011). The codes for all the potential inputs are listed in Table~\ref{tab:adder-codes}.

\textbf{Step 2:} \textit{Calculate the MVI-FPRM spectrum for the normalized codes.}

The normalized code obtained in the previous step is now compared with all the indices of the spectral coefficients of the general spectrum for the adder functions.

\begin{table}[!htb]
    \centering
    \renewcommand{\arraystretch}{1.1}
    \resizebox{\linewidth}{!}{%
    \begin{tabular}{l||c|c|c|c|c|c|c|c}
        Term & $M_{P_1^1P_2^1}$ & $M_{P_1^1P_2^2}$ & $M_{P_1^1P_2^3}$ & $M_{P_1^1P_2^4}$ & $M_{P_1^2P_2^1}$ & $M_{P_1^2P_2^2}$ & $M_{P_1^2P_2^3}$ & $M_{P_1^2P_2^4}$ \\
        & 1000 1000  & 1000 0100 & 1000 0010 & 1000 0001 & 0100 1000  & 0100 0100 & 0100 0010 & 0100 0001 \\
        \hline
        1110 1110 & 000 & 000 & 000 & --- & 000 & 000 & 000 & --- \\
        1110 1111 & 001 & 001 & 001 & 001 & 001 & 001 & 001 & 001 \\
        1110 0010 & --- & --- & 010 & --- & --- & --- & 010 & --- \\
        1110 1011 & 011 & --- & 011 & 011 & 011 & --- & 011 & 011 \\
        1111 1110 & 001 & 001 & 001 & --- & 001 & 001 & 001 & --- \\
        1111 1111 & 010 & 010 & 010 & 010  & 010 & 010 & 010 & 010 \\
        1111 0010 & --- & --- & 011 & --- & --- & --- & 011 & --- \\
        1111 1011 & 100 & --- & 100 & 100 & 100 & --- & 100 & 100 \\
        0010 1110 & --- & --- & --- & --- & --- & --- & --- & --- \\
        0010 1111 & --- & --- & --- & --- & --- & --- & --- & --- \\
        0010 0010 & --- & --- & --- & --- & --- & --- & --- & --- \\
        0010 1011 & --- & --- & --- & --- & --- & --- & --- & --- \\
        1011 1110 & 011 & 011 & 011 & --- & --- & --- & --- & --- \\
        1011 1111 & 100 & 100 & 100 & 100 & --- & --- & --- & --- \\
        1011 0010 & --- & --- & 101 & --- & --- & --- & --- & --- \\
        1011 1011 & 110 & --- & 110 & 110 & --- & --- & --- & --- \\
        \hline
        Result & 100 & 101 & 000 & 110 & 101 & 010 & 100 & 100 \\
        \hline
        \hline
        Term & $M_{P_1^3P_2^1}$ & $M_{P_1^3P_2^2}$ & $M_{P_1^3P_2^3}$ & $M_{P_1^3P_2^4}$ & $M_{P_1^4P_2^1}$ & $M_{P_1^4P_2^2}$ & $M_{P_1^4P_2^3}$ & $M_{P_1^4P_2^4}$ \\
        & 0010 1000  & 0010 0100 & 0010 0010 & 0010 0001 & 0001 1000  & 0001 0100 & 0001 0010 & 0001 0001 \\
        \hline
        1110 1110 & 000 & 000 & 000 & --- & --- & --- & --- & --- \\
        1110 1111 & 001 & 001 & 001 & 001 & --- & --- & --- & --- \\
        1110 0010 & --- & --- & 010 & --- & --- & --- & --- & --- \\
        1110 1011 & 011 & --- & 011 & 011 & --- & --- & --- & --- \\
        1111 1110 & 001 & 001 & 001 & --- & 001 & 001 & 001 & --- \\
        1111 1111 & 010 & 010 & 010 & 010 & 010 & 010 & 010 & 010 \\
        1111 0010 & --- & --- & 011 & --- & --- & --- & 011 & --- \\
        1111 1011 & 100 & --- & 100 & 100 & 100 & --- & 100 & 100 \\
        0010 1110 & 010 & 010 & 010 & --- & --- & --- & --- & --- \\
        0010 1111 & 011 & 011 & 011 & 011 & --- & --- & --- & --- \\
        0010 0010 & --- & --- & 100 & --- & --- & --- & --- & --- \\
        0010 1011 & 101 & --- & 101 & 101 & --- & --- & --- & --- \\
        1011 1110 & 011 & 011 & 011 & --- & 011 & 011 & 011 & --- \\
        1011 1111 & 100 & 100 & 100 & 100 & 100 & 100 & 100 & 100 \\
        1011 0010 & --- & --- & 101 & --- & --- & --- & 101 & --- \\
        1011 1011 & 110 & --- & 110 & 110 & 110 & --- & 110 & 110 \\
        \hline
        Result & 000 & 100 & 000 & 000 & 110 & 100 & 000 & 100 \\
    \end{tabular}}
    \caption{Spectrum of the adder from Table~\ref{tab:adder-codes}.}
    \label{tab:adder-spectrum}
\end{table}

The calculation of the MVI spectrum for Table~\ref{tab:adder-codes} is illustrated in Table~\ref{tab:adder-spectrum}. The table has the output functions of the terms as entries (in the order $f_cf_0f_1$) with the intersection of the term in each cell. 

For instance, the cell for the intersection of row 1110 1011 and column 0100 1000 is `011’. This is because the bit-by-bit intersection (AND operation) of the indices is 0100 1000, so both literals are not empty, thus the value of the cell is equal to the outputs of $f_c$, $f_0$, and $f_1$ for the normalized code 1110 1011, which is stated in the fourth row of  Table~\ref{tab:adder-codes} to be 011. 

The intersection of row 1110 0010 and column 0100 1000 is 0100 0000, so `---’ is placed in the corresponding cell because 0000 appears (the literal is empty). The final coefficients for the functions in Table~\ref{tab:adder-spectrum} are obtained by the bit-by-bit XOR operation on the entries of the columns.

From Table~\ref{tab:adder-spectrum}, it can be found that the functions $f_c$, $f_0$, and $f_1$ are
\small{
\begin{align*}
    f_c &= P_1^1P_2^1 \oplus P_1^1P_2^2 \oplus P_1^1P_2^4 \oplus P_1^2P_2^1 \oplus P_1^2P_2^3 \oplus P_1^2P_2^4 \oplus P_1^3P_2^2 \oplus P_1^4P_2^1 \oplus P_1^4P_2^2 \oplus P_1^4P_2^4 \\
    f_0 &= P_1^1P_2^4 \oplus P_1^2P_2^2 P_1^4P_2^1 \\
    f_1 &= P_1^1P_2^2 \oplus P_1^2P_2^1.
\end{align*}
}

\textbf{Step 3:} \textit{Replace the polarity literals of non-zero spectral coefficients by their original MVI literal. The output functions for the terms are given by the values of the spectral coefficients.}
Replacing the polarity literals with the original MVI literals yields the MVI-FPRM forms of the functions.
\small{
\begin{align*}
    f_c &= 1 \oplus X_2^{1,3} \oplus X_2^{0,1} \oplus X_1^{1,3} \oplus X_1^{1,3}X_2^{2} \oplus X_1^{1,3}X_2^{0,1} \oplus X_1^{2}X_2^{1,3} \oplus X_1^{0,1} \oplus X_1^{0,1}X_2^{1,3} \oplus X_1^{0,1}X_2^{0,1} \\
    f_0 &= X_2^{0,1} \oplus X_1^{1,3}X_2^{1,3} \oplus X_1^{0,1} \\
    f_1 &= X_2^{1,3} \oplus X_1^{1,3}.
\end{align*}
}

From the products-matching method, the MVI-FPRM forms (replacing $P_1^1$ and $P_2^1$ with 1) for $f_c$, $f_0$, and $f_1$ are 
\small{
\begin{align*}
    f_c &= 1 \oplus \textcolor{magenta}{P_2^2} \oplus \textcolor{blue}{P_2^4} \oplus \textcolor{purple}{P_1^2} \oplus P_1^2P_2^3 \oplus P_1^2P_2^4 \oplus P_1^3P_2^2 \oplus \textcolor{teal}{P_1^4} \oplus P_1^4P_2^2 \oplus P_1^4P_2^4 \\
    f_0 &= \textcolor{blue}{P_2^4} \oplus P_1^2P_2^2 \oplus \textcolor{teal}{P_1^4} \\
    f_1 &= \textcolor{magenta}{P_2^2} \oplus \textcolor{purple}{P_1^2}.
\end{align*}
}
The adder is realized as a quantum circuit in Fig.~\ref{fig:adder-circuit}. It is made up of 7 NOT gates, 6 CNOT gates, and 8 3-bit Toffoli gates. The circuit has a Maslov cost of 53 and a TQC of 523. Note that all $n$-bit Toffoli gates are at most $n=3$.

\begin{figure}[!htbp]
    \centering
    \includegraphics[width=0.6\linewidth]{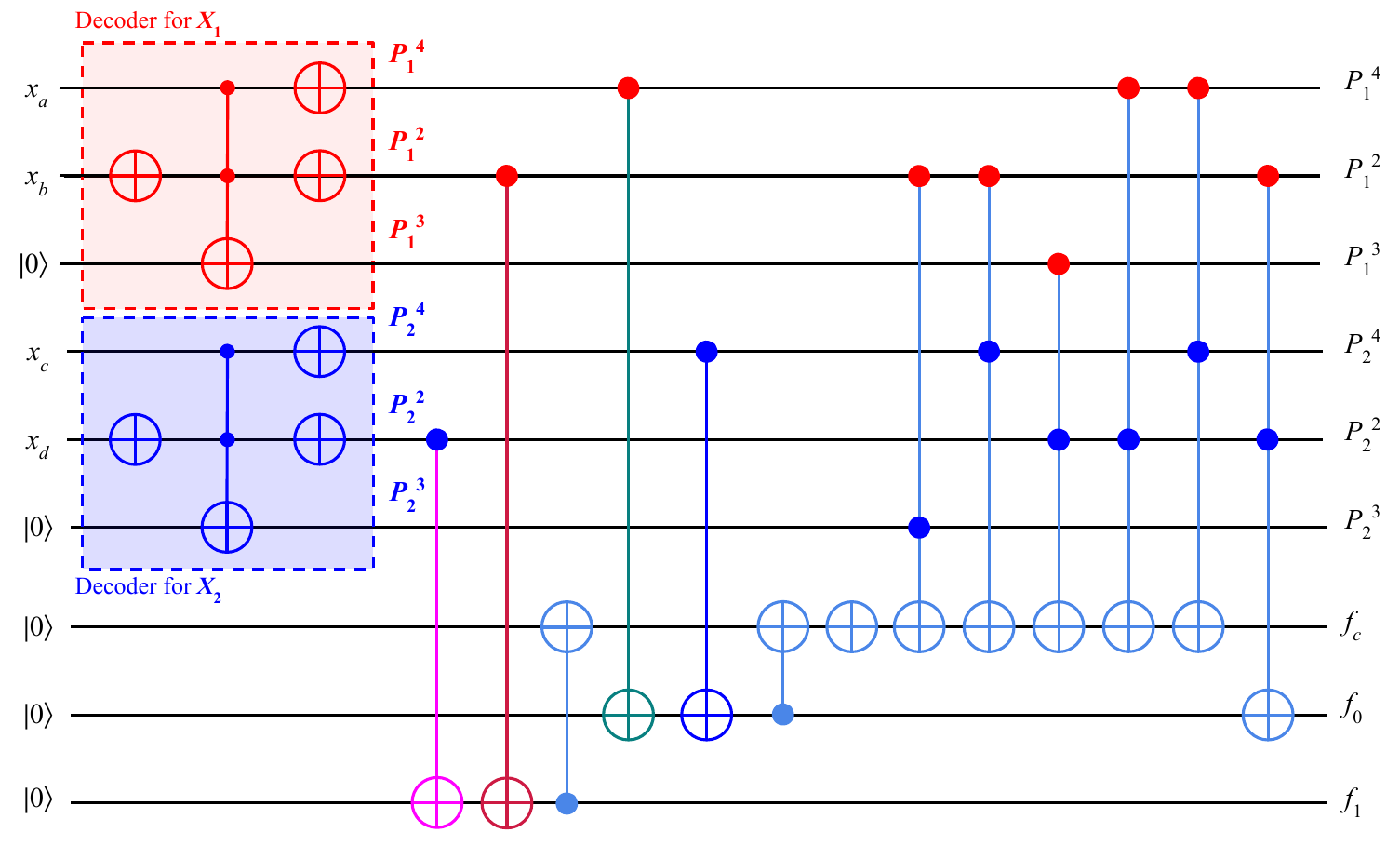}
    \caption{The 2-bit quantum adder based on the MVI-FPRM with the polarity $P_1$,$P_2$.}
    \label{fig:adder-circuit}
\end{figure}

\subsection{Second Possible 2-Bit Adder}
\label{adder-2}

For the second possible solution, let $X_1$ and $X_2$ instead have the polarities $Q_1$ and $Q_2$ respectively.
$$
Q_1 = 
\begin{bmatrix}
    1 & 1 & 1 & 1 \\
    0 & 1 & 1 & 0 \\
    0 & 0 & 1 & 0 \\
    1 & 1 & 0 & 0
\end{bmatrix}
, \
Q_2 = 
\begin{bmatrix}
    1 & 1 & 1 & 1 \\
    0 & 1 & 1 & 0 \\
    0 & 0 & 1 & 0 \\
    1 & 1 & 0 & 0
\end{bmatrix}
$$

\begin{table}[!htbp]
    \centering
    \renewcommand{\arraystretch}{1.1}
    \resizebox{\linewidth}{!}{%
    \begin{tabular}{cc|c|c|c||cc|c|c|c}
         $X_1$ & $X_2$ & Binary Code & Normalized Code & $f_cf_0f_1$ & $X_1$ & $X_2$ & Binary Code & Normalized Code & $f_cf_0f_1$ \\
         \hline
         0 & 0 & 1000 1000 & 0111 0111 & 000 &
         2 & 0 & 0010 1000 & 0010 0111 & 010 \\
         0 & 1 & 1000 0100 & 0111 0110 & 001 &
         2 & 1 & 0010 0100 & 0010 0110 & 011 \\
         0 & 2 & 1000 0010 & 0111 0010 & 010 &
         2 & 2 & 0010 0010 & 0010 0010 & 100 \\
         0 & 3 & 1000 0001 & 0111 1011 & 011 &
         2 & 3 & 0010 0001 & 0010 1011 & 101 \\
         1 & 0 & 0100 1000 & 0110 0111 & 001 &
         3 & 0 & 0001 1000 & 1011 0111 & 011 \\
         1 & 1 & 0100 0100 & 0110 0110 & 010 &
         3 & 1 & 0001 0100 & 1011 0110 & 100 \\
         1 & 2 & 0100 0010 & 0110 0010 & 011 &
         3 & 2 & 0001 0010 & 1011 0010 & 101 \\
         1 & 3 & 0100 0001 & 0110 1011 & 100 &
         3 & 3 & 0001 0001 & 1011 1011 & 110 \\
    \end{tabular}}
    \caption{Normalized codes for a 2-bit MVI adder with the polarities $Q_1$ and $Q_2$.}
    \label{tab:adder-codes2}
\end{table}

\textbf{Step 1:} \textit{Transform all the MVI literals of the function to their normalized codes according to the chosen polarities for the variables.}

The literals are $X_i^0 = Q_i^2 \oplus Q_i^3 \oplus Q_i^4$ (normalized code 0111), $X_i^1=Q_1^2 \oplus Q_i^3$ (normalized code 0110), $X_i^2=Q_i^3$ (normalized code 0010), and $X_i^3=Q_i^1 \oplus Q_i^3 \oplus Q_i^4$ (normalized code 1011). The codes for all the potential inputs are listed in Table~\ref{tab:adder-codes2}.

\textbf{Step 2:} \textit{Calculate the MVI-FPRM spectrum for the normalized codes.}

\begin{table}[!htb]
    \centering
    \renewcommand{\arraystretch}{1.1}
    \resizebox{\linewidth}{!}{%
    \begin{tabular}{l||c|c|c|c|c|c|c|c}
        Term & $M_{Q_1^1Q_2^1}$ & $M_{Q_1^1Q_2^2}$ & $M_{Q_1^1Q_2^3}$ & $M_{Q_1^1Q_2^4}$ & $M_{Q_1^2Q_2^1}$ & $M_{Q_1^2Q_2^2}$ & $M_{Q_1^2Q_2^3}$ & $M_{Q_1^2Q_2^4}$ \\
        & 1000 1000  & 1000 0100 & 1000 0010 & 1000 0001 & 0100 1000  & 0100 0100 & 0100 0010 & 0100 0001 \\
        \hline
        0111 0111 & --- & --- & --- & --- & --- & 000 & 000 & 000 \\
        0111 0110 & --- & --- & --- & --- & --- & 001 & 001 & --- \\
        0111 0010 & --- & --- & --- & --- & --- & --- & 010 & --- \\
        0111 1011 & --- & --- & --- & --- & 011 & --- & 011 & 011 \\
        
        0110 0111 & --- & --- & --- & --- & --- & 001 & 001 & 001 \\
        0110 0110 & --- & --- & --- & --- & --- & 010 & 010 & --- \\
        0110 0010 & --- & --- & --- & --- & --- & --- & 011 & --- \\
        0110 1011 & --- & --- & --- & --- & 100 & --- & 100 & 100 \\
        
        0010 0111 & --- & --- & --- & --- & --- & --- & --- & --- \\
        0010 0110 & --- & --- & --- & --- & --- & --- & --- & --- \\
        0010 0010 & --- & --- & --- & --- & --- & --- & --- & --- \\
        0010 1011 & --- & --- & --- & --- & --- & --- & --- & --- \\
        
        1011 0111 & --- & 011 & 011 & 011 & --- & --- & --- & --- \\
        1011 0110 & --- & 100 & 100 & --- & --- & --- & --- & --- \\
        1011 0010 & --- & --- & 101 & --- & --- & --- & --- & --- \\
        1011 1011 & 110 & --- & 110 & 110 & --- & --- & --- & --- \\
        \hline
        Result & 110 & 111 & 100 & 101 & 111 & 010 & 100 & 110 \\
        \hline
        \hline
        Term & $M_{Q_1^3Q_2^1}$ & $M_{Q_1^3Q_2^2}$ & $M_{Q_1^3Q_2^3}$ & $M_{Q_1^3Q_2^4}$ & $M_{Q_1^4Q_2^1}$ & $M_{Q_1^4Q_2^2}$ & $M_{Q_1^4Q_2^3}$ & $M_{Q_1^4Q_2^4}$ \\
        & 0010 1000  & 0010 0100 & 0010 0010 & 0010 0001 & 0001 1000  & 0001 0100 & 0001 0010 & 0001 0001 \\
        \hline
        0111 0111 & --- & 000 & 000 & 000 & --- & 000 & 000 &  \\
        0111 0110 & --- & 001 & 001 & --- & --- & 001 & 001 & --- \\
        0111 0010 & --- & --- & 010 & --- & --- & --- & 010 & --- \\
        0111 1011 & 011 & --- & 011 & 011 & 011 & --- & 011 & 011 \\
        
        0110 0111 & --- & 001 & 001 & 001 & --- & --- & --- & --- \\
        0110 0110 & --- & 010 & 010 & --- & --- & --- & --- & --- \\
        0110 0010 & --- & --- & 011 & --- & --- & --- & --- & --- \\
        0110 1011 & 100 & --- & 100 & 100 & --- & --- & --- & --- \\
        
        0010 0111 & --- & 010 & 010 & 010 & --- & --- & --- & --- \\
        0010 0110 & --- & 011 & 011 & --- & --- & --- & --- & --- \\
        0010 0010 & --- & --- & 100 & --- & --- & --- & --- & --- \\
        0010 1011 & 101 & --- & 101 & 101 & --- & --- & --- & --- \\
        
        1011 0111 & --- & 011 & 011 & 011 & --- & 011 & 011 & 011 \\
        1011 0110 & --- & 100 & 100 & --- & --- & 100 & 100 & --- \\
        1011 0010 & --- & --- & 101 & --- & --- & --- & 101 & --- \\
        1011 1011 & 110 & --- & 110 & 110 & 110 & --- & 110 & 110 \\
        \hline
        Result & 100 & 100 & 100 & 100 & 101 & 110 & 100 & 110 \\
    \end{tabular}}
    \caption{Spectrum of the adder from Table~\ref{tab:adder-codes2}.}
    \label{tab:adder-spectrum2}
\end{table}

The calculation of the MVI spectrum for Table~\ref{tab:adder-codes2} is illustrated in Table~\ref{tab:adder-spectrum2}. From Table~\ref{tab:adder-spectrum2}, it can be found that the functions $f_c$, $f_0$, and $f_1$ are
\small{
\begin{align*}
    f_c &= Q_1^1Q_2^1 \oplus Q_1^1Q_2^2 \oplus Q_1^1Q_2^3 \oplus Q_1^1Q_2^4 \oplus Q_1^2Q_2^1 \oplus Q_1^2Q_2^3 \oplus Q_1^2Q_2^4 \oplus Q_1^3Q_2^1 \oplus Q_1^3Q_2^2 \oplus Q_1^3Q_2^4 \oplus Q_1^4Q_2^1 \oplus Q_1^4Q_2^2 \\&\oplus Q_1^4Q_2^3 \oplus Q_1^4Q_2^4\\
    f_0 &= Q_1^1Q_2^1 \oplus Q_1^1Q_2^2 \oplus Q_1^2Q_2^1 \oplus Q_1^2Q_2^2 \oplus Q_1^2Q_2^4 \oplus Q_1^4Q_2^2 \oplus Q_1^4Q_2^4 \\
    f_1 &= Q_1^1Q_2^2 \oplus Q_1^1Q_2^4 \oplus Q_1^2Q_2^1 \oplus Q_1^4Q_2^1.
\end{align*}
}

\textbf{Step 3:} \textit{Replace the polarity literals of non-zero spectral coefficients by their original MVI literal. The output functions for the terms are given by the values of the spectral coefficients.}
Replacing the polarity literals with the original MVI literals yields the MVI-FPRM forms of the functions.
\small{
\begin{align*}
    f_c &= 1 \oplus X_2^{1,2} \oplus X_2^{2} \oplus X_2^{0,1} \oplus X_1^{1,2} \oplus X_1^{1,2}X_2^{2} \oplus X_1^{1,2}X_2^{0,1} \oplus X_1^{2} \oplus X_1^{2}X_2^{1,2} \oplus X_1^{2}X_2^{0,1} \oplus X_1^{0,1} \oplus X_1^{0,1}X_2^{1,2} \\&\oplus X_1^{0,1}X_2^{2} \oplus X_1^{0,1}X_2^{0,1}\\
    f_0 &= 1 \oplus X_2^{1,2} \oplus X_1^{1,2} \oplus X_1^{1,2}X_2^{1,2} \oplus X_1^{1,2}X_2^{0,1} \oplus X_1^{0,1}X_2^{1,2} \oplus X_1^{0,1}X_2^{0,1} \\
    f_1 &= X_2^{1,2} \oplus X_2^{0,1} \oplus X_1^{1,2} \oplus X_1^{0,1}.
\end{align*}
}

From the products-matching method, the MVI-FPRM forms (replacing $P_1^1$ and $P_2^1$ with 1) for $f_c$, $f_0$, and $f_1$ are 
\small{
\begin{align*}
    f_c &= \textcolor{purple}{1} \oplus \textcolor{blue}{Q_2^2} \oplus Q_2^3 \oplus \textcolor{teal}{Q_2^4} \oplus \textcolor{blue}{Q_1^2} \oplus Q_1^2Q_2^3 \oplus \textcolor{purple}{Q_1^2Q_2^4} \oplus Q_1^3 \oplus Q_1^3Q_2^2 \oplus Q_1^3Q_2^4 \oplus \textcolor{teal}{Q_1^4} \oplus \textcolor{purple}{Q_1^4Q_2^2} \oplus Q_1^4Q_2^3 \oplus \textcolor{purple}{Q_1^4Q_2^4} \\
    f_0 &= \textcolor{purple}{1} \oplus \textcolor{blue}{Q_2^2} \oplus \textcolor{blue}{Q_1^2} \oplus Q_1^2Q_2^2 \oplus \textcolor{purple}{Q_1^2Q_2^4} \oplus \textcolor{purple}{Q_1^4Q_2^2} \oplus \textcolor{purple}{Q_1^4Q_2^4} \\
    f_1 &= \textcolor{blue}{Q_2^2} \oplus \textcolor{teal}{Q_2^4} \oplus \textcolor{blue}{Q_1^2} \oplus \textcolor{teal}{Q_1^4}.
\end{align*}
}
The adder is realized as a quantum circuit in Fig.~\ref{fig:adder-circuit2}, which consists of 5 NOT gates, 12 CNOT gates, and 10 3-bit Toffoli gates. The circuit has a Maslov cost of 67 and a TQC of 713.

\begin{figure}[!htbp]
    \centering
    \includegraphics[width=0.6\linewidth]{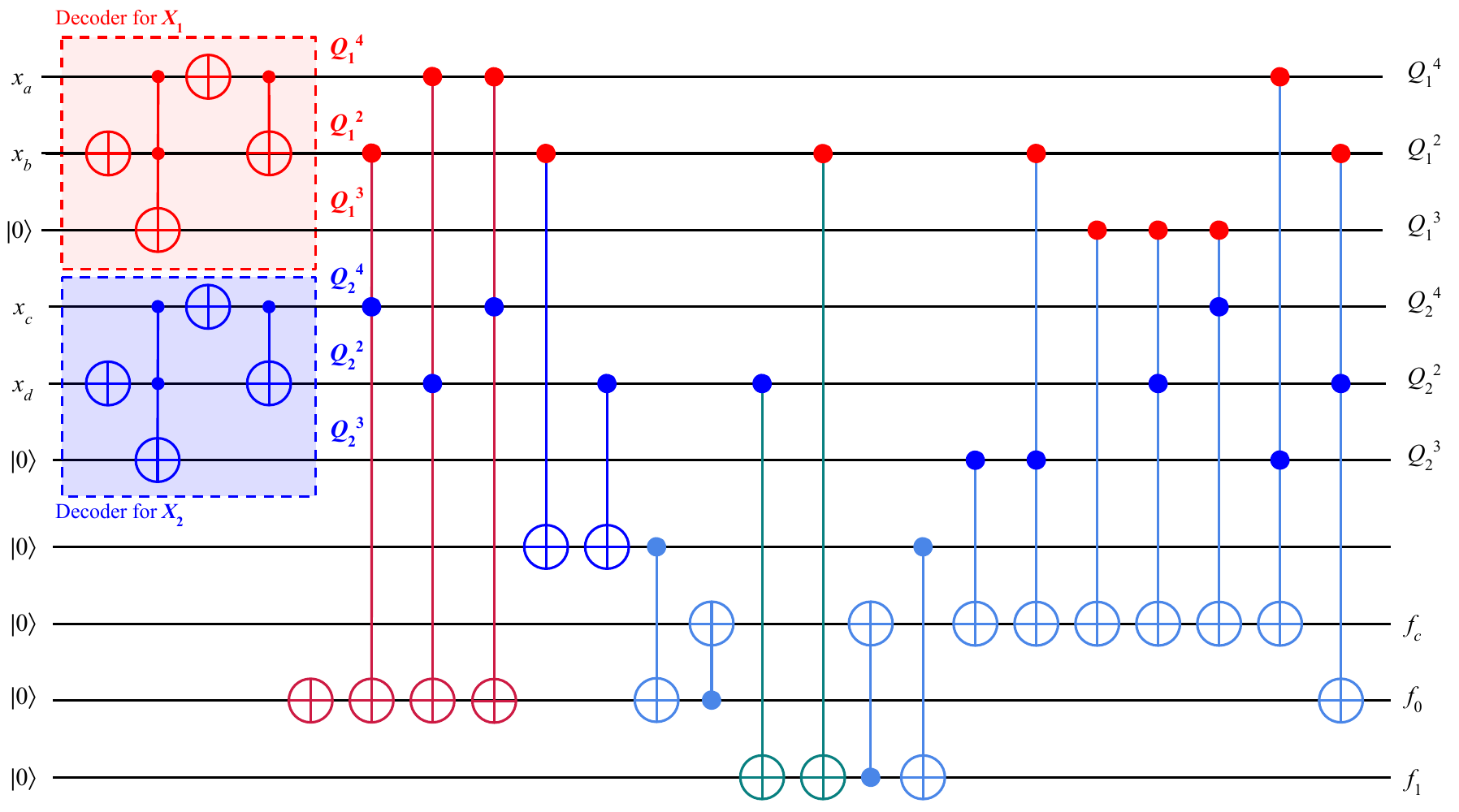}
    \caption{The 2-bit quantum adder based on the MVI-FPRM with the polarities $Q_1$ and $Q_2$.}
    \label{fig:adder-circuit2}
\end{figure}

\section{Butterfly Diagram Method}
\label{fprm-butterfly}

This paper also introduces a design method based on butterfly diagrams \cite{weinstein69, shanks69, jin20, lee16} for MVI-FPRM forms that transform \textit{minterms}, terms that contain single-valued literals of every variable, to a polarity. This is a method analogous to the binary FPRM butterfly in Section~\ref{butterfly}.

\begin{figure}[!htbp]
    \centering
    \begin{subfigure}{0.27\linewidth}
        \includegraphics[width=\linewidth]{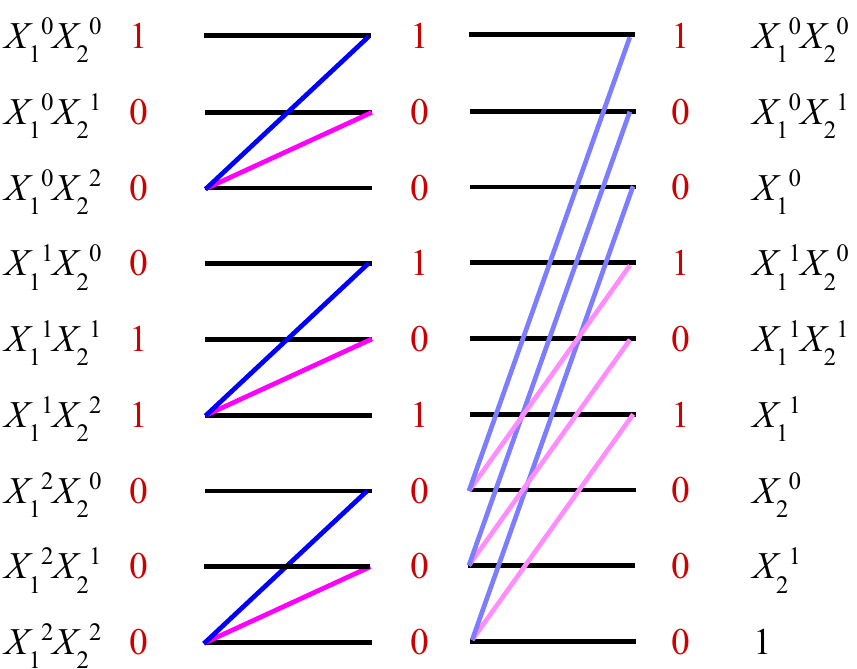}
        \caption{Our polarity-based ternary butterfly diagram.}
        \label{fig:butterfly}
    \end{subfigure}
    \hspace{.05\linewidth}
    \begin{subfigure}{0.2\linewidth}
        \includegraphics[width=\linewidth]{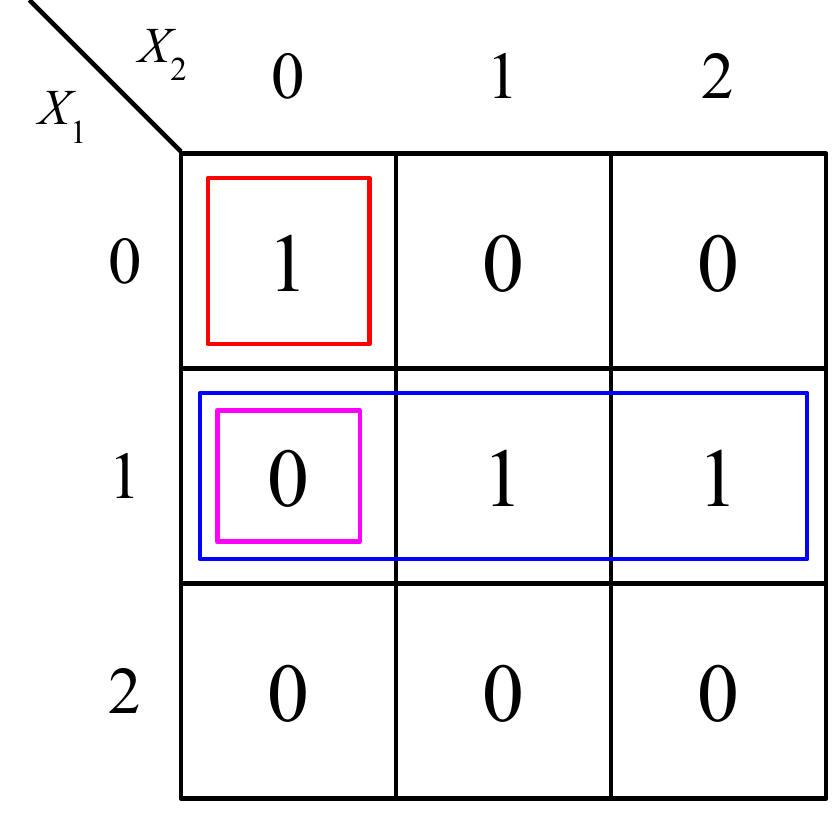}
        \caption{Marquand chart of MVI-FPRM.} 
        \label{fig:butterfly-marquand-chart}
    \end{subfigure}
    \caption{Example of butterfly diagram for $X_1^0X_2^0 \oplus X_1^1X_2^1 \oplus X_1^1X_2^2=X_1^0X_2^0 \oplus X_1^1X_2^0 \oplus X_1^1$ with the corresponding Marquand chart.}
\end{figure}

One example of a butterfly diagram is shown in Fig.~\ref{fig:butterfly}, which transforms the minterms of two ternary variables $X_1$ and $X_2$ to the polarity $P_1$,$P_2$ with literals $X_i^0$, $X_i^1$, $X_i^{0,1,2}=1$. The values shown in the butterfly diagram transform $X_1^0X_2^0 \oplus X_1^1X_2^1 \oplus X_1^1X_2^2$ to $X_1^0X_2^0 \oplus X_1^1X_2^0 \oplus X_1^1$. The values of this function are illustrated in the Marquand chart in Fig.~\ref{fig:butterfly-marquand-chart}. In this case, both variables are of the same polarity, but diagrams can be created for variables with various polarities. For a function of $n$ ternary variables, there $28^n$ possible butterfly diagrams, out of which the best solutions are selected. Note that another solution for this function is $X_1^0 \oplus X_1^{0,1}X_2^{1,2}$, which is an MVI-GRM.

\begin{example}
    \label{ex:butterfly-part}
    Let the variable $X$ be ternary. The single-valued literals for $X$ are $X^0$, $X^1$, and $X^2$. The polarity matrix for this example is $P$. The goal is to create a butterfly diagram that transforms an XOR of the minterms (for one variable, the minterms are essentially just the single-valued literals) to the MVI-FPRM form for a function of a single ternary variable. This diagram can then be generalized to MVI functions with multiple variables.
    $$
    P=
    \begin{bmatrix}
        1 & 0 & 0 \\
        1 & 1 & 0 \\
        1 & 0 & 1
    \end{bmatrix}
    $$

    Any MVI function of one variable can be expressed in the form
    $$f=a_0X^0 \oplus a_1X^1 \oplus a_2X^2.$$
    Note that this form is equivalent to the ternary input Shannon expansion in Eq.~\eqref{eq:ternary-input-shannon} from Section~\ref{ternary-expansions}, where the cofactors $f_{X=k}=a_k$. The coefficients $a_0$, $a_1$, and $a_2$ are the inputs to the butterfly diagram.

    The single-valued literals are expressed in terms of the polarity literals as follows: 
    \begin{align*}
        X^0 &= P^1 \\
        X^1 &= P^1 \oplus P^2 \\
        X^2 &= P^1 \oplus P^3.
    \end{align*}
    The single-valued literals can be substituted as polarity literals to yield a new expression for $f$.
    \begin{align*}
        f 
        &= a_0X^0 \oplus a_1X^1 \oplus a_2X^2 \\
        &= a_0(P^1) \oplus a_1(P^1 \oplus P^2) \oplus a_2(P^1 \oplus P^3) \\
        &= a_0P^1 \oplus a_1P^1 \oplus a_1P^2 \oplus a_2P^1 \oplus a_2P^3 \\
        &= a_0P^1 \oplus a_1P^1 \oplus a_2P^1 \oplus a_1P^2 \oplus a_2P^3 \\
        &= (a_0 \oplus a_1 \oplus a_2)P^1 \oplus a_1P^2 \oplus a_2P^3 \\
        &= M_{P^1}P^1 \oplus M_{P^2}P^2 \oplus M_{P^1}P^3
    \end{align*}
    This expansion, which is equivalent to the ternary input Davio-like expansion for the polarity $P$, which was found in Eq.~\eqref{eq:ternary-davio-example2}, finds that spectral coefficients, which are the outputs of the butterfly diagram, are
    \begin{align*}
        M_{P^1} &= a_0 \oplus a_1 \oplus a_2 \\
        M_{P^2} &= a_1\\
        M_{P^3} &= a_2.
    \end{align*}

    The butterfly diagram for $P$ is shown in Fig.~\ref{fig:ex-butterfly-part}.

    \begin{figure}[!htb]
        \centering
        \includegraphics[width=0.4\linewidth]{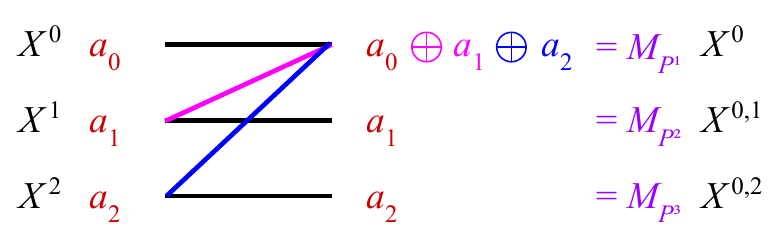}
        \caption{Butterfly diagram for $P$ with a single ternary variable, $X$, for Example~\ref{ex:butterfly-part}.}
        \label{fig:ex-butterfly-part}
    \end{figure}
\end{example}

The same process from Example~\ref{ex:butterfly-part} can be applied for all polarities and radices. The butterfly diagram that transforms an XOR of minterms to the MVI-FPRM form for a single ternary variable for each polarity is listed in Table~\ref{tab:ternary-input-butterflies}.

\begin{table}[htb]
    \centering
    \resizebox{\linewidth}{!}{
    \begin{tabular}{c|c||c|c||c|c||c|c}
         Polarity & Butterfly Diagram & Polarity & Butterfly Diagram & Polarity & Butterfly Diagram & Polarity & Butterfly Diagram \\ 
         \hline
         &&&&&&&\\[-8px]
         
         $\begin{bmatrix}
             1 & 0 & 0 \\
             0 & 1 & 0 \\
             0 & 0 & 1
         \end{bmatrix}$ 
         & 
         \begin{minipage}{0.2\linewidth}
             \centering \includegraphics[width=0.4\linewidth]{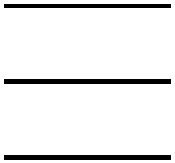}
        \end{minipage} 
        &
        $\begin{bmatrix}
             1 & 0 & 0 \\
             1 & 1 & 0 \\
             0 & 0 & 1
         \end{bmatrix}$ 
         & 
         \begin{minipage}{0.2\linewidth}
             \centering \includegraphics[width=0.4\linewidth]{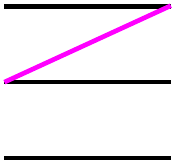}
        \end{minipage}
        &
        $\begin{bmatrix}
             1 & 0 & 0 \\
             0 & 1 & 0 \\
             0 & 1 & 1
         \end{bmatrix}$ 
         & 
         \begin{minipage}{0.2\linewidth}
             \centering \includegraphics[width=0.4\linewidth]{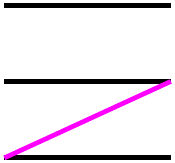}
        \end{minipage}
        &
        $\begin{bmatrix}
             1 & 0 & 0 \\
             1 & 1 & 0 \\
             1 & 1 & 1
         \end{bmatrix}$ 
         & 
         \begin{minipage}{0.2\linewidth}
             \centering \includegraphics[width=0.4\linewidth]{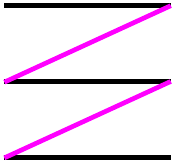}
        \end{minipage}
        \\[20px]

        $\begin{bmatrix}
             1 & 0 & 0 \\
             0 & 1 & 0 \\
             1 & 0 & 1
         \end{bmatrix}$ 
         & 
         \begin{minipage}{0.2\linewidth}
             \centering \includegraphics[width=0.4\linewidth]{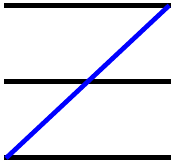}
        \end{minipage}
        &
        $\begin{bmatrix}
             1 & 0 & 0 \\
             1 & 1 & 0 \\
             1 & 0 & 1
         \end{bmatrix}$ 
         & 
         \begin{minipage}{0.2\linewidth}
             \centering \includegraphics[width=0.4\linewidth]{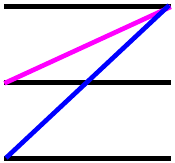}
        \end{minipage}
        &
        $\begin{bmatrix}
             1 & 0 & 0 \\
             0 & 1 & 0 \\
             1 & 1 & 1
         \end{bmatrix}$ 
         & 
         \begin{minipage}{0.2\linewidth}
             \centering \includegraphics[width=0.4\linewidth]{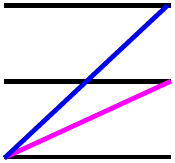}
        \end{minipage}
        &
        $\begin{bmatrix}
             1 & 0 & 0 \\
             1 & 1 & 0 \\
             0 & 1 & 1
         \end{bmatrix}$ 
         & 
         \begin{minipage}{0.2\linewidth}
             \centering \includegraphics[width=0.4\linewidth]{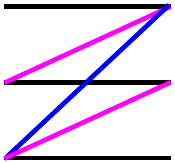}
        \end{minipage}
        \\[20px]

        $\begin{bmatrix}
             1 & 1 & 0 \\
             0 & 1 & 0 \\
             0 & 0 & 1
         \end{bmatrix}$ 
         & 
         \begin{minipage}{0.2\linewidth}
             \centering \includegraphics[width=0.4\linewidth]{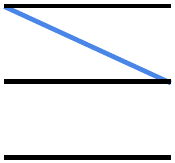}
        \end{minipage}
        &
        $\begin{bmatrix}
             1 & 0 & 0 \\
             0 & 1 & 1 \\
             0 & 0 & 1
         \end{bmatrix}$ 
         & 
         \begin{minipage}{0.2\linewidth}
             \centering \includegraphics[width=0.4\linewidth]{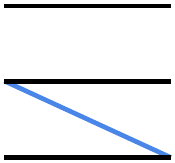}
        \end{minipage}
        &
        $\begin{bmatrix}
             1 & 1 & 1 \\
             0 & 1 & 1 \\
             0 & 0 & 1
         \end{bmatrix}$ 
         & 
         \begin{minipage}{0.2\linewidth}
             \centering \includegraphics[width=0.4\linewidth]{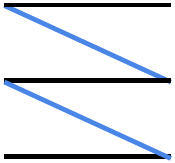}
        \end{minipage}
        &
        $\begin{bmatrix}
             1 & 0 & 0 \\
             1 & 1 & 1 \\
             0 & 0 & 1
         \end{bmatrix}$ 
         & 
         \begin{minipage}{0.2\linewidth}
             \centering \includegraphics[width=0.4\linewidth]{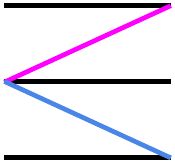}
        \end{minipage}
        \\[20px]

        $\begin{bmatrix}
             1 & 1 & 0 \\
             0 & 1 & 0 \\
             0 & 1 & 1
         \end{bmatrix}$ 
         & 
         \begin{minipage}{0.2\linewidth}
             \centering \includegraphics[width=0.4\linewidth]{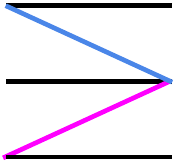}
        \end{minipage}
        &
        $\begin{bmatrix}
             0 & 1 & 1 \\
             1 & 1 & 1 \\
             1 & 1 & 0
         \end{bmatrix}$ 
         & 
         \begin{minipage}{0.2\linewidth}
             \centering \includegraphics[width=0.4\linewidth]{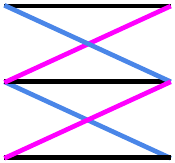}
        \end{minipage}
        &
        $\begin{bmatrix}
             1 & 1 & 0 \\
             0 & 1 & 0 \\
             1 & 1 & 1
         \end{bmatrix}$ 
         & 
         \begin{minipage}{0.2\linewidth}
             \centering \includegraphics[width=0.4\linewidth]{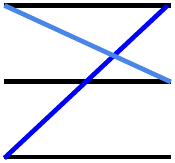}
        \end{minipage}
        &
        $\begin{bmatrix}
             1 & 0 & 0 \\
             1 & 1 & 1 \\
             1 & 0 & 1
         \end{bmatrix}$ 
         & 
         \begin{minipage}{0.2\linewidth}
             \centering \includegraphics[width=0.4\linewidth]{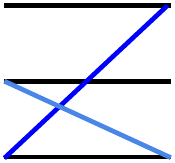}
        \end{minipage}
        \\[20px]

        $\begin{bmatrix}
             1 & 1 & 0 \\
             0 & 1 & 0 \\
             1 & 0 & 1
         \end{bmatrix}$ 
         & 
         \begin{minipage}{0.2\linewidth}
             \centering \includegraphics[width=0.4\linewidth]{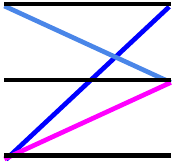}
        \end{minipage}
        &
        $\begin{bmatrix}
             1 & 0 & 0 \\
             0 & 1 & 1 \\
             1 & 0 & 1
         \end{bmatrix}$ 
         & 
         \begin{minipage}{0.2\linewidth}
             \centering \includegraphics[width=0.4\linewidth]{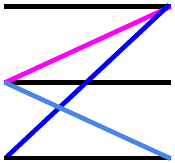}
        \end{minipage}
        &
        $\begin{bmatrix}
             1 & 0 & 1 \\
             0 & 1 & 0 \\
             0 & 0 & 1
         \end{bmatrix}$ 
         & 
         \begin{minipage}{0.2\linewidth}
             \centering \includegraphics[width=0.4\linewidth]{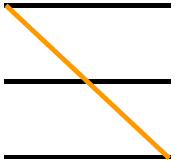}
        \end{minipage}
        &
        $\begin{bmatrix}
             1 & 0 & 1 \\
             1 & 1 & 1 \\
             0 & 0 & 1
         \end{bmatrix}$ 
         & 
         \begin{minipage}{0.2\linewidth}
             \centering \includegraphics[width=0.4\linewidth]{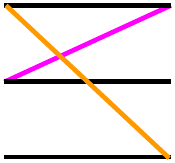}
        \end{minipage}
        \\[20px]

        $\begin{bmatrix}
             1 & 1 & 1 \\
             0 & 1 & 0 \\
             0 & 1 & 1
         \end{bmatrix}$ 
         & 
         \begin{minipage}{0.2\linewidth}
             \centering \includegraphics[width=0.4\linewidth]{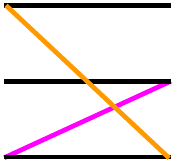}
        \end{minipage}
        &
        $\begin{bmatrix}
             1 & 1 & 1 \\
             0 & 1 & 0 \\
             0 & 0 & 1
         \end{bmatrix}$ 
         & 
         \begin{minipage}{0.2\linewidth}
             \centering \includegraphics[width=0.4\linewidth]{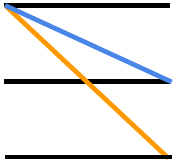}
        \end{minipage}
        &
        $\begin{bmatrix}
             1 & 0 & 1 \\
             0 & 1 & 1 \\
             0 & 0 & 1
         \end{bmatrix}$ 
         & 
         \begin{minipage}{0.2\linewidth}
             \centering \includegraphics[width=0.4\linewidth]{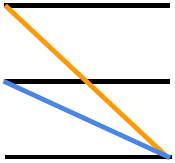}
        \end{minipage}
        &
        $\begin{bmatrix}
             1 & 1 & 0 \\
             0 & 1 & 1 \\
             0 & 0 & 1
         \end{bmatrix}$ 
         & 
         \begin{minipage}{0.2\linewidth}
             \centering \includegraphics[width=0.4\linewidth]{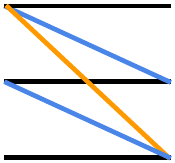}
        \end{minipage}
        \\[20px]

        $\begin{bmatrix}
             1 & 0 & 1 \\
             1 & 1 & 0 \\
             0 & 0 & 1
         \end{bmatrix}$ 
         & 
         \begin{minipage}{0.2\linewidth}
             \centering \includegraphics[width=0.4\linewidth]{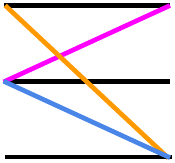}
        \end{minipage}
        &
        $\begin{bmatrix}
             1 & 0 & 1 \\
             0 & 1 & 0 \\
             0 & 1 & 1
         \end{bmatrix}$ 
         & 
         \begin{minipage}{0.2\linewidth}
             \centering \includegraphics[width=0.4\linewidth]{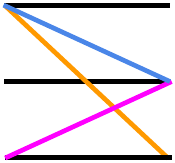}
        \end{minipage}
        &
        $\begin{bmatrix}
             0 & 1 & 1 \\
             1 & 0 & 1 \\
             1 & 1 & 1
         \end{bmatrix}$ 
         & 
         \begin{minipage}{0.2\linewidth}
             \centering \includegraphics[width=0.4\linewidth]{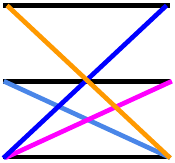}
        \end{minipage}
        &
        $\begin{bmatrix}
             1 & 1 & 0 \\
             1 & 0 & 1 \\
             1 & 1 & 1
         \end{bmatrix}$ 
         & 
         \begin{minipage}{0.2\linewidth}
             \centering \includegraphics[width=0.4\linewidth]{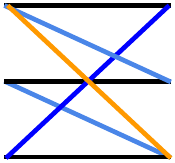}
        \end{minipage}
        \\[16px]
    \end{tabular}}
    \caption{All single-variable butterfly diagrams for the transformations to ternary input polarities.}
    \label{tab:ternary-input-butterflies}
\end{table}

To create the MVI-FPRM butterfly diagram for the ternary input function $F(X_1, X_2,\dots, X_n)$ with polarity $P_1, P_2,\dots, P_n$:
\begin{itemize}
    \item The inputs represent the coefficients of the minterms, listed in natural order.
    \item The butterfly diagram is formed from $n$ columns for each variable in descending order ($X_n,X_{n-1},\dots,X_1$). 
    \begin{itemize}
        \item The $k$th column transforms the single-valued literals of the variable $X_{n-k+1}$ to the polarity $P_{n-k+1}$. 
        \item The $k$th column will be constructed with $3^k$ kernels, based on the polarity.
    \end{itemize}
    \item The outputs represent the spectral coefficients of the polarity terms, listed in natural order.
\end{itemize}
This method can be extended to any radix. This is demonstrated in Example~\ref{ex:butterfly}.

\begin{example}
\label{ex:butterfly}
    Take the function  $F_3=X_4^1X_5^{0,2} \oplus X_3^{1,2}X_4^{0,1}X_5^0 \oplus X_4^{0,2}X_5^{1,2} \oplus X_3^2X_4^1X_5^1$ from Example~\ref{ex:fprm three variables}. Recall that the polarities used in Example~\ref{ex:fprm three variables} were $P_3$, $P_4$, and $P_5$.
    $$
    P_3=
    \begin{bmatrix}
        1 & 1 & 1 \\
        1 & 0 & 1 \\
        0 & 1 & 1 \\
    \end{bmatrix}
    , \
    P_4=
    \begin{bmatrix}
        1 & 1 & 1 \\
        1 & 1 & 0 \\
        0 & 1 & 0
    \end{bmatrix}
    , \
    P_5=
    \begin{bmatrix}
        1 & 1 & 1 \\
        1 & 1 & 0 \\
        0 & 1 & 1
    \end{bmatrix}
    $$

    To fit the matrices to those of Table~\ref{tab:ternary-input-butterflies}, the polarities will be redefined as $Q_3$, $Q_4$, $Q_5$, where the rows are reordered.
    $$
    Q_3=
    \begin{bmatrix}
        0 & 1 & 1 \\
        1 & 0 & 1 \\
        1 & 1 & 1 \\
    \end{bmatrix}
    , \
    Q_4=
    \begin{bmatrix}
        1 & 1 & 0 \\
        0 & 1 & 0 \\
        1 & 1 & 1
    \end{bmatrix}
    , \
    Q_5=
    \begin{bmatrix}
        0 & 1 & 1 \\
        1 & 1 & 1 \\
        1 & 1 & 0
    \end{bmatrix}
    $$

    The function $F_3$ can be expanded to yield an equivalent expression with only minterms. The same expression can also be found through the Marquand Chart (Fig.~\ref{fig:ex10-f-chart} from Example~\ref{ex:fprm three variables}).
    \small{
    \begin{align*}
        F_3 
        &= X_4^1X_5^{0,2} \oplus
        X_3^{1,2}X_4^{0,1}X_5^0 \oplus 
        X_4^{0,2}X_5^{1,2} \oplus 
        X_3^2X_4^1X_5^1 \\
        &= (X_3^0 \oplus X_3^1 \oplus X_3^2)(X_4^1)(X_5^0 \oplus X_5^2) \oplus 
        (X_3^1 \oplus X_3^2)(X_4^0 \oplus X_4^1)(X_5^0) \\ &\oplus
        (X_3^0 \oplus X_3^1 \oplus X_3^2)(X_4^0 \oplus X_4^2)(X_5^1 \oplus X_5^2) \oplus 
        (X_3^2)(X_4^1)(X_5^1) \\
        &= X_3^0X_4^0X_5^1 \oplus
        X_3^0X_4^0X_5^2 \oplus
        X_3^0X_4^1X_5^0 \oplus
        X_3^0X_4^1X_5^2 \oplus
        X_3^0X_4^2X_5^1 \oplus
        X_3^0X_4^2X_5^2 \oplus
        X_3^1X_4^0X_5^0 \oplus
        X_3^1X_4^0X_5^1 \oplus
        X_3^1X_4^0X_5^2 \\ &\oplus
        X_3^1X_4^1X_5^2 \oplus
        X_3^2X_4^0X_5^0 \oplus
        X_3^2X_4^1X_5^1 \oplus
        X_3^2X_4^2X_5^1 \oplus
        X_3^2X_4^2X_5^2
    \end{align*}}

    \begin{table}[htb]
        \centering
        \renewcommand{\arraystretch}{1.2}
        \resizebox{\linewidth}{!}{
        \begin{tabular}{l|ccc ccc ccc}
             Minterm & $X_3^0X_4^0X_5^0$ & $X_3^0X_4^0X_5^1$ & $X_3^0X_4^0X_5^2$ & $X_3^0X_4^1X_5^0$ & $X_3^0X_4^1X_5^1$ & $X_3^0X_4^1X_5^2$ & $X_3^0X_4^2X_5^0$ & $X_3^0X_4^2X_5^1$ & $X_3^0X_4^2X_5^2$ \\
             \hline
             Input & 0 & 1 & 1 & 1 & 0 & 1 & 0 & 1 & 1  \\
             \hline
             \hline
             Minterm & $X_3^1X_4^0X_5^0$ & $X_3^1X_4^0X_5^1$ & $X_3^1X_4^0X_5^2$ & $X_3^1X_4^1X_5^0$ & $X_3^1X_4^1X_5^1$ & $X_3^1X_4^1X_5^2$ & $X_3^1X_4^2X_5^0$ & $X_3^1X_4^2X_5^1$ & $X_3^1X_4^2X_5^2$ \\
             \hline
             Input & 1 & 1 & 1 & 0 & 0 & 1 & 0 & 0 & 0  \\
             \hline
             \hline
             Minterm & $X_3^2X_4^0X_5^0$ & $X_3^2X_4^0X_5^1$ & $X_3^2X_4^0X_5^2$ & $X_3^2X_4^1X_5^0$ & $X_3^2X_4^1X_5^1$ & $X_3^2X_4^1X_5^2$ & $X_3^2X_4^2X_5^0$ & $X_3^2X_4^2X_5^1$ & $X_3^2X_4^2X_5^2$ \\
             \hline
             Input & 1 & 0 & 0 & 0 & 1 & 0 & 0 & 1 & 1
        \end{tabular}}
        \caption{The inputs to the butterfly diagram, in Example~\ref{ex:butterfly}, for the function $F_3$, which are based on the minterms.}
        \label{tab:ex-butterfly-minterms}
    \end{table}

    \textit{The inputs represent the coefficients of the minterms, listed in natural order.} The minterms form the input to the butterfly, which is listed in Table~\ref{tab:ex-butterfly-minterms}, where a value of 1 is inputted if the corresponding minterm appears in the XOR of the minterms expression above, and a value of 0 if not. For instance, the number corresponding to the minterm $X_3^1X_4^0X_5^1$ is 1 because the minterm is in the expression, but the number corresponding to $X_3^1X_4^2X_5^1$ is 0 because the minterm is not in the expression.

    \textit{The butterfly diagram is formed from $n$ columns for each variable in descending order.} The first column corresponds to the polarity $Q_5$ for the variable $X_5$. The first column repeats the single-variable butterfly diagram for $Q_5$ for every three ($3^1$) rows (repeated nine times).

    \begin{figure}[!htb]
        \centering
        \includegraphics[width=0.25\linewidth]{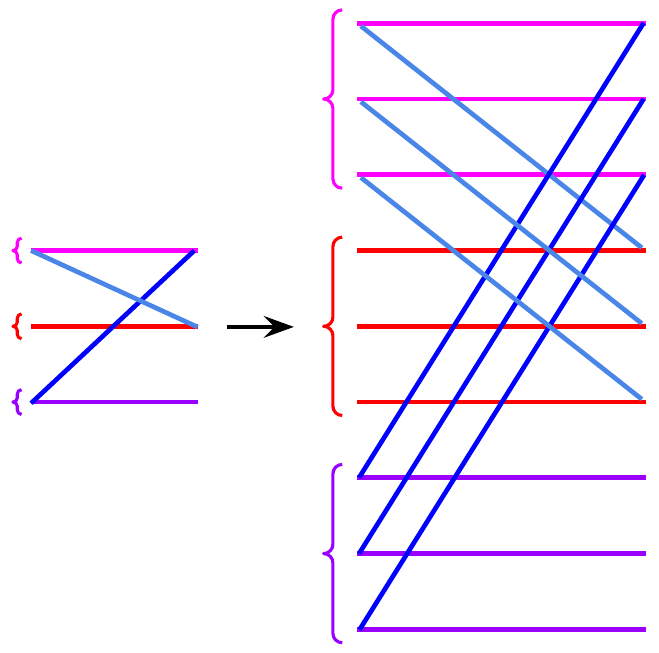}
        \caption{Process to create part of the column for $Q_4$ in Example~\ref{ex:butterfly}.}
        \label{fig:ex-butterfly-q4-process}
    \end{figure}

    The second column corresponds to the polarity $Q_4$ for the variable $X_4$, and repeats a butterfly diagram kernel for every nine ($3^2$) rows. This butterfly diagram kernel is derived from the single-variable butterfly diagram for the polarity $Q_4$. This part is altered to fit nine rows in the method demonstrated in Fig.~\ref{fig:ex-butterfly-q4-process}. This kernel is repeated three times.

    \begin{figure}[!htb]
        \centering
        \includegraphics[width=0.8\linewidth]{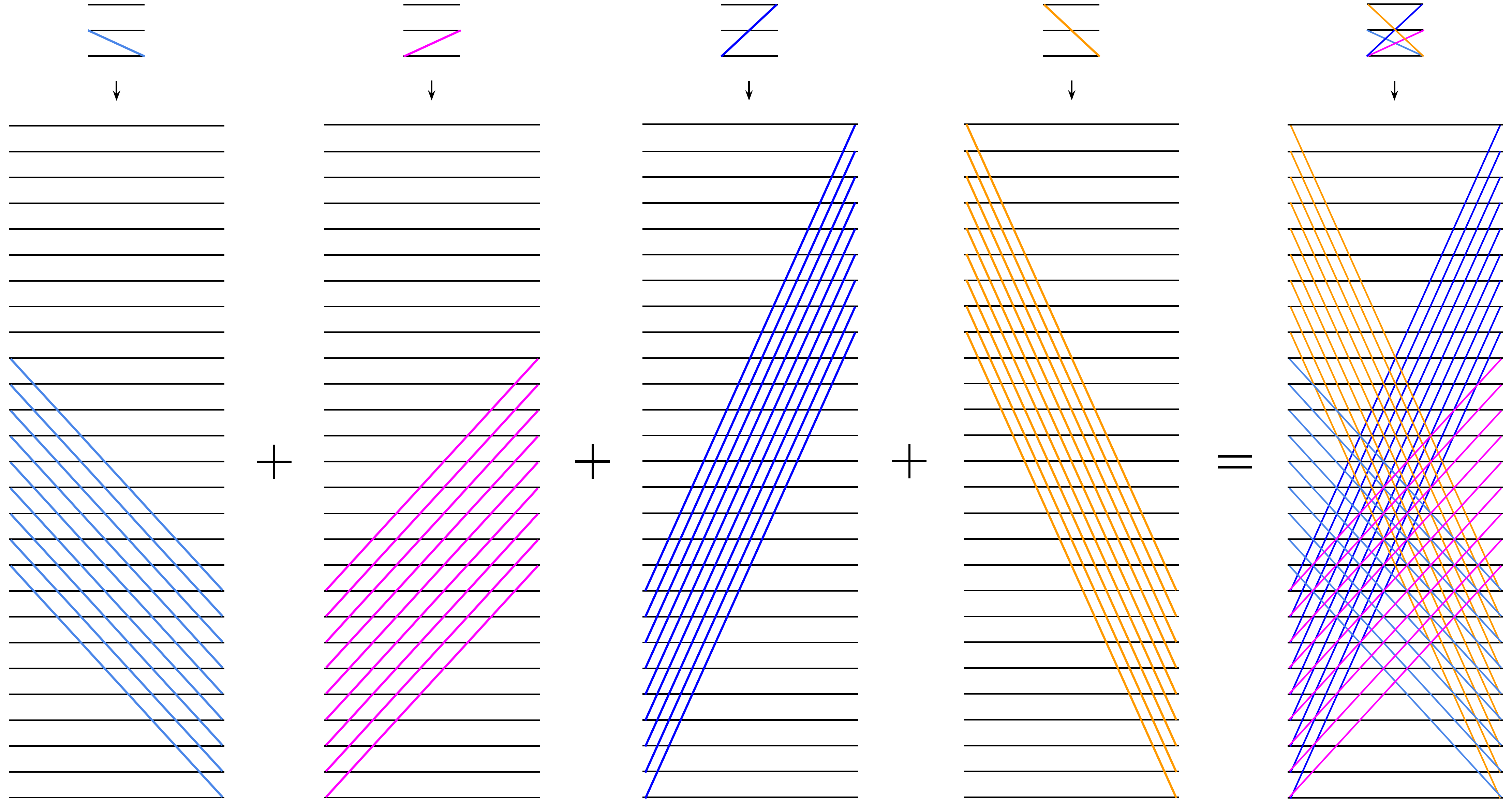}
        \caption{Process to create column for $Q_3$ in Example~\ref{ex:butterfly}.}
        \label{fig:ex-butterfly-q3}
    \end{figure}

    The third column corresponds to the polarity $Q_3$ for the variable $X_3$. The butterfly diagram for this column is altered from the single-variable diagram for $Q_3$, as illustrated in Fig.~\ref{fig:ex-butterfly-q3}.

    \begin{figure}[!htb]
    \centering
        \begin{subfigure}{0.34\linewidth}
            \centering
            \includegraphics[width=\linewidth]{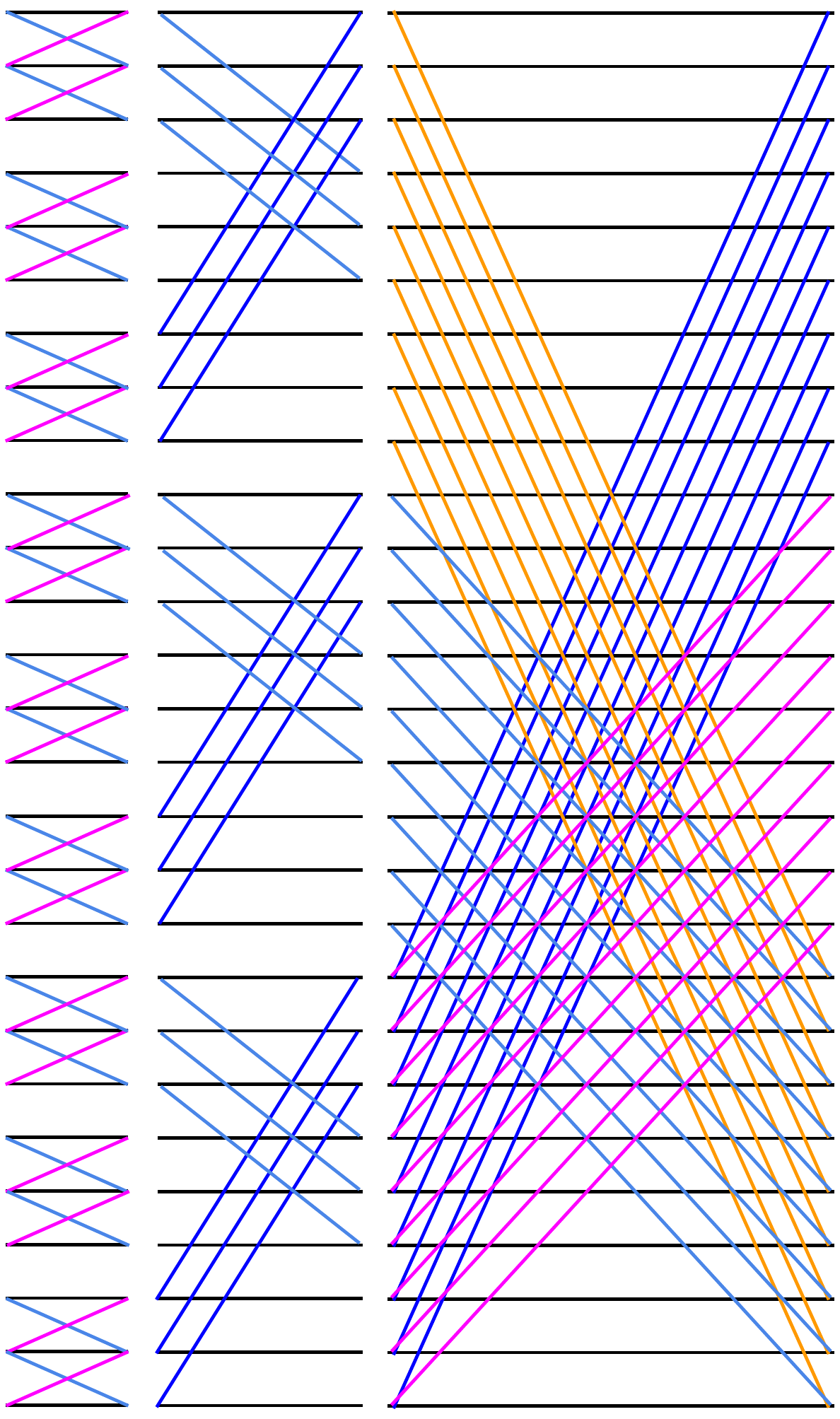}
            \caption{Diagram.}
            \label{fig:ex-full-butterfly-diagram}
        \end{subfigure}
        \hspace{0.05\linewidth}
        \begin{subfigure}{0.56\linewidth}
            \centering
            \includegraphics[width=\linewidth]{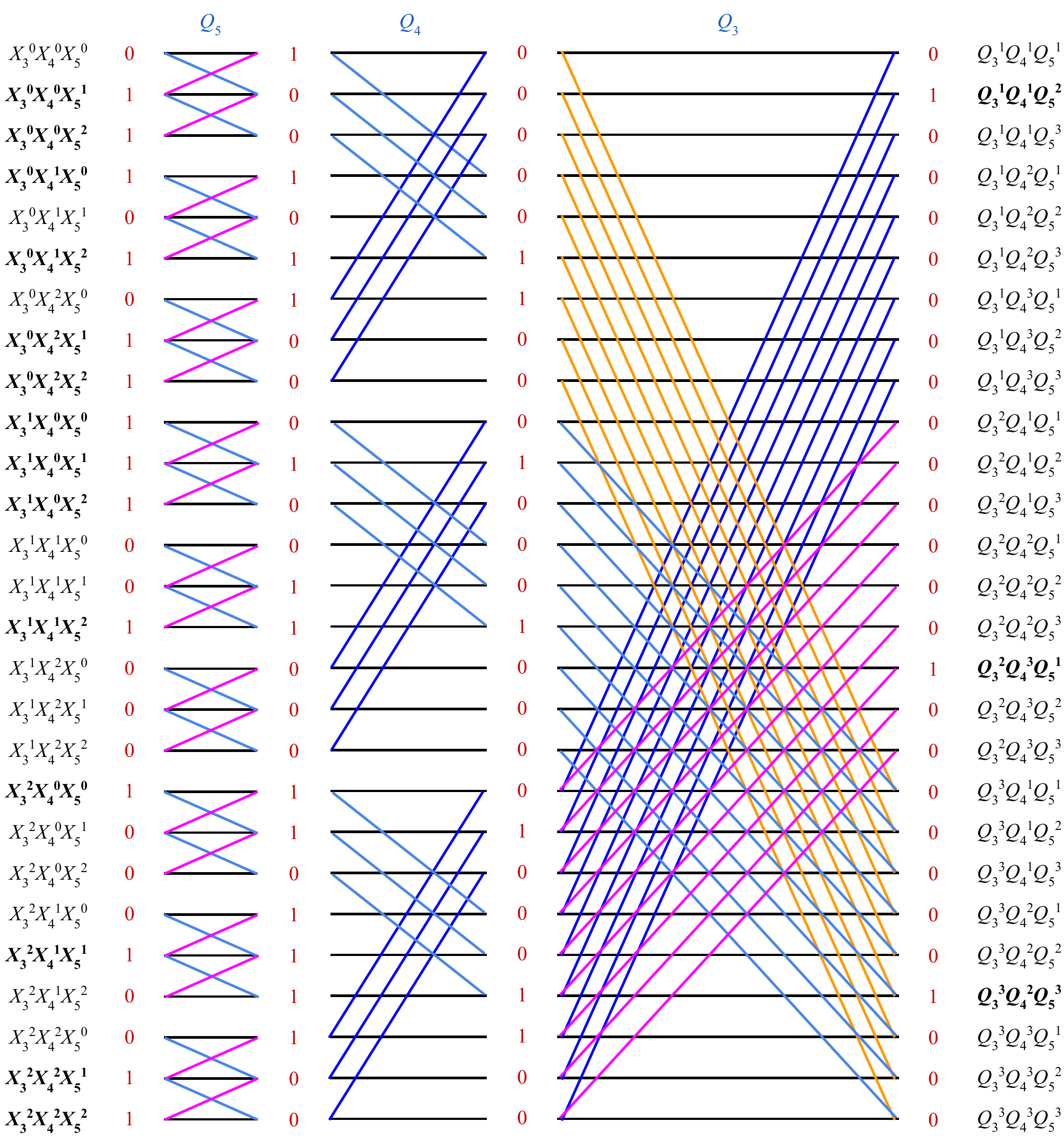}
            \caption{Inputs and outputs for $F_3$.}
            \label{fig:ex-butterfly}
        \end{subfigure}
        
        \caption{Full butterfly diagram for the polarity $Q_3$,$Q_4$,$Q_5$, from Example~\ref{ex:butterfly}.}
    \end{figure}

    Combining all three columns yields the butterfly diagram shown in Fig.~\ref{fig:ex-full-butterfly-diagram} for transforming any function of three ternary variables to the polarity $Q_3$,$Q_4$,$Q_5$.

    \textit{The outputs represent the spectral coefficients of the polarity terms, listed in natural order.} The input for $F_3$ in the butterfly diagram leads to the MVI-FPRM 
    \begin{align*}
        F_3
        &= Q_3^1Q_4^1Q_5^2 \oplus Q_3^2Q_4^3Q_5^1 \oplus Q_3^3Q_4^2Q_5^3 \\
        &= X_3^{1,2}X_4^{0,1} \oplus X_3^{0,2}X_5^{1,2} \oplus X_4^{1}X_5^{0,1}
    \end{align*}
    as illustrated in Fig.~\ref{fig:ex-butterfly}. This is the same MVI-FPRM as the one found in Example~\ref{ex:fprm three variables}.
    
\end{example}

\section{Circuit Synthesis Based on MVI-GRM}
\label{grm-circuit}

The method for MVI-GRM selects one of the top results from the MVI-FPRM method and utilizes factorization and transformation rules to determine the quasi-minimal MVI-GRM. In the multi-output MVI-FPRM, the same literals or even products are repeated for various outputs; therefore, they can be shared in the final circuit. However, in MVI-GRM, the method may create non-repeated literals and products that cannot be shared. This method is demonstrated in Example~\ref{ex:mvi-grm1}.

\begin{example}
\label{ex:mvi-grm1}
    The MVI-FPRM forms $F_1 = 1 \oplus X_2^2 \oplus X_1^{2,3} \oplus X_1^{2,3}X_2^2 \oplus X_1^{1,2,3} \oplus X_1^{1,2,3}X_2^2$ (from Example~\ref{ex:fprm}), $F_2 = 1 \oplus X_1^{2,3} \oplus X_1^{1,2,3} \oplus X_1^{2,3}X_2^2$ (from Example~\ref{ex:fprm two terms}) were selected in the first phase, where the variable $X_1$ is quaternary and $X_2$ is ternary. Both of these forms can be factorized to create MVI-GRM forms, which are realized as circuits in Fig.~\ref{fig:grm-circuit}. The circuit for $F_1$ has a Maslov cost of 13 and a TQC of 57, and the circuit for $F_2$ has the same Maslov and TQC cost.
    \small{
    \begin{align*}
        F_1 
        &= 1 \oplus X_2^2 \oplus X_1^{2,3} \oplus X_1^{2,3}X_2^2 \oplus X_1^{1,2,3} \oplus X_1^{1,2,3}X_2^2 \\
        &= (1 \oplus X_1^{2,3} \oplus X_1^{1,2,3})(1  \oplus X_2^2) \\
        &= X_1^{0,2,3}X_2^{0,1} \\
        F_2
        &= 1 \oplus X_1^{2,3} \oplus X_1^{1,2,3} \oplus X_1^{2,3}X_2^2 \\
        &= (1 \oplus X_1^{1,2,3}) \oplus X_1^{2,3}(1 \oplus X_2^2) \\
        &= X_1^0 \oplus X_1^{2,3}X_2^{0,1} 
    \end{align*}
    }

    \begin{figure}[!htbp]
        \centering
        \begin{subfigure}{0.25\linewidth}
            \includegraphics[width=\linewidth]{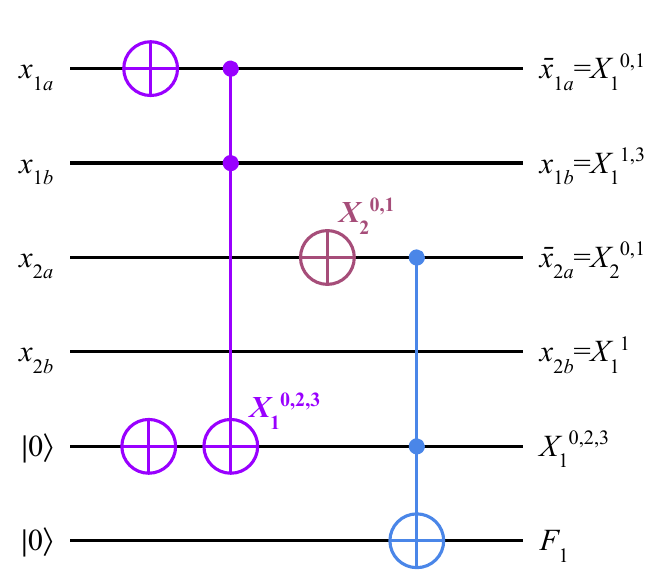}
            \caption{$F_1$.}
        \end{subfigure}
        \hspace{0.075\linewidth}
        \begin{subfigure}{0.31\linewidth}
            \includegraphics[width=\linewidth]{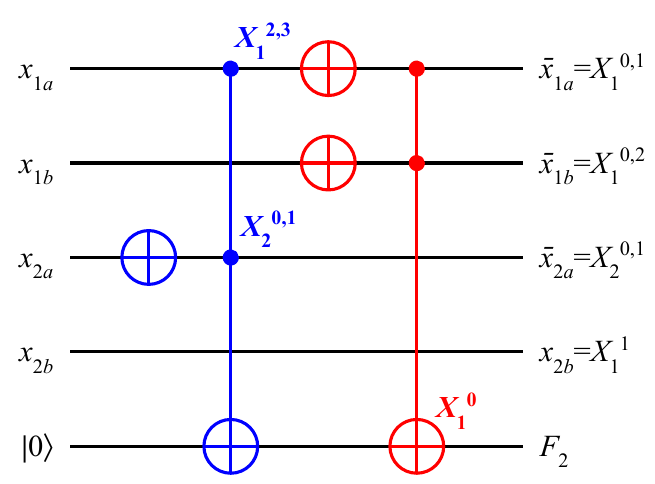}
            \caption{$F_2$.}
        \end{subfigure}
        \caption{The circuits for $F_1$ and $F_2$ based on MVI-GRM, from Example~\ref{ex:mvi-grm1}.}
        \label{fig:grm-circuit}
    \end{figure}

    If, instead, the functions $F_1=\bar{x}_{2a} \oplus \bar{x}_{1a}x_{1b}\bar{x}_{2a}$ and $F_2=\bar{x}_{1a}\bar{x}_{1b} \oplus x_{1a}\bar{x}_{2a}$ were realized as binary ESOP-based circuits, then the result would be the circuits shown in Fig.~\ref{fig:grm-circuit-esop}. The circuit for $F_1$ has a Maslov cost of 16 and a TQC of 125, and the circuit for $F_2$ is the same as the MVI-GRM-based circuit. Comparisons between the MVI-GRM-based circuits and ESOP-based circuits are summarized in Table~\ref{tab:comparison-grm}

    \begin{figure}[!htbp]
        \centering
        \begin{subfigure}{0.235\linewidth}
            \includegraphics[width=\linewidth]{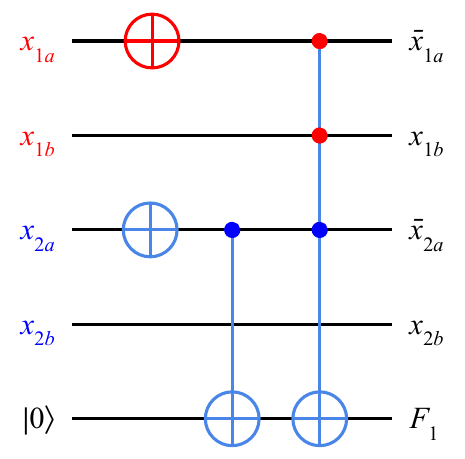}
            \caption{$F_1$.}
        \end{subfigure}
        \hspace{0.075\linewidth}
        \begin{subfigure}{0.25\linewidth}
            \includegraphics[width=\linewidth]{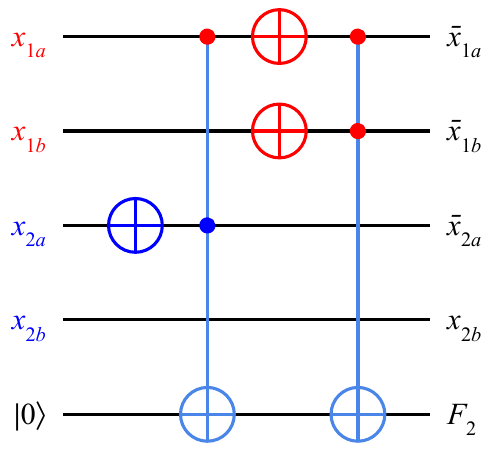}
            \caption{$F_2$.}
        \end{subfigure}
        \caption{The circuits for $F_1$ and $F_2$ based on ESOP, from Example~\ref{ex:mvi-grm1}.}
        \label{fig:grm-circuit-esop}
    \end{figure}

    \begin{table}[!htb]
        \centering
        \renewcommand{\arraystretch}{1.3}
        \resizebox{\linewidth}{!}{
        \begin{tabular}{c||cccc|cc||cccc|cc}
              & \multicolumn{6}{c||}{MVI-GRM-Based} & \multicolumn{6}{c}{ESOP-Based}  \\
             \cline{2-13}
             Function & NOT & CNOT & 3-bit Toffoli & 4-bit Toffoli & Maslov Cost & TQC & NOT & CNOT & 3-bit Toffoli & 4-bit Toffoli & Maslov Cost & TQC \\
             \hline
             \hline
             $F_1$ & 3 & 0 & 2 & 0 & 13 & 57 & 2 & 1 & 0 & 1 & 16 & 125 \\
             $F_2$ & 3 & 0 & 2 & 0 & 13 & 57 & 3 & 0 & 2 & 0 & 13 & 57
        \end{tabular}}
        \caption{GRM-based vs ESOP-based circuits for $F_1$ and $F_2$.}
        \label{tab:comparison-grm}
    \end{table}
\end{example}

\section{Conclusions}
\label{conclusions}
Two new types of binary quantum circuits are introduced in this paper, both based on multi-valued input, binary output (MVI) logic. We introduced the MVI-FPRM and MVI-GRM canonical forms. Two methods are presented for the synthesis of exact or approximate MVI-FPRM. The first method is based on products-matching, and the second method is based on our new MVI butterfly diagrams. MVI-GRM synthesis is based on the factorization of the selected-best MVI-FPRM with subsequent use of simplifying rules. In circuit synthesis based on MVI-FPRM, the same products are repeated in several outputs, so they can be reused, as achieved using polarity-based decoders for the MVI function, which aids the synthesis of multi-output functions. However, in MVI-GRM, non-repeated literals and products are created, which does not allow for sharing sub-functions. This is a typical trade-off with MVI-FPRM and MVI-GRM. Although this paper only deals with completely specified functions, it will be extended to incompletely specified functions in future papers. Our future work focuses on selecting variables for pairing and choosing the best decoders for a given Boolean function.

\bibliographystyle{ieeetr}
\bibliography{references}

\end{document}